\newtheorem{proposition}{Proposition}
\def\bA{{\bf A}}
\def\bX {{\bf X}}
\newcommand{\Nu}{{\cal V}}
\def\TZ {{\rm TZ}}
\title{More powerful post-selection inference,\\
with application to the Lasso}
\author[1]{Keli Liu\footnote{keliliu@stanford.edu}}
\author[1]{Jelena Markovic\footnote{jelenam@stanford.edu}}
\author[1,2]{Robert Tibshirani\footnote{tibs@stanford.edu}}
\affil[1]{Department of Statistics}
\affil[2]{Department of Biomedical Data Science}
\affil[ ]{Stanford University}
\date{}
\begin{document}

{
\setstretch{1}	

\maketitle

\begin{abstract} Investigators often use the data to generate interesting hypotheses and then perform inference for the generated hypotheses. P-values and confidence intervals must account for this explorative data analysis. A fruitful method for doing so is to condition any inferences on the components of the data used to generate the hypotheses, thus preventing information in those components from being used again. Some currently popular methods ``over-condition,'' leading to wide intervals. We show how to perform the minimal conditioning in a computationally tractable way. In high dimensions, even this minimal conditioning can lead to  intervals  that are too wide to be useful,  suggesting that up to now the cost of hypothesis generation has been underestimated. We show how to generate hypotheses in a strategic manner that sharply reduces the cost of data exploration and results in useful confidence intervals. Our discussion focuses on the problem of post-selection inference after fitting a lasso regression model, but  we also outline its extension to a much more general setting.
\end{abstract}
}

\section{Introduction}

In many applications we want to study how the mean of a response variable
is affected by various predictors. In idealized form, suppose we have the response vector $y=(y_1,\ldots,y_n)\in\mathbb{R}^n$ following
\[
y_{i}=\mu_{i}+\varepsilon_{i}\qquad\varepsilon_{i}\sim \mathcal{N}\left(0,\sigma^{2}\right)\qquad i=1,\ldots,n.
\]
The goal is to study the relation between the mean function $\mu$
and a set of $p$ explanatory variables $x_{1},\ldots,x_{p}$. The
standard approach to this problem approximates $\mu=(\mu_1,\ldots,\mu_n)\in\mathbb{R}^n$ using a linear
function of the explanatory variables: 
\[
\mu_{\text{lin}}\left(x\right)=x^{\top}\beta \qquad\beta=\left(\mathbf{X}^{\top}\mathbf{X}\right)^{-1}\mathbf{X}^{\top}\mu,
\]
where $x\in\mathbb{R}^p$ and $\mathbf{X}\in\mathbb{R}^{n\times p}$ is a matrix of features. An unbiased estimate of $\beta$ is given by the least squares estimator
of $y$ on the matrix $\mathbf{X}$; in addition, standard theory allows us to compute p-values and confidence intervals for $\beta$.

The standard approach was developed for the setting where the investigator, drawing on prior knowledge, has already focused in on a small set of explanatory variables. It is wholly inappropriate for tasks where the majority of variables are noise and exploratory analyses must be done to arrive at the ``variables of interest.'' In the latter case,
a popular approach for exploration is the lasso of \cite{tibshirani1996regression} which solves the $\ell_{1}$-penalized regression 
\begin{equation}
 \underset{\beta_0,\beta_1,\ldots,\beta_p}{{\rm minimize}}\;\frac{1}{2}\sum_{i=1}^n\Bigl(y_{i}-\beta_{0}-\sum_{j=1}^{p}x_{ij}\beta_{j}\Bigr)^2+\lambda\sum_{j=1}^{p}|\beta_{j}|. \label{eqn:lasso}
\end{equation}
For fixed choice of the tuning parameter $\lambda$, the lasso selects a set of interesting variables (the ``active set'') while setting $\hat{\beta}_{j}=0$ for uninteresting variables. How can we compute valid p-values and confidence intervals while accounting for exploration using the lasso? That is the problem that we address here. One might think of taking the active set, and computing the standard/``na\"ive'' p-values and confidence intervals based on least squares theory. But the standard theory assumes pre-specified hypotheses; without accounting for the fact that our hypotheses are generated from the data, it is quite common for a na\"ive 90\% confidence interval to cover less than 50\% of the time.

Performing inference after selection is notoriously difficult, e.g., \citet{leeb2005model, leeb2006} show the impossibility of estimating the conditional distribution of a test statistic used for inference, implying failure of the standard bootstrap for this purpose. On the other hand, \citet{lee2016} and \citet{TTLT2016} propose a conditional approach that leads to a truncated normal reference
distribution, that gives provably exact type I error and coverage
in finite samples. A drawback of this quite remarkable construction
is the fact that the confidence intervals are often very wide. The intervals ``over-condition,'' that is they overstate the cost of data exploration using the lasso (by quite a bit it turns out!). 

Consider data from \citet{stamey1989prostate} who study the relation between prostate specific antigen (PSA) and various clinical measures: log cancer volume (\texttt{lcavol}), log prostate weight (\texttt{lweight}), age, log of benign prostatic hyperplasia amount (\texttt{lbph}), seminal vesicle invasion (\texttt{svi}), log of capsular penetration (\texttt{lcp}), the Gleason score (\texttt{gleason}), and percent of Gleason scores 4 or 5 (\texttt{pgg45}). The data were collected from $n=97$ men
about to receive a radial prostatectomy. Running the lasso on the
prostate data (with $\lambda$ chosen by 10 fold cross-validation)
yields seven active variables (the log of capsular penetration was
excluded from the model). Figure \ref{fig:prostate} plots 90\% confidence intervals for the coefficients of the seven selected variables. The coefficients can be defined either with respect to a linear model of all the variables (``full'' coefficients) or a linear model of only the selected variables (``partial'' coefficients). We summarize  the findings from the figure as follows:

\begin{itemize}
\item The ``naive'' intervals are based on standard normal theory and ignore selection. They are too short and hence will undercover their targets.
\item  The various ``TZ'' methods account for selection:
operationally, this implies that the least squares estimates
for the selected variables have a \emph{truncated }normal distribution (hence TZ).  TZ$_{M}$ and TZ$_{Ms}$  are the approaches introduced by \citet{lee2016} and \citet{TTLT2016}. They have exact finite-sample coverage but often produce very long intervals.  Of the three variables (\texttt{lcavol}, \texttt{svi} and \texttt{lweight}) deemed highly significant by a least squares analysis, two now have 90\% confidence intervals containing 0, and the third has a very wide and right skewed interval.
\item The  truncated Z-statistic methods TZ$_{V}$ and TZ$_{\text{stab-}t}$ --- introduced in this paper --- have exact finite-sample coverage and give much shorter intervals compared to TZ$_{M}$ and TZ$_{Ms}$.
\end{itemize}

\begin{figure}[th]
	\begin{centering}
		\includegraphics[width=0.5\textwidth]{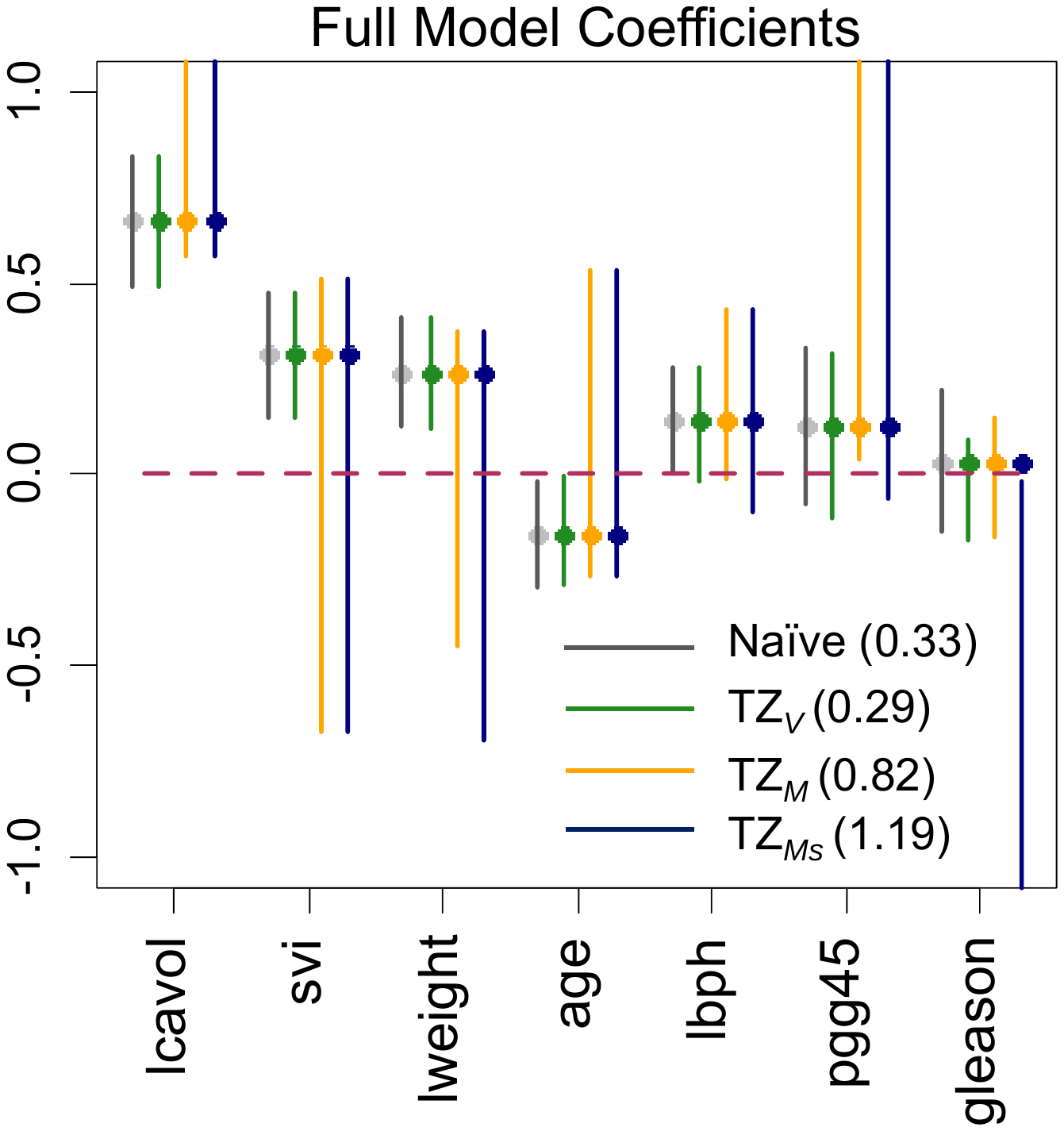}\includegraphics[width=0.5\textwidth]{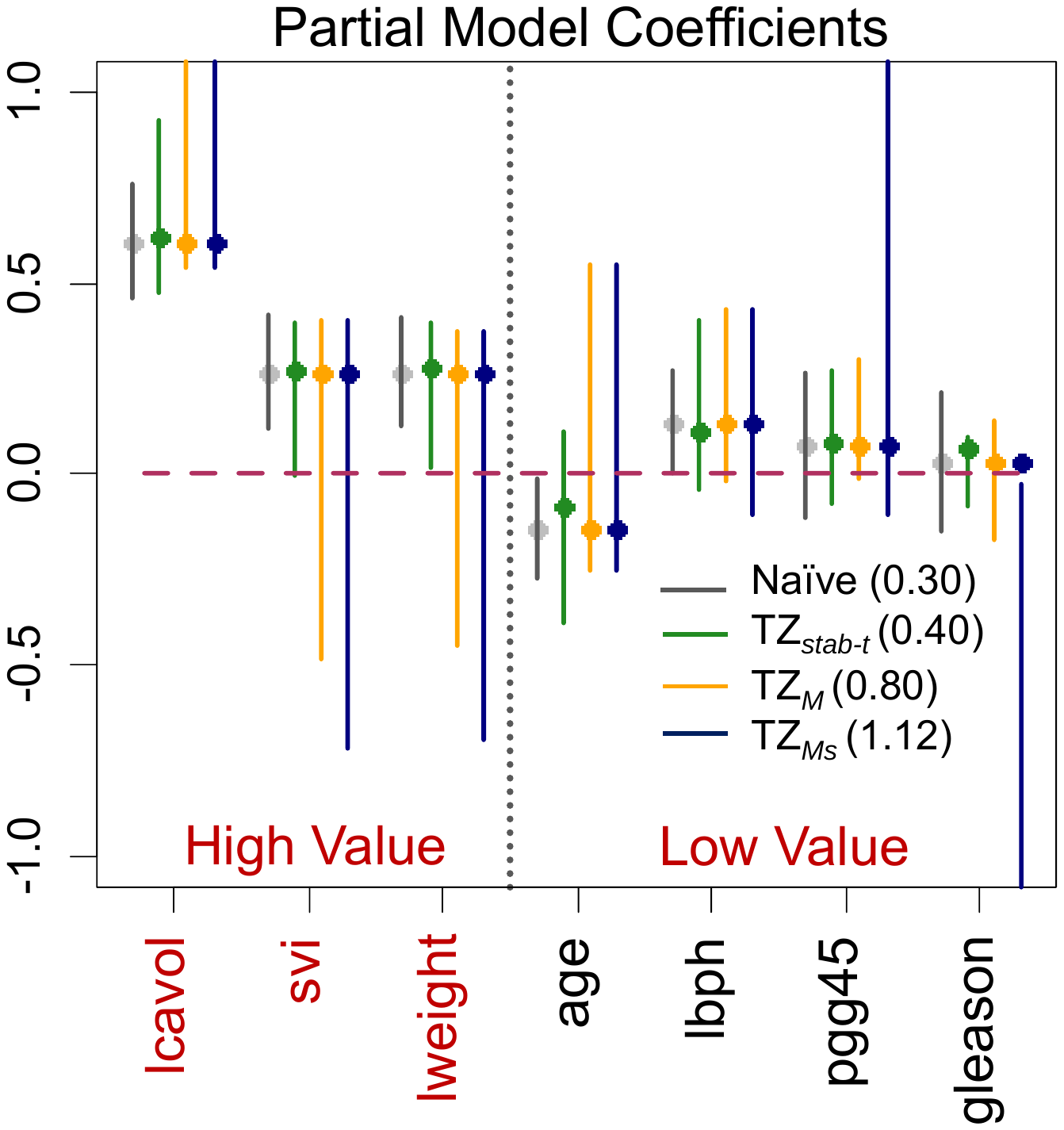}
		\par\end{centering}
	\caption{\em Regression analysis of log prostate specific antigen on $p=$8 clinical
		factors using data from $n=$97 men undergoing radial prostatectomy.
		A lasso was run with $\lambda=3.14$ (chosen by 10 fold cross-validation),
		selecting 7 variables (horizontal axis). 90\% confidence intervals
		were constructed for the regression coefficient of each selected variable:
		``coefficient'' is defined with respect to either the linear model
		of response on all 8 variables (``full'') or only the 7 selected
		variables (``partial''). Four interval construction methods were
		used, in all cases the noise standard deviation $\sigma$ was fixed
		at the value (0.70) estimated from a least squares fit on all variables.
		The naive/standard Z-intervals ignore selection while the TZ-intervals
		adjust for selection but to differing degrees. The TZ$_{\text{stab}-t}$
		method partitions the 7 active variables into high value and low value
		targets, depending on whether each variable's Z-statistc in the partial
		regression exceeds $\Phi^{-1}\left(1-\frac{0.1}{2p}\right)=2.49$. The depicted point estimates (circles) are the MLEs for the corresponding population quantities (note that the population targets for $\TZ_{\text{stab}-t}$ differ slightly from the usual ``partial'' regression coefficients; see Section \ref{subsec:reg_target}).}
	\label{fig:prostate}
\end{figure}

Where the various TZ methods differ is in the amount of conditioning.  This distinction is crucial; in a nutshell:
\bigskip

\noindent
\textsc{\textbf{More Conditioning $\shortrightarrow$ less information for inference  $\shortrightarrow$ wider confidence intervals}}

\bigskip
\noindent See \citet{fithian2014} for justification of this maxim. Heuristically, not all of the variation in
the Z-statistic can be used for inference because some of that variation has already been used for exploratory data analysis (in this case to screen out unimportant variables). As a result,  the reference distribution for the Z-statistic changes from a normal to a truncated normal. TZ$_{Ms}$ sets aside information contained in the lasso active set ($\hat{M}$) as well as information in the signs of the selected variables ($\hat{s}$);
TZ$_{M}$ sets aside only the former leading to less truncation (the
subscripts indicate what information has been excluded for inference
purposes). TZ$_{V}$ and TZ$_{\text{stab-}t}$ set aside even less
information. But what is the minimum level of truncation needed for
the amount of explorative data analysis we have done? And how should
we change our data analysis strategy if this minimal truncation
still turns out to be too severe? This paper answers these two questions as follows: 
\begin{enumerate}
	\item To arrive at the minimal truncation, we first make clear in Section \ref{subsec:two_costs} that explorative data analysis usually incurs two separate ``search costs'': the choice of variables on which
	to do inference (e.g., the active set from the lasso), and the choice of targets for inference (\emph{Which} model should we define our regression coefficients with respect to?). This has not been recognized previously, as the two sources are usually lumped together conceptually. But it has significant practical implications. 
	\item In Section \ref{sec:full}, we consider the case where the targets for inference for the selected variables are the {\em full population regression coefficients}. There is no adaptivity in the choice of target. We show how to construct intervals that perform the minimal truncation necessary to adjust for the lasso selection event; this is
	the TZ$_{V}$ method. 
	\item In Section \ref{sec:partial}, we consider the case where the targets for inference for the selected variables are the {\em population partial regression coefficients}. There is significant adaptivity in the choice of targets\textemdash the definition of ``coefficient'' changes depending on which model is chosen. TZ$_{M}$ is shown to
	give the minimal necessary truncation, but its intervals are impractically wide. In such instances, blindly using the lasso to form hypotheses is a bad idea! We provide a framework for \emph{judiciously} generating hypotheses using the lasso. TZ$_{\text{stab}-t}$ gives the truncation appropriate for this judicious approach and produces intervals of reasonable length.
\end{enumerate}
Section \ref{sec:sim_partial} contains simulation results and Section \ref{sec:realdata} illustrates the benefits of our
proposals on a larger real-world dataset ($n=1057$ and $p=210$).
While we use the example of the lasso to simplify exposition, the
tools we develop can be adapted to treat other polyhedral selection
rules, e.g., forward stepwise regression. We give some details of
the general idea in Section \ref{sec:general}.

\subsection{Review and Critique of Post-Selection Inference via the Truncated Gaussian} \label{sec:geninference} 

We briefly review the work of \citet{lee2016} and \citet{TTLT2016}. Let $\bX$ be our $n\times p$ predictor matrix and $y$ our $n$-dimensional outcome vector. To focus in on interesting variables, suppose we have run the lasso with a fixed $\lambda$ and let $\hat{M}$ be the active set the lasso returns. We can now use $\hat{M}$ to generate interesting questions. For example, we might be interested in $\beta^{\left(\hat{M}\right)}=\left(\mathbf{X}_{\hat{M}}^{\top}\mathbf{X}_{\hat{M}}\right)^{-1}\mathbf{X}_{\hat{M}}^{\top}\mu$, the coefficients in the projection of $\mu$ onto the selected variables. Hypothesis tests of the form $H_{0}:\beta_{j}^{\left(\hat{M}\right)}=0$ have the general linear form $H_{0}:\eta^{\top}\mu=0$ for some vector $\eta\in\mathbb{R}^n$. Standard theory
tells us what to do when $\eta$ is pre-specified: the Z-statistic
\begin{equation}
	Z=\frac{\eta^{\top}y-\eta^{\top}\mu}{\sigma\lVert\eta\rVert_{2}}\sim \mathcal{N}\left(0,1\right) \label{eq:standard_pivot}
\end{equation}
is a pivot from which we can derive p-values and confidence intervals. However, this theory does not hold because our $\eta$ depends on the data through $\hat{M}$, i.e., $\eta=\eta(\hat{M})$ and upon substituting this into the above expression, the pivotal relationship no longer holds.

When performing inference for such data generated hypotheses, we should
remove whatever information has already been used up during our initial
exploration in order to avoid ``using the data twice.'' This is
achieved by conditioning on $\hat{M}$ (note that conditional on $\hat{M}=M$,
$\eta(\hat{M})$ is back to being a fixed vector). Post-selection
p-values and intervals can then be built in the usual manner if we can identify quantities that are pivotal \emph{conditional on} $\hat{M}$. \citet{lee2016} showed how to build such a conditional pivot. We summarize their construction: 
\begin{enumerate}
	\item The event $\left\{ \hat{M}=M\right\} $ is difficult to characterize
	but \citet{lee2016} and \citet{TTLT2016} reveal a simple characterization
	of a more refined event 
	\[
	\left\{\hat{M}=M,\hat{s}=s\right\}=\left\{\bA y\leq b\right\},
	\]
	where $\hat{s}$ is the vector of signs of the lasso coefficients
	of the active variables. The RHS above is a polyhedral region in outcome space (the exact form for $\bA$ and $b$ can be derived from the KKT conditions at the solution). \citet{lee2016} and \citet{TTLT2016} construct a quantity that is pivotal conditional on $\hat{M}$ \emph{and} $\hat{s}$ (which implies that it is pivotal conditional on $\hat{M}$). 
	
	\item The standard pivot (\ref{eq:standard_pivot}) simply centers and scales $\eta^{\top}y$ by its unconditional mean and standard deviation.
	The conditional distribution of $\eta^{\top}y$ given \textbf{$\mathbf{A}y\leq b$} is difficult to work with because it depends on nuisance parameters. The usual way to get rid of nuisance parameters is by conditioning; in this case, we condition on $P_{\eta^{\perp}}y$, the piece of $y$ orthogonal to $\eta$. \citet{lee2016} show that 
	\begin{equation}
		\eta^{\top}y \:\big|\:\mathbf{A}y\leq b, P_{\eta^{\top}}y\sim \mathcal{TN}\left(\eta^{\top}\mu,\sigma^{2}\lVert\eta\rVert_{2}^{2},\left[\mathcal{V}^{-},\mathcal{V}^{+}\right]\right)\label{eqn:poly}
	\end{equation}
	where $\mathcal{TN}\left(\mu,\sigma^{2},\left[c,d\right]\right)$ denotes the
	$\mathcal{N}\left(\mu,\sigma^{2}\right)$ distribution truncated to $\left[c,d\right]$. Above, $\mathcal{V}^{-}$ and $\mathcal{V}^{+}$ are computable functions of $\mathbf{A}$, $b$ and $P_{\eta^{\top}}y$ (see Appendix \ref{app:truncation:points}). This implies that the usual $Z$-statistic (\ref{eq:standard_pivot}) has a truncated normal distribution as its conditional!
	\item The CDF of a truncated Gaussian distribution\index{truncated normal distribution} with support confined to $[c,d]$ is 
	\begin{eqnarray}
		F_{\mu,\sigma^{2}}^{c,d}(x)=\frac{\Phi((x-\mu)/\sigma)-\Phi((c-\mu)/\sigma)}{\Phi((d-\mu)/\sigma)-\Phi((c-\mu)/\sigma)},\label{eq:tg}
	\end{eqnarray}
	with $\Phi$ the CDF\index{cumulative distribution function} of the standard Gaussian. The CDF of a random variable, evaluated at the value of that random variable, has a uniform distribution on $(0,1)$. Hence 
	\begin{eqnarray}
		F_{\eta^{\top}\mu,\sigma^{2}\lVert\eta\rVert_{2}^{2}}^{\mathcal{V}^{-},\mathcal{V}^{+}}\left(\eta^{\top}y\right)\;\Big|\;\left\{\hat{M}=M,\hat{s}=s\right\}\sim\mathcal{U}(0,1).\label{eq:tn0}
	\end{eqnarray}
\end{enumerate}
We obtain confidence intervals by inverting the pivotal quantity (\ref{eq:tn0}): we denote these as the TZ$_{Ms}$ intervals, for truncated Z-statistic intervals conditional on the selected model variables $M$ and their signs $s$. For a $1-\alpha$ interval, we find the largest and the smallest
$\eta^{\top}\mu$ such that the value of pivotal quantity remains
in the interval $\left[\frac{\alpha}{2},1-\frac{\alpha}{2}\right]$.
\bigskip

The technique of \citet{lee2016} and \citet{TTLT2016} allows for
exact inferences after using the lasso (or any other procedure whose
selection event is polyhedral) to generate hypotheses. However, it
is not without shortcomings: 
\begin{itemize}
	\item The TZ$_{Ms}$ intervals condition on $\left(\hat{M},\hat{s}\right)$
	rather than just $\hat{M}$. This would be appropriate if we also
	used information in the signs when forming our hypotheses. For example,
	we might test against a one sided alternative, $H_{1}:\beta_{j}^{\left(\hat{M}\right)}>0$,
	if the sign of the $j$th variable is positive. However, in most applications,
	the signs are not used in hypothesis generation, hence we have thrown
	away unused information by conditioning on $\hat{s}$. As a result,
	the intervals tend to be quite long (Figure \ref{fig:prostate}).
	\item \citet{lee2016} makes mention of the previous point and shows how
	to condition only on $\hat{M}$ by an enumeration of all possible
	sign vectors. Their enumeration method is intractable when $\left|\hat{M}\right|$
	is large as there are $2^{\left|\hat{M}\right|}$ possible sign vectors.
	In Section \ref{sec:partial}, we provide a workable implementation. 
	\item In some applications, $\hat{M}$ is used to generate hypotheses in a very restrictive manner. The restrictions allow us to condition on less than $\hat{M}$, thus leading to more precise intervals; see Section \ref{sec:full}. 
\end{itemize}
Appendix \ref{app:simple:example:conditioning:events} presents a simple example that shows the effects of different conditioning strategies.

\medskip

The  above criticisms have focused on the ``over-conditioning'' of
the TZ$_{Ms}$ intervals. There is however an entirely different (and
practically pressing) concern: What if the intervals are uselessly
wide even when we adopt the ``minimal'' conditioning? It turns out
that when we are interested in the partial regression coefficients,
conditioning on $\hat{M}$ is the ``minimal'' conditioning; yet
as Figure \ref{fig:prostate} shows, the resulting TZ$_{M}$ intervals
are still unacceptably wide. Such imprecision is in fact the normal
state of affairs when $p>n$ (see Section \ref{sec:sim_partial}).
What such an occurrence should tell us is that we have used up too much
information in our initial data exploration\textemdash and did not
leave enough information for inference. 

Thus we need to strike a more
balanced approach to how we use the data to generate hypotheses. \citet{TT2015}
offer one such approach: perform our exploratory analysis not on the
actual data but a noise-perturbed version of it (similarly, one could
split the data). That is, we would obtain our active set not by running
lasso on the original data but by running lasso on the original data
with noise added on. The resulting intervals are substantially shorter
than the TZ$_{Ms}$ intervals but do require greater computation
in the form of MCMC sampling. More concerning, however, is that the
variables we select will vary across realizations of noise and will
differ from the variables we select had we used the original data.
This paper takes a different route by giving a deterministic method
for controlling the cost of data exploration: we will continue to
generate hypotheses using $\hat{M}$, the lasso active set from the
original data, but we will restrict how $\hat{M}$ can be used (Section
\ref{sec:partial}). The key to our development is distinguishing
between two often confounded ways in which the data can be used to
shape the direction of our investigation.

\subsection{Forming a Data Driven Query: Two Costs}\label{subsec:two_costs} 

In many applications, investigators come
to the data without pre-specified questions. Instead, they want to use the data to generate interesting hypotheses. Often, the hypothesis generation serves to answer two questions. Out of a possibly larger set of explanatory variables, which ones should we focus on? What aspect of the interesting variables should we study? The two processes that answer these questions are: 
\begin{itemize}
	\item \emph{Variable Selection}: The data is used to decide which variables are worthy of attention, e.g., running the lasso and focusing on the active set. 
	\item \emph{Target Formation}: Having settled on a subset $M\subset\left\{ 1,\ldots,p\right\} $
	of variables for careful study, we need to decide how to summarize the relation between the selected variables and $\mu$. That is, what should be the target of our estimation? We could try and estimate
	the regression coefficients from projecting $\mu$ onto $\mathbf{X}_{M}$:
	\[
	\beta^{\left(M\right)}=\left(\mathbf{X}_{M}^{\top}\mathbf{X}_{M}\right)^{-1}\mathbf{X}_{M}^{\top}\mu,
	\]
	summarizing the effect of variable $j$ through $\beta_{j}^{\left(M\right)}$. In this approach, the way of capturing the effect of variable $j$ depends on what other variables are in $M$. We call this the \textbf{partial target}, since they are the (population) partial regression coefficients.
	
	In contrast, prior to looking at the data, we could decide to \emph{always}
	summarize the effect of $j$ via its coefficient $\beta_{j}^{F}$
	in a regression of $\mu$ on all the variables 
	\[
	\beta^{F}=\left(\mathbf{X}^{\top}\mathbf{X}\right)^{-1}\mathbf{X}^{\top}\mu.
	\]
	In this case, we have a fixed (non data-driven) target, $\beta^{F}$.
	We call this the \textbf{full target}, since they are the (population)
	regression coefficients in the full model. While we may use variable
	selection to decide which components of $\beta^{F}$ to estimate,
	we do not use the data to inform how we should summarize the relation
	between selected variables and $\mu$. 
\end{itemize}
Both variable selection and target formation carry a cost that needs
to be acknowledged when doing inference. When we condition on the
lasso active set $\hat{M}$ for post-selection inference, our inferences
are valid \emph{regardless} of how $\hat{M}$ is used in variable
selection and target formation. However, if we know the precise way
in which $\hat{M}$ is used for these two processes, we can often
do (much) better than simply conditioning on $\hat{M}$. For example,
in Section \ref{subsec:reg_target}, we consider a procedure where
only the ``best variables'' in $\hat{M}$ are used in target formation
rather than all of $\hat{M}$\textemdash hence we only need to condition
on the identity of the ``best variables.'' For these reasons, distinguishing
between the costs of variable selection and target formation is key
to improving over the standard truncated-Z inferences.

\section{Inference for the full regression targets} \label{sec:full}

In this section, we are concerned with the (easier) case where
the data has been used to choose a set of interesting variables but
the data has \textit{not} been used to decide how to summarize the
relation between the response and the selected variables. In particular,
we assume that the investigator has decided a priori that the relation
between a variable $x_{j}$ and the response variable should be captured
by its coefficient in a regression of $\mu$ on all the variables:

\[
\beta^{F}=\left(\mathbf{X}^{\top}\mathbf{X}\right)^{-1}\mathbf{X}^{\top}\mu.
\]
The data is only used to choose the components of $\beta^{F}$ we
are interested in. This is the {\em full target} discussed above.

We focus on the case where the lasso is used to choose the set $\hat{M}$
of variables. The relation between $\mu$ and each variable $j\in\hat{M}$
is summarized by $\beta_{j}^{F}$ (again, we emphasize that this choice
of summary was agreed upon before exploring the data). Now the goal
is to build a confidence interval $C_{j}$ for $\beta_{j}^{F}$. We
note that an interval is built for $\beta_{j}^{F}$ whenever $j\in\hat{M}\subset\left\{ 1,\ldots,p\right\} $.
Hence, it seems natural to require that our interval have good coverage
properties conditional on the interval existing in the first place:
\begin{eqnarray} \label{eq:cover_full}
	\mathbb{P}\left(\beta_{j}^{F}\notin C_{j}\:\big|\:j\in\hat{M}\right)\leq\alpha 
\end{eqnarray}
for a given level $\alpha$. In particular, we do \emph{not }condition on the entire active set:
when hypothesis generation involves only variable selection, it suffices
to condition on variable $j$ being in the active set (when performing
inference for $j$). In contrast, \citet{lee2016} conditions on both
the selected model $\hat{M}$ as well as the signs $\hat{s}$. By
conditioning on less, more variation remains in the data for determining
the precise value of $\beta_{j}^{F}$. Since the goal of conditioning
is to restrict the sample space to those data where confidence intervals
for $\beta_{j}^{F}$ will actually be constructed, the event $\left\{ j\in\hat{M}\right\} $
is precisely the minimal conditioning event.

To construct intervals conditional on $\left\{ j\in\hat{M}\right\} $,
we provide a particularly simple characterization of this set. For
motivation, consider the case where $\mathbf{X}$ has orthonormal columns in
which case $\beta_{j}^{F}=\eta_{j}^{\top}\mu$ where $\eta_{j}=x_{j}$.
Then the event $\left\{ j\in\hat{M}\right\} $ reduces to a union
of two disjoint intervals
\[
j\in\hat{M}\iff\left|\eta_{j}^{\top}y\right|>\lambda.
\]
That is, as long as the inner product between $x_{j}$ and $y$ exceeds
a cutoff, lasso will select variable $j$. Since $\eta_{j}^{\top}y\sim \mathcal{N}\left(\beta_{j}^{F},\sigma^{2}\right)$,
the conditional distribution of $\eta_{j}^{\top}y\:|\: j\in\hat{M}$ is
simply this normal distribution truncated to $\left[-\infty,-\lambda\right]\cup\left[\lambda,\infty\right]$.
A pivot can be obtained from this truncated normal distribution
and then inverted to obtain confidence intervals. 

In the general case, we no longer have $\eta_{j}=x_{j}$. Instead,
\[
\beta_{j}^{F}=\eta_{j}^{\top}\mu\qquad\eta_{j}=\mathbf{X}\left(\mathbf{X}^{\top}\mathbf{X}\right)^{-1}e_{j},
\]
where $e_j\in\mathbb{R}^p$ is a vector of all zeros with one at its $j$th coordinate.
It turns out that this general case can be reduced to the special
case of orthonormal predictors by conditioning. In particular, let
us decompose variation in $y$ into two components: variation in the
direction of $\eta_{j}$ and variation orthogonal to $\eta_{j}$.
\[
y=z_{j}\cdot c_{j}+\nu_{j}\qquad z_{j}=\eta_{j}^{\top}y\qquad c_{j}=\frac{\eta_{j}}{\lVert\eta_{j}\rVert_{2}^{2}}.
\]
We will condition on the component orthogonal to $\eta_{j}$, i.e.,
$\nu_{j}$, and show that the law of $\eta_{j}^{\top}y\:|\:j\in\hat{M},\nu_{j}$
is the $\mathcal{N}\left(\beta_{j}^{F},\sigma^{2}\lVert\eta_{j}\rVert_{2}^{2}\right)$
distribution truncated to $\left[-\infty,a_{j}\right]\cup\left[b_{j},\infty\right]$
for some constants $a_{j}$ and $b_{j}$. That is to say, the general
case simply replaces $\lambda$ with $a_{j}$ and $b_{j}$ but much
of our intuition in the orthogonal case still transfers over. Such a simple result is hinted at by \citet{fithian2015} who derived truncation regions of this form in the context of sequential testing. We shall refer to intervals based on this truncated distribution for the Z-statistic as $\TZ_V$ intervals (the $V$ here stands for variable inclusion); intervals which condition on $\hat M$ or both $\hat M$ and $\hat s$ are similarly termed the $\TZ_{M}$ and $\TZ_{Ms}$ intervals respectively (see Section \ref{sec:geninference}).

\bigskip
\begin{proposition}
	\label{prop:full}
	The conditional distribution $\eta_{j}^{\top}y\:|\:j\in\hat{M},\nu_{j}$
	is $\mathcal{N}\left(\beta_{j}^{F},\sigma^{2}\lVert\eta_{j}\rVert_{2}^{2}\right)$
	truncated to $\left[-\infty,a_{j}\right]\cup\left[b_{j},\infty\right]$
	where
	\[
	a_{j}=\lVert\eta_{j}\rVert_{2}^{2}\cdot\left(x_{j}^{\top}r_{j}-\lambda\right)\qquad b_{j}=\lVert\eta_{j}\rVert_{2}^{2}\cdot\left(x_{j}^{\top}r_{j}+\lambda\right)
	\]
	$r_{j}=\mathbf{X}_{-j}\hat{\beta}_{-j}-\nu_{j}$ and $\nu_{j}=\left(\mathbf{I}_n-\frac{1}{\lVert\eta_{j}\rVert_{2}^{2}}\eta_{j}\eta_{j}^{\top}\right)y$. Here, $x_j$ is the $j$th column of $\mathbf{X}$, $\mathbf{X}_{-j}$ is the submatrix of $\mathbf{X}$ after removing the $j$th column of $\mathbf{X}$ and $\hat{\beta}_{-j}$ is the lasso fit of $\nu_j$ to $\mathbf{X}_{-j}$ with penalty $\lambda$.
\end{proposition}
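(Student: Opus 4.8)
The plan is to exploit the Karush–Kuhn–Tucker (KKT) conditions for the lasso and show that, after conditioning on $\nu_j$, the event $\{j\in\hat M\}$ is an explicit event in the single real variable $z_j=\eta_j^\top y$. First I would write down the KKT conditions for \eqref{eqn:lasso} (ignoring the intercept, or absorbing it into $\mathbf{X}$): a vector $\hat\beta$ is optimal iff $\mathbf{X}^\top(y-\mathbf{X}\hat\beta)=\lambda\,\hat\gamma$ where $\hat\gamma_k=\sign(\hat\beta_k)$ whenever $\hat\beta_k\neq 0$ and $|\hat\gamma_k|\le 1$ otherwise. The key observation is the standard ``one-at-a-time'' reformulation: $j\in\hat M$ (i.e. $\hat\beta_j\neq 0$) if and only if, fixing all other coordinates at their optimal values, the univariate subgradient condition for coordinate $j$ fails to be satisfiable with $\hat\beta_j=0$; equivalently $\bigl|x_j^\top(y-\mathbf{X}_{-j}\hat\beta_{-j})\bigr|>\lambda$, where $\hat\beta_{-j}$ is the lasso solution of the remaining problem. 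The subtlety is that $\hat\beta_{-j}$ is itself the argmin over the other coordinates \emph{with $\beta_j=0$ imposed}, which is exactly the lasso fit of $y$ on $\mathbf{X}_{-j}$ — but I will want this to depend only on $\nu_j$, not on $z_j$.

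The second step is to handle that dependence. Decompose $y = z_j c_j + \nu_j$ with $c_j=\eta_j/\lVert\eta_j\rVert_2^2$ and $\eta_j=\mathbf{X}(\mathbf{X}^\top\mathbf{X})^{-1}e_j$. The crucial algebraic fact is that $\mathbf{X}_{-j}^\top c_j = \mathbf{X}_{-j}^\top \mathbf{X}(\mathbf{X}^\top\mathbf{X})^{-1}e_j/\lVert\eta_j\rVert_2^2 = 0$, because $\mathbf{X}_{-j}^\top\mathbf{X}(\mathbf{X}^\top\mathbf{X})^{-1}e_j$ picks out the off-diagonal block times $e_j$, which is zero (the columns $\mathbf{X}_{-j}$ are orthogonal to the residual-maker direction $\eta_j$ in the sense that $\eta_j$ spans the part of $\col(\mathbf{X})$ not explained by $\mathbf{X}_{-j}$). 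Hence the objective $\tfrac12\lVert y-\mathbf{X}_{-j}\beta_{-j}\rVert_2^2+\lambda\lVert\beta_{-j}\rVert_1 = \tfrac12\lVert \nu_j-\mathbf{X}_{-j}\beta_{-j}\rVert_2^2 + \tfrac12 z_j^2\lVert c_j\rVert_2^2+\lambda\lVert\beta_{-j}\rVert_1$, so the minimizer $\hat\beta_{-j}$ depends on $y$ only through $\nu_j$. Thus conditional on $\nu_j$, the residual $r_j := \mathbf{X}_{-j}\hat\beta_{-j}-\nu_j$ is a fixed vector. Now rewrite the inclusion condition: $x_j^\top(y-\mathbf{X}_{-j}\hat\beta_{-j}) = x_j^\top(z_j c_j + \nu_j - \mathbf{X}_{-j}\hat\beta_{-j}) = z_j\,x_j^\top c_j - x_j^\top r_j$, and $x_j^\top c_j = x_j^\top\eta_j/\lVert\eta_j\rVert_2^2 = e_j^\top(\mathbf{X}^\top\mathbf{X})^{-1}\mathbf{X}^\top\mathbf{X} e_j/\lVert\eta_j\rVert_2^2 = 1/\lVert\eta_j\rVert_2^2$. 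So $j\in\hat M \iff \bigl| z_j/\lVert\eta_j\rVert_2^2 - x_j^\top r_j\bigr|>\lambda$, which rearranges to $z_j < \lVert\eta_j\rVert_2^2(x_j^\top r_j-\lambda)=a_j$ or $z_j>\lVert\eta_j\rVert_2^2(x_j^\top r_j+\lambda)=b_j$.

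The final step is distributional. Unconditionally $z_j=\eta_j^\top y\sim\mathcal N(\eta_j^\top\mu,\sigma^2\lVert\eta_j\rVert_2^2)=\mathcal N(\beta_j^F,\sigma^2\lVert\eta_j\rVert_2^2)$, and by Gaussianity $z_j$ is independent of $\nu_j=(\mathbf{I}_n-\eta_j\eta_j^\top/\lVert\eta_j\rVert_2^2)y$ (the two are jointly Gaussian with zero covariance). Therefore the conditional law of $z_j$ given $\nu_j$ is still $\mathcal N(\beta_j^F,\sigma^2\lVert\eta_j\rVert_2^2)$, and further conditioning on the event $\{j\in\hat M\}=\{z_j\le a_j\}\cup\{z_j\ge b_j\}$ — which, given $\nu_j$, is a fixed subset of the line — simply truncates that normal to $[-\infty,a_j]\cup[b_j,\infty]$, as claimed.

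I expect the main obstacle to be the second step: carefully justifying that $\hat\beta_{-j}$ depends on the data only through $\nu_j$ and rigorously deriving the one-at-a-time inclusion criterion $\{j\in\hat M\}\iff |x_j^\top(y-\mathbf{X}_{-j}\hat\beta_{-j})|>\lambda$ from the KKT conditions. One has to argue both directions: if $\hat\beta_j\neq0$ then stationarity forces $|x_j^\top(y-\mathbf{X}\hat\beta)|=\lambda$ and the $j$th residual with $\hat\beta_j$ removed must overshoot; conversely if $\hat\beta_j=0$ were optimal, $\hat\beta_{-j}$ would also solve the reduced problem and the subgradient bound would give $|x_j^\top(y-\mathbf{X}_{-j}\hat\beta_{-j})|\le\lambda$. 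A mild technical point to flag is (generic) uniqueness of the lasso solution, which is standard when $\mathbf{X}$ is in general position or when $\mathbf{X}^\top\mathbf{X}$ is invertible (implicit here since $\beta^F$ is well-defined); the boundary event $|x_j^\top r_j - z_j/\lVert\eta_j\rVert_2^2|=\lambda$ has probability zero and can be ignored.
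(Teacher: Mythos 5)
Your proposal is correct and follows essentially the same route as the paper: both arguments reduce the event $\{j\in\hat{M}\}$ to a condition on $z_j=\eta_j^{\top}y$ via the lasso KKT conditions, using the key facts $\mathbf{X}_{-j}^{\top}\eta_j=0$ and $x_j^{\top}\eta_j=1$ so that $\hat{\beta}_{-j}$ depends on the data only through $\nu_j$. Your write-up is in fact somewhat more careful than the paper's, spelling out both directions of the one-at-a-time inclusion criterion, the uniqueness caveat, and the final step that $z_j$ is independent of $\nu_j$, which the paper leaves implicit.
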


\bigskip

\begin{proof}
	The argument is simply an examination of the KKT conditions for the
	lasso. Using our decomposition for $y$ and noting 
	\[
	\mathbf{X}^{\top}c_{j}=\frac{e_{j}}{\lVert\eta_{j}\rVert_{2}^{2}}
	\]
	we have 
	\[
	\mathbf{X}^{\top}\left(\mathbf{X}\hat{\beta}-y\right) + \lambda \hat{s}=\mathbf{X}^{\top}\left(\mathbf{X}\hat{\beta}-\nu_{j}\right)-e_{j}\cdot\frac{z_{j}}{\lVert\eta_{j}\rVert_{2}^{2}}+\lambda\hat{s}.
	\]
	 The lasso solution satisfies 
	\begin{equation*}
	\begin{aligned}
	\mathbf{X}^{\top}\left(\mathbf{X}\hat{\beta}-\nu_{j}\right)-e_{j}\cdot\frac{z_{j}}{\lVert\eta_{j}\rVert_{2}^{2}}+\lambda\hat{s}=0 \\
	\hat{s}_{j}=\text{sign}\left(\hat{\beta}_{j}\right)\qquad\hat{\beta}_{j}\neq 0 \\
	\hat{s}_{j}\in\left[-1,1\right]\qquad\hat{\beta}_{j}=0.
	\end{aligned}
	\end{equation*}
	Since $\nu_{j}$ is fixed, our goal is to find the range of $z_{j}=\eta_{j}^{\top}y$
	values for which $\hat{\beta}_{j}\neq0$. When $\hat{\beta}_{j}=0$, we have 
	\[
	\left|x_{j}^{\top}\left(\mathbf{X}_{-j}\hat{\beta}_{-j}-\nu_{j}\right)-\frac{z_{j}}{\lVert\eta_{j}\rVert_{2}^{2}}\right|<\lambda.
	\]
	This shows that $\hat{\beta}_{j}=0$ if and only if $z_{j}\in\left[a_{j},b_{j}\right]$, where 
	\begin{equation*}
	\begin{aligned}
	a_{j}=\lVert\eta_{j}\rVert_{2}^{2}\cdot\left(x_{j}^{\top}r_{j}-\lambda\right) &\qquad b_{j}=\lVert\eta_{j}\rVert_{2}^{2}\cdot\left(x_{j}^{\top}r_{j}+\lambda\right) \\
	r_{j}=&\mathbf{X}_{-j}\hat{\beta}_{-j}-\nu_{j}.
	\end{aligned}
	\end{equation*}
	Note that in the orthonormal case, $x_{j}^{\top}r_{j}=0$ and $\lVert\eta_{j}\rVert_{2}^{2}=1$
	so that $a_{j}=-\lambda$ and $b_{j}=\lambda$.
\end{proof}

When $p > n$, the matrix $\mathbf{X}^{\top} \mathbf{X}$ is no longer invertible. However, the full target can be approximated by $\beta^F = \mathbf{A} \mathbf{X}^{\top}\mu$, where $\mathbf{A}$ is some approximation to the inverse of $\mathbf{X}^{\top} \mathbf{X}$; see, e.g., \citet{JM2015}. However, under this construction, we no longer have the condition that $\eta_j$ is orthogonal to $x_k$ for all $k\neq j$, i.e., as we move along the direction $\eta_j$ we affect the lasso solution not only for variable $j$ but also for the other variables. As a result, the proof of Proposition \ref{prop:full} no longer holds and we do not have easily computed formulas characterizing $\{j \in \hat M\}$. Nonetheless, it is still feasible to compute the truncation set via pathwise algorithms (Section \ref{sec:general}).

\subsection{Simulations for the full target case}\label{subsec:full_sim}

We conduct some simple simulations to demonstrate the benefit of conditioning only on $j\in\hat{M}$ (the minimal conditioning) rather than the entire active set $\hat{M}$.
\begin{itemize}
	\item Since the full regression coefficients only exist when $n>p$, we consider two cases: $\left(n,p\right)=\left(100,50\right)$ and $\left(n,p\right)=\left(100,90\right)$, the latter of which is close to being non-identifiable.
	\item The variables for each individual are independent standard multivariate normal: $x_{i}\sim \mathcal{N}\left(0,\mathbf{I}_{p}\right)$. (Note $x_i$, $i=1,\ldots,n$, denote the rows of $\mathbf{X}$ and $x_j$, $j=1,\ldots,p$ denote the columns of $\mathbf{X}$.)
	\item The outcome $y$ was generated as $y_{i}=x_{i}^{\top}\beta^{F}+\varepsilon_{i}$, where $\varepsilon_{i}\sim \mathcal{N}\left(0,1\right)$, for all $i=1,\ldots,n$.
	\item The first $k=5$ components of $\beta_{j}^{F}$ were set to either $0$, $\delta_{\text{low}}$, or $\delta_{\text{high }}$ (see following bullet). The remaining components were set to 0.
	\item We consider three signal settings: no signal, low signal, and high signal. To arrive at a meaningful definition of low and high signal,
	we considered the distribution of $\max_{j}n^{-1}\left|x_{j}^{\top}y\right|$ under the global null (where $y\sim \mathcal{N}\left(0,\mathbf{I}_n\right)$). Asymptotically,
	this noise distribution concentrates around $\sqrt{\frac{2\log p}{n}}$;
	for our purposes we simulated from this distribution 1000 times and
	set $\delta_{\text{low}}$ as the median of this distribution and
	$\delta_{\text{high}}$ as the 99th quantile of this distribution
	plus an additional offset of $0.25$. When the signal $\beta_{j}^{F}=\delta_{\text{low}}$,
	it is difficult to distinguish variable $j$ from noise; when $\beta_{j}^{F}=\delta_{\text{high}}$,
	the signal has exceeded the noise threshold and is easy to detect.
	\item For our penalty parameter $\lambda$, we considered two choices. For
	the first choice we try to approximate the $\lambda$ that would have
	been picked by cross validation. To do this, we ran lasso at each
	setting 100 times, computed the $\lambda$ that was chosen by 10-fold
	cross validation each time, and set $\lambda$ to be the median of
	the resulting distribution. For our second choice, we chose the universal threshold value: $\lambda=\sqrt{\frac{2\log p}{n}}$. In all instances the ``cv'' $\lambda$ is quite a bit smaller than the ``universal threshold'' $\lambda$; cross validation prioritizes predictive performance and is not model selection consistent (it lets in too many noise variables) while the universal threshold is model selection consistent.
	\item We simulated 1000 independent datasets. For each dataset, we constructed a 90\% confidence interval of $\beta_{j}^{F}$ for each variable $j$ in the lasso active set. 
\end{itemize}
Figures \ref{fig:full_null}, \ref{fig:full_n100p50}, and \ref{fig:full_n100p90} display boxplots of interval lengths produced by each method, as well as giving coverage information. When inverting the truncated normal pivot, numerical inaccuracies occasionally result in ``infinite'' intervals (this occurs if the observed Z-statistic is too close to an endpoint of the truncation region). The median length and empirical coverage can still be computed using the usual definitions; for display purposes, however, our boxplots truncate the length of intervals with infinite length to the maximum finite length observed in a simulation. We summarize the key takeaways:
\begin{itemize}
	\item Many practitioners incorrectly assume that Bonferroni corrected intervals can protect against the effects of selection. When there is no signal $\beta^{F}=0$, Figure \ref{fig:full_null} shows that the Bonferroni intervals can severely under-cover. 
	\item Conditioning on $j\in\hat{M}$ rather than $\hat{M}$ produces intervals that are significantly shorter. The benefits are greater when $\lambda$ is smaller (chosen by cross validation). This is because $\hat{M}$ carries less information about $y$ when $\lambda$ is larger (say, near the universal threshold). In other words, increasing $\lambda$ is one way to control our exploration cost, a fact that we shall exploit in Section \ref{subsec:reg_target}. For sufficiently large $\lambda,$
	we will obtain $\hat{M}=\emptyset$ with high probability in which case there is no exploration cost because there is no exploration.
	\item $\TZ_V$ produces infinite length intervals much less frequently than $\TZ_M$ or $\TZ_{Ms}$ (almost never). Since truncation is less severe for $\TZ_V$, the observed Z-statistic is less likely to fall near the left-most or right-most boundary of the truncation region.
	\item The simulation results are consistent with the good performance of the $\TZ_V$ intervals observed for the prostate data (Figure \ref{fig:prostate}) and HIV data (Figure \ref{fig:hiv}).
\end{itemize}

\begin{figure}[bhtp]
	\begin{centering}
		\includegraphics[width=0.33\paperwidth]{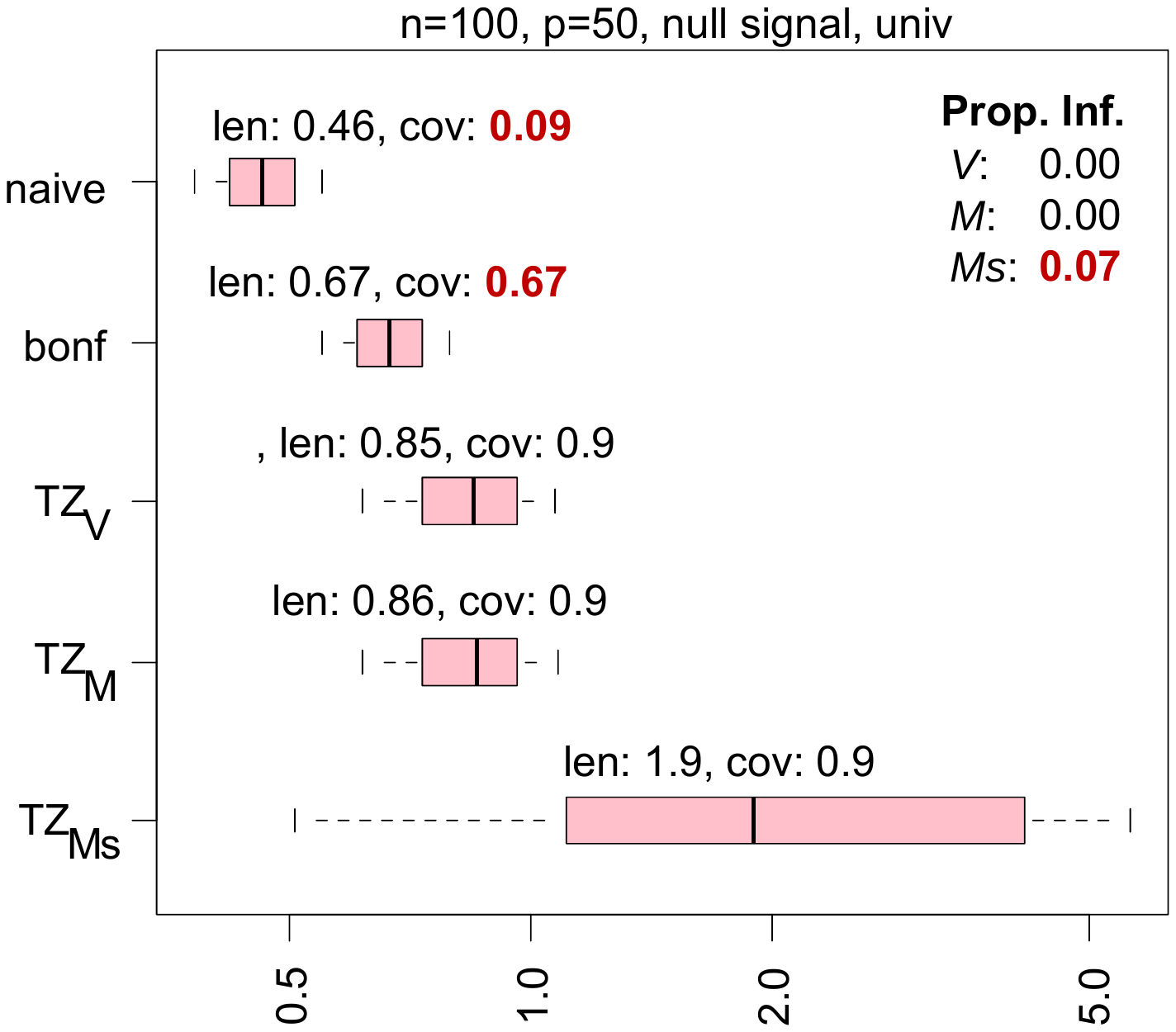}\includegraphics[width=0.33\paperwidth]{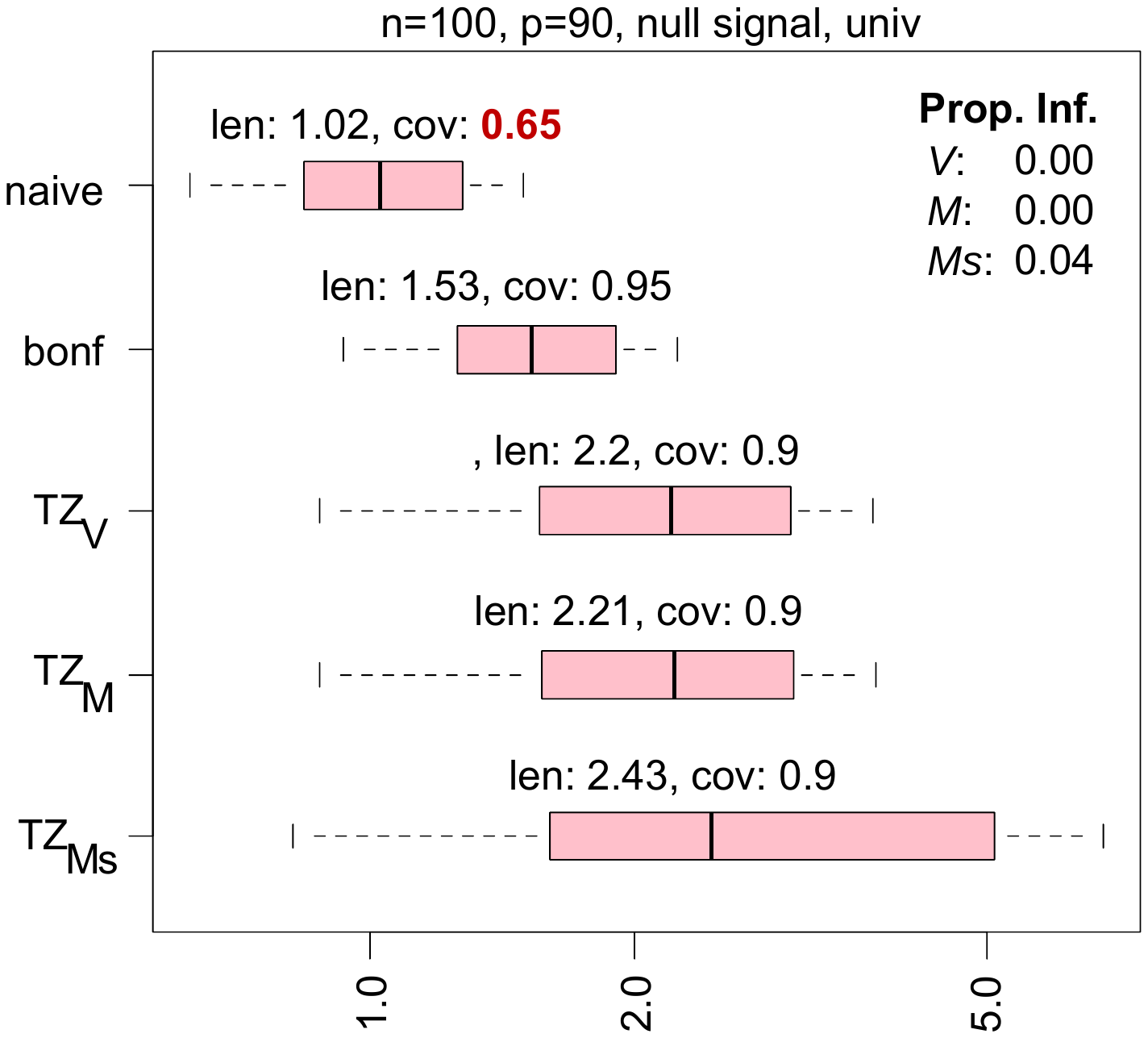}
	\par\end{centering}
	
	\caption{\em Boxplot of lengths of 90\% confidence intervals for ``full''
		regression coefficients. Five interval methods are compared: naive
		(ignoring selection), Bonferroni adjusted, $\TZ_{V}$ , $\TZ_{M}$,
		and $\TZ_{Ms}$. Reported are the median interval length, the empirical
		coverage, and the proportion of ``infinite'' intervals (the infinite
		length results from numerical inaccuracies when inverting a truncated
		normal CDF); the boxplots set infinite lengths to the maximum finite observed length. In the left panel $n=100,p=50$ and in the right
		panel $n=100,p=90$. There is no signal, $\beta=0$. The lasso penalty parameter $\lambda$ is set to $\sqrt{\frac{2\log(p)}{n}}$. }
	\label{fig:full_null}
\end{figure}

\begin{figure}[bhtp]
	\begin{centering}
		\includegraphics[width=0.33\paperwidth]{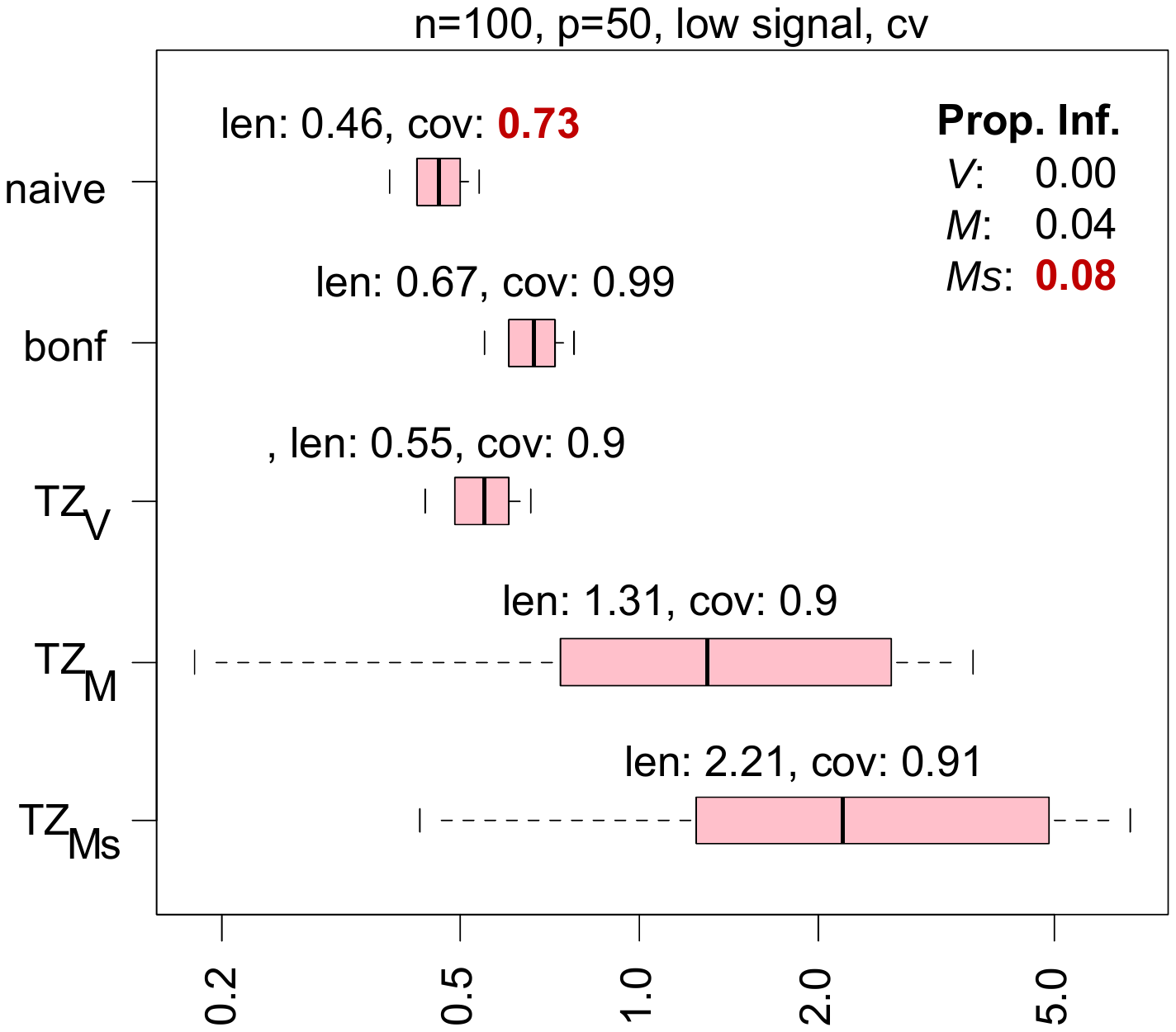}\includegraphics[width=0.33\paperwidth]{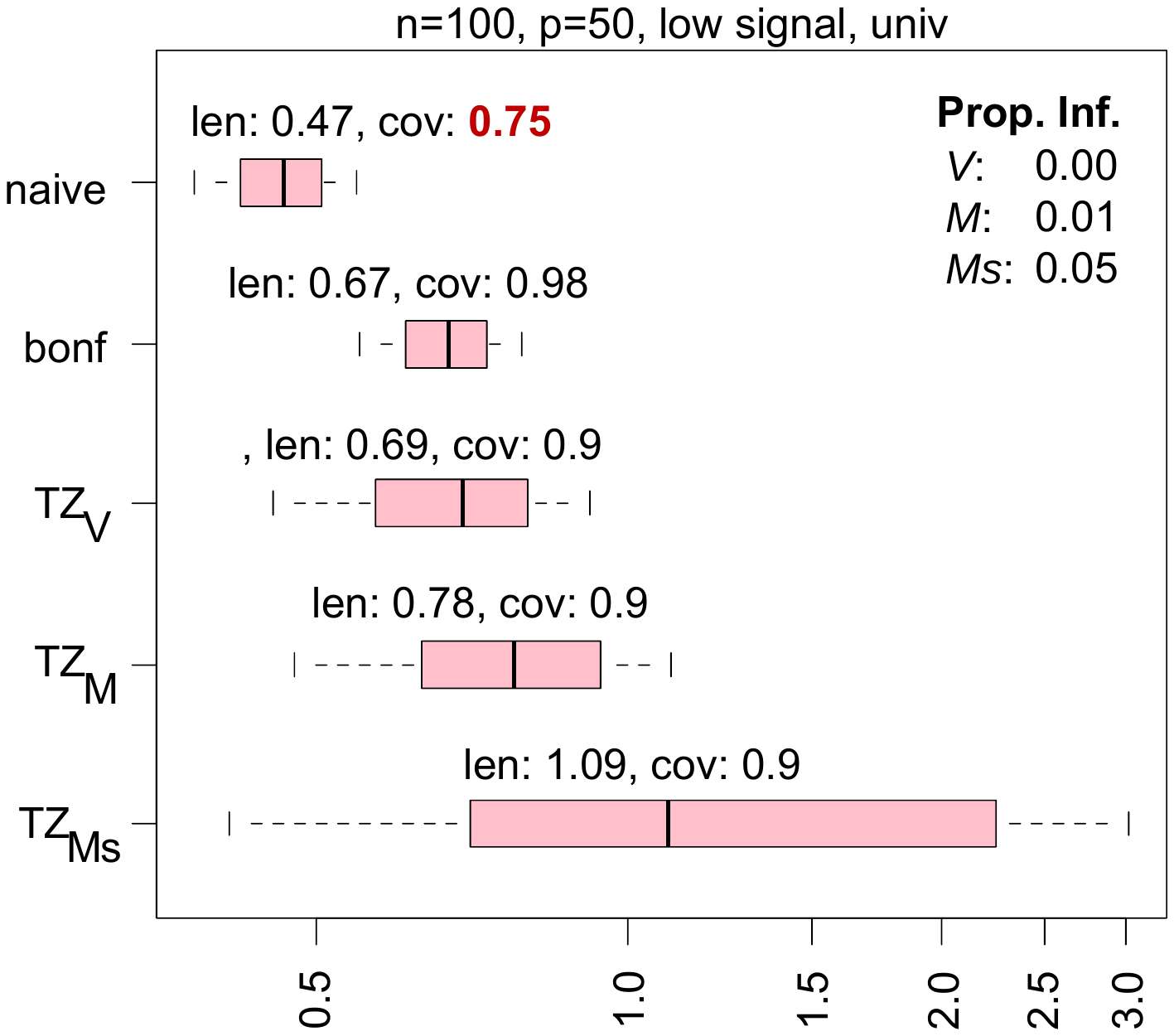}
		\par\end{centering}
	\begin{centering}
		\includegraphics[width=0.33\paperwidth]{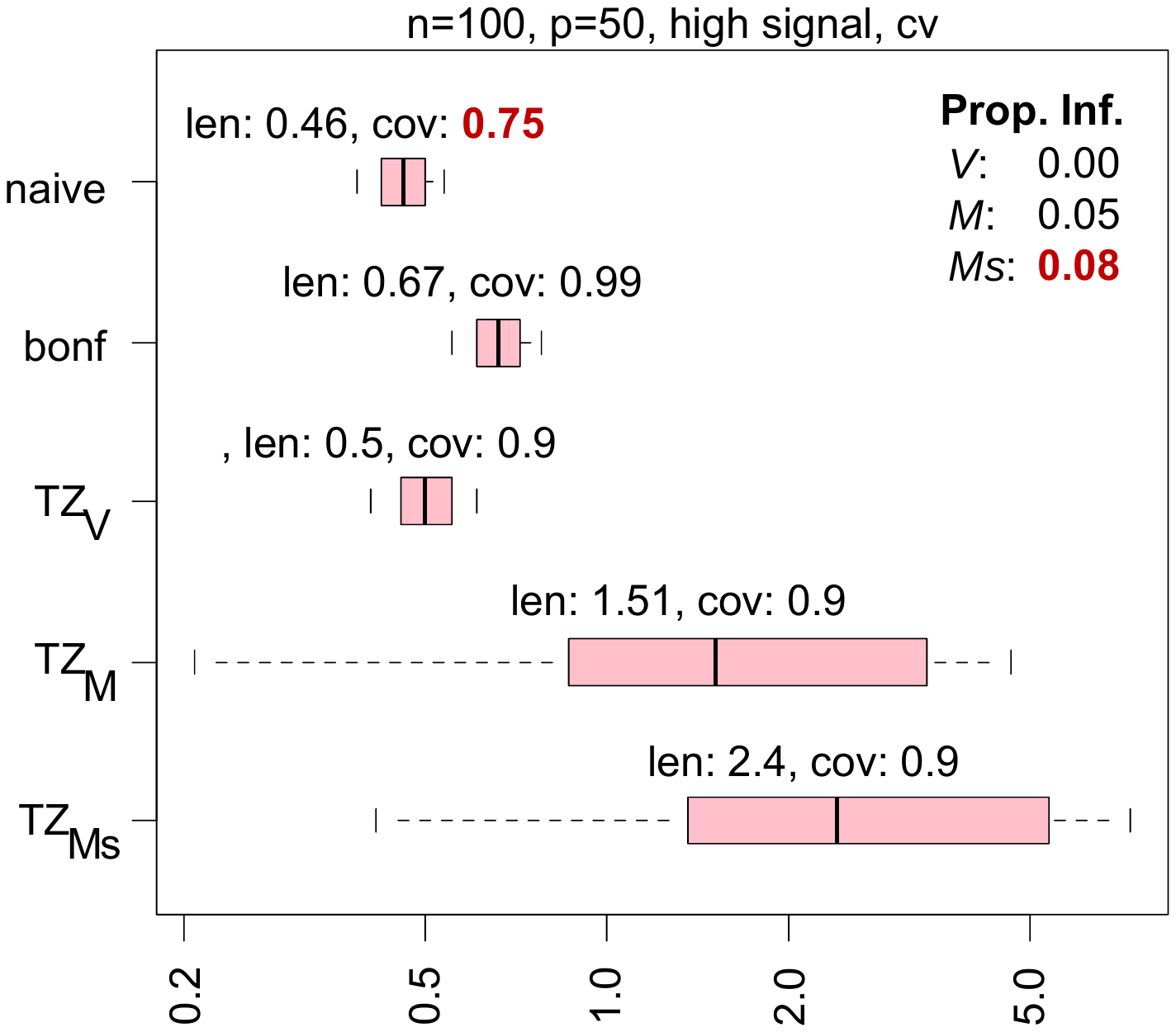}\includegraphics[width=0.33\paperwidth]{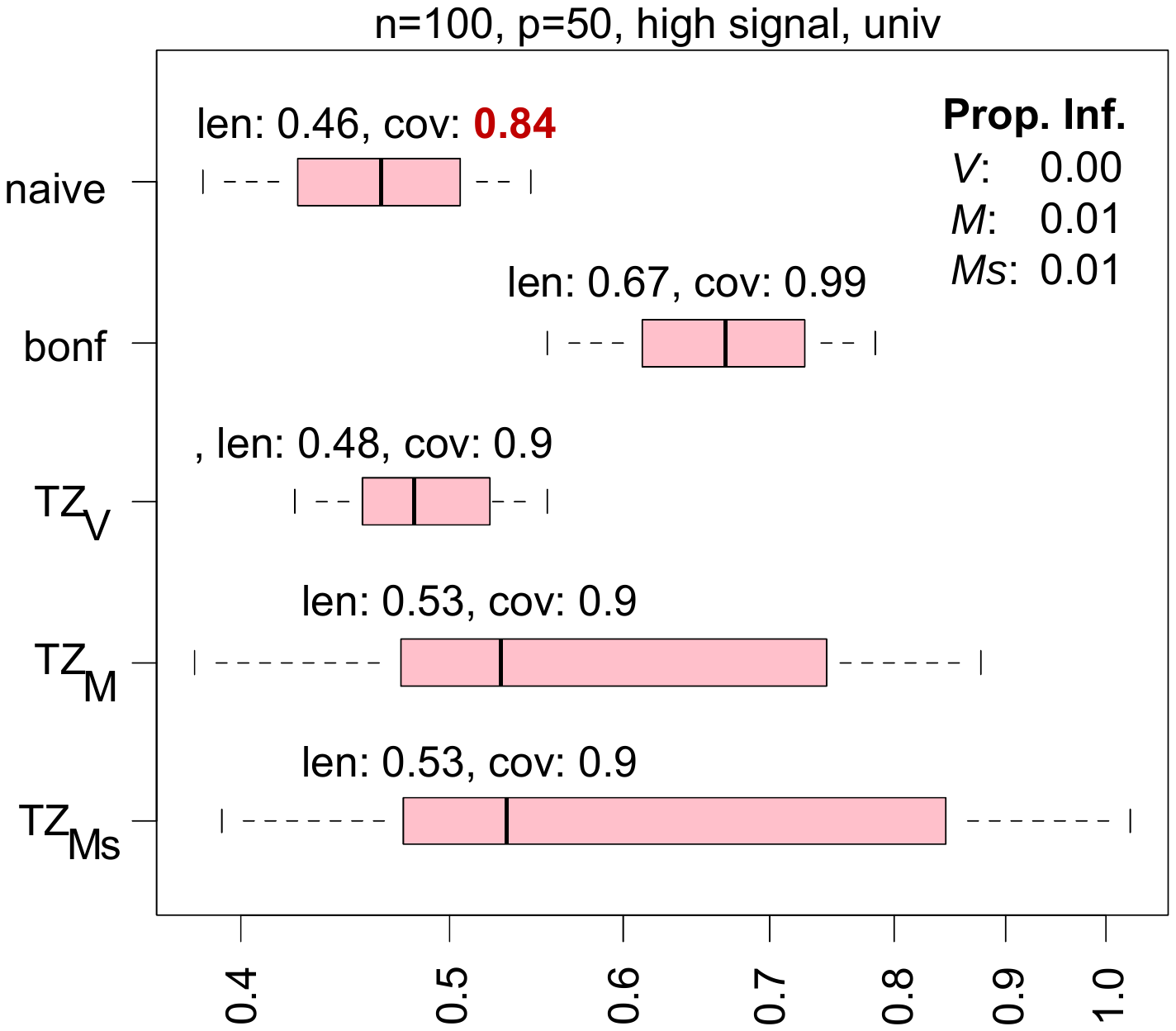}
		\par\end{centering}
	\caption{\em Boxplot of lengths of 90\% confidence intervals for ``full''
		regression coefficients. Five interval methods are compared: naive
		(ignoring selection), Bonferroni adjusted, $\TZ_{V}$ , $\TZ_{M}$,
		and $\TZ_{Ms}$. Reported are the median interval length, the empirical
		coverage, and the proportion of ``infinite'' intervals (the infinite
		length results from numerical inaccuracies when inverting a truncated
		normal CDF); the boxplots set infinite lengths to the maximum finite
		observed length. Here $n=100,p=50$. The first five components of
		$\beta$ are set to $\delta_{\text{low}}=0.24$ (top panels) or $\delta_{\text{high}}=0.62$
		(bottom panels) and the remaining components are 0. The lasso penalty
		is either set at the universal threshold value $\sqrt{\frac{2\log p}{n}}\approx 0.28$
		(right panels) or at a value approximating the behavior of 10-fold
		cross validation (0.14 and 0.11 respectively for the low and high
		signal cases).}
	\label{fig:full_n100p50}
\end{figure}

\begin{figure}[bhtp]
	\begin{centering}
		\includegraphics[width=0.33\paperwidth]{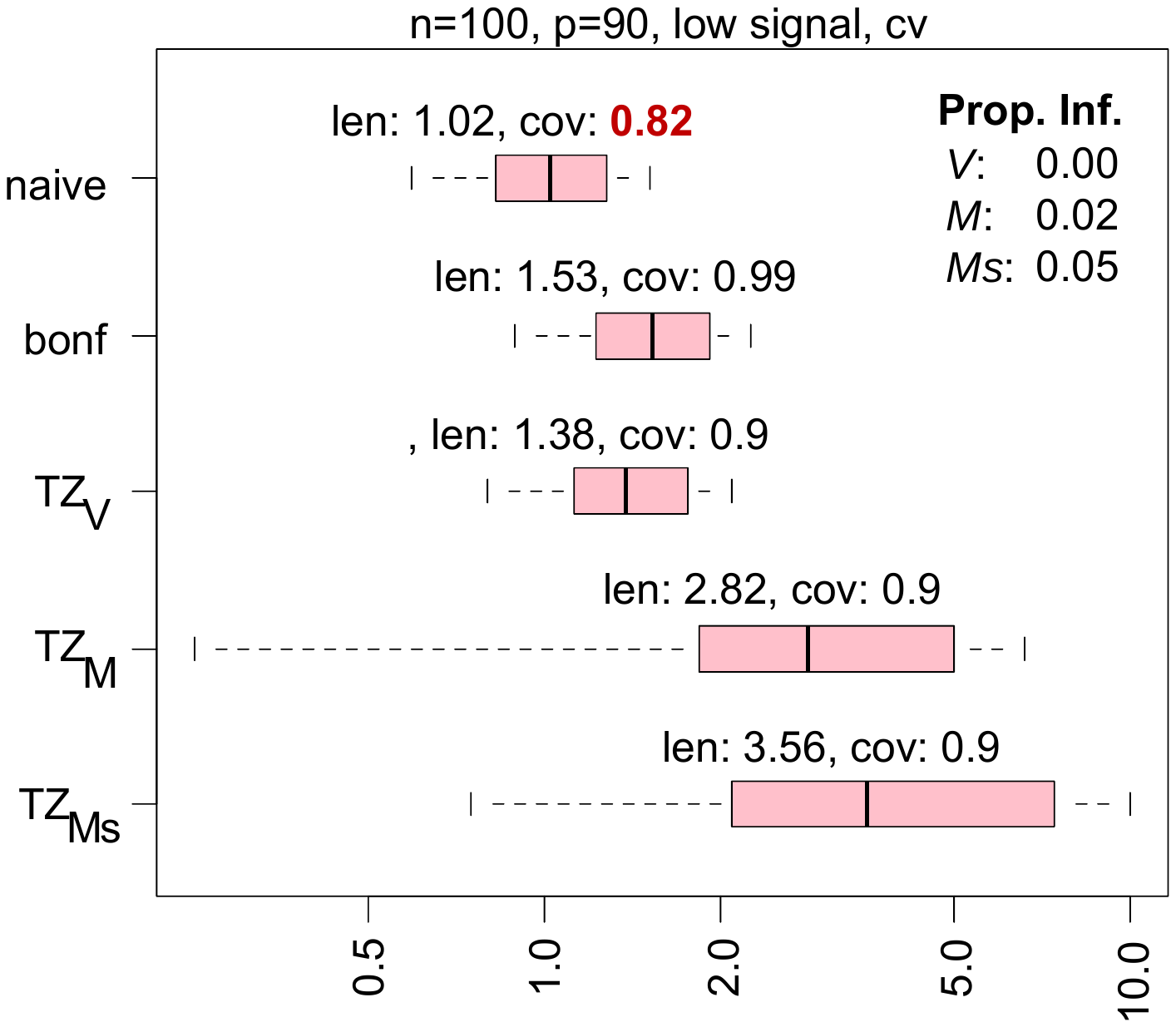}\includegraphics[width=0.33\paperwidth]{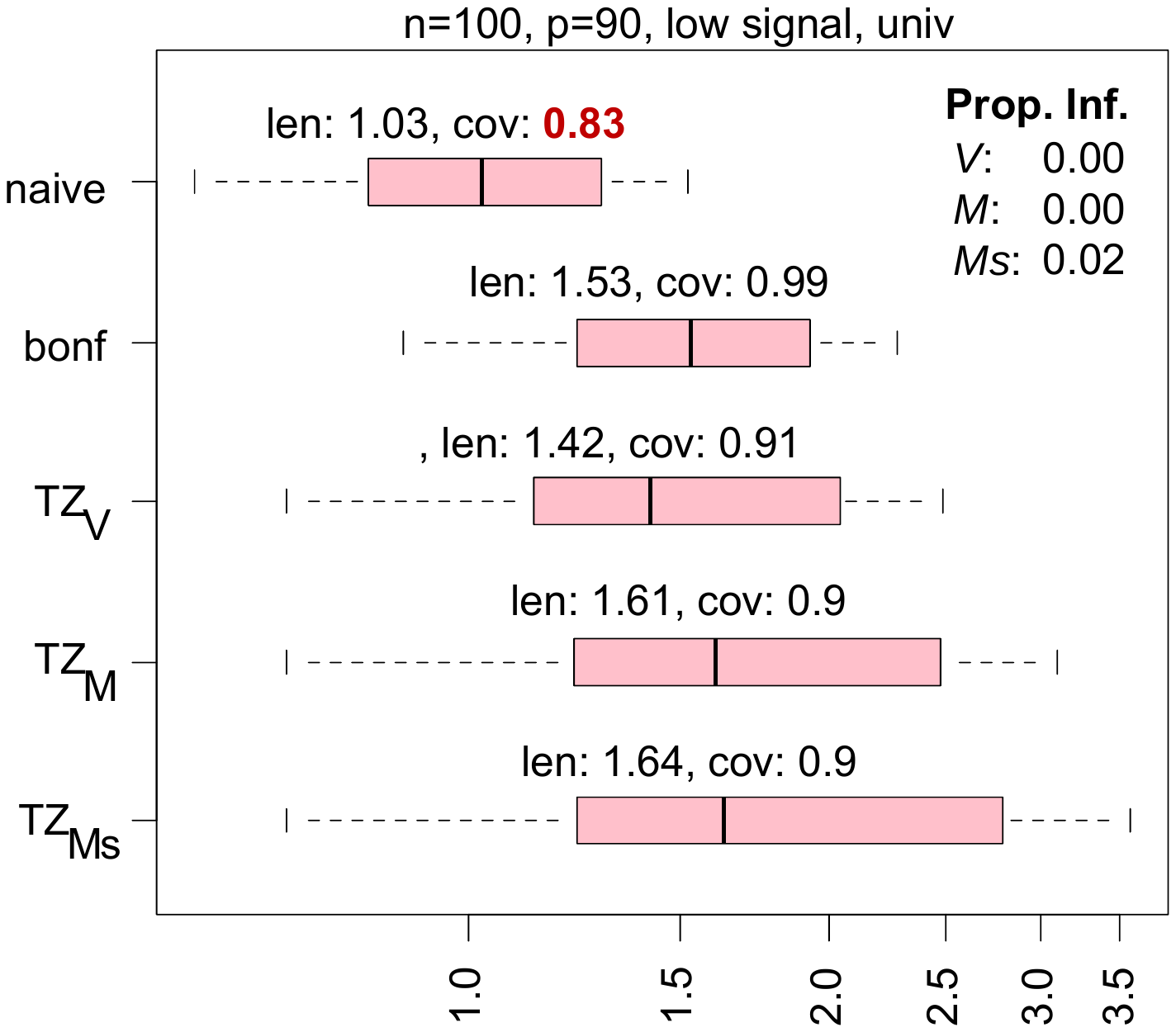}
		\par\end{centering}
	\begin{centering}
		\includegraphics[width=0.33\paperwidth]{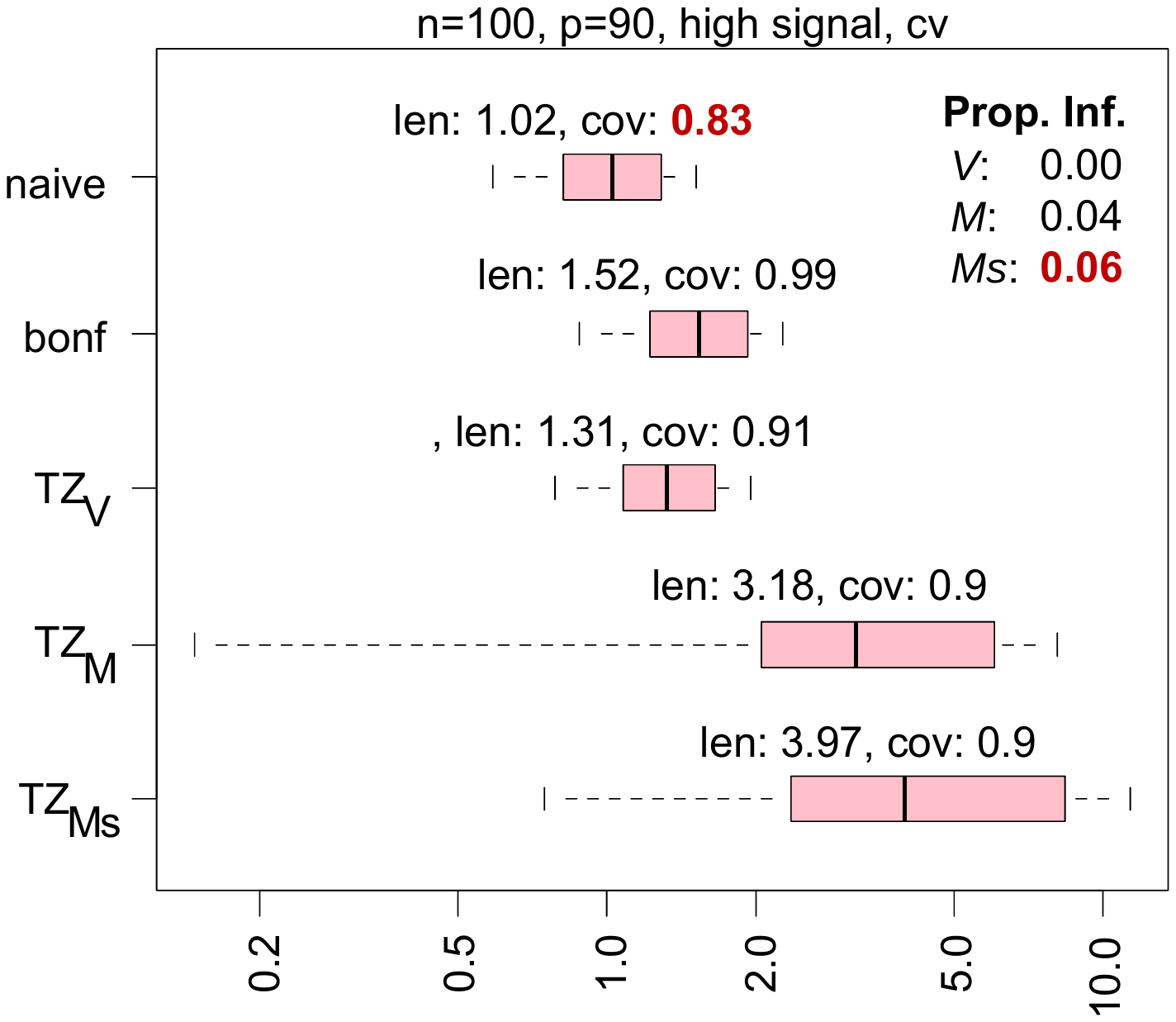}\includegraphics[width=0.33\paperwidth]{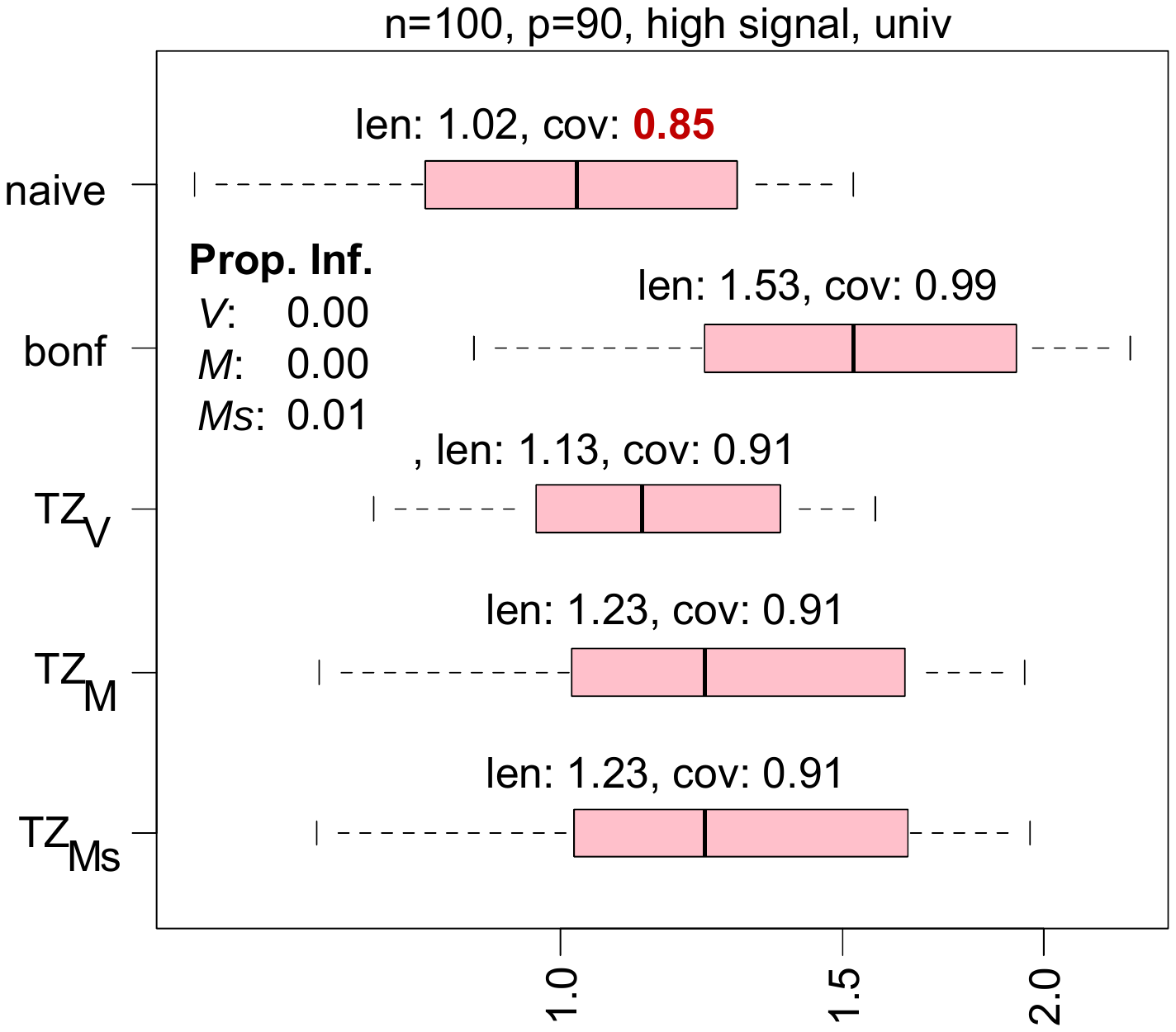}
		\par\end{centering}
	\caption{\em Boxplot of lengths of 90\% confidence intervals for ``full''
		regression coefficients. Five interval methods are compared: naive
		(ignoring selection), Bonferroni adjusted, $\TZ_{V}$ , $\TZ_{M}$,
		and $\TZ_{Ms}$. Reported are the median interval length, the empirical
		coverage, and the proportion of ``infinite'' intervals (the infinite
		length results from numerical inaccuracies when inverting a truncated
		normal CDF); the boxplots set infinite lengths to the maximum finite
		observed length. Here $n=100,p=90$. The first five components of
		$\beta$ are set to $\delta_{\text{low}}=0.26$ (top panels) or $\delta_{\text{high}}=0.63$
		(bottom panels) and the remaining components are 0. The lasso penalty
		is either set at the universal threshold value $\sqrt{\frac{2\log p}{n}}\approx0.30$
		(right panels) or at a value approximating the behavior of 10-fold
		cross validation (0.14 and 0.12 respectively for the low and high
		signal cases).}
		\label{fig:full_n100p90}
\end{figure}

\section{Inference for partial regression targets}

\label{sec:partial} Unlike Section \ref{sec:full}, here we are concerned
with situations where the data is used both to select variables and
to determine how the relationship between selected variables and the
response is summarized. For example when the number of variables exceeds
the sample size, the full regression coefficient $\beta^{F}$ is no
longer well defined so we cannot always summarize the effect of variable
$j$ using $\beta_{j}^{F}$. We could still decide a priori to study
only the marginal relationship between $x_{j}$ and $y$ if $j$ is
selected (e.g., the marginal correlation). In this case the data remains
uninvolved in target formation and we can continue to condition only
on $j\in\hat{M}$. However, often, our goal is to understand how the variables
work in concert, to define a target that can meaningfully capture
key aspects of the joint distribution between the outcome and our
predictors.

In high dimensional settings, this necessarily means that we need
to use the data to inform our choice of target. Usually, this occurs when the lasso is used to select a set of variables $\hat{M}=M$ and then the effect of variable $j\in M$ is summarized by $\beta_{j}^{\left(M\right)}$, where 
\[
\beta^{\left(M\right)}=\left(\mathbf{X}_{M}^{\top}\mathbf{X}_{M}\right)^{-1}\mathbf{X}_{M}^{\top}\mu
\]
is the regression of $\mu$ on the selected variables. These are the population partial regression coefficients, and we called this the {\em partial target} earlier.

Now, when variable $j$ is selected, how we choose to summarize its effect depends on what other variables have been selected. It is insufficient to condition on $\left\{ j\in\hat{M}\right\} $; this event does not uniquely determine the quantity we are trying to infer. We need to know the identity of the other selected variables. In other words, two conditions must be satisfied for us to build a confidence interval for $\beta_{j}^{\left(M\right)}$: (i) $j\in\hat{M}$ {\em and} (ii) $\hat{M}=M$ (otherwise we might be making an interval for $\beta_{j}^{\left(M'\right)}$ with $M'\neq M$). Therefore, the conditional coverage guarantee we seek is 
\[
	\mathbb{P}\left(\beta_{j}^{\left(M\right)}\notin C_{j}\:\big|\:j\in\hat{M}, \hat{M}=M\right)\leq\alpha.
\]
Compared to (\ref{eq:cover_full}), the additional conditioning on
$\hat{M}=M$ is the cost of target formation. \citet{lee2016} showed how to construct such intervals with the following proposition.

\begin{proposition} \label{prop:partial} For $\eta_{j}=\mathbf{X}_{M}\left(\mathbf{X}_{M}^{\top}\mathbf{X}_{M}\right)^{-1}e_{j}$,
	the conditional distribution $\eta_{j}^{\top}y\:|\:j\in\hat{M},\hat{M}=M,\nu_{j}$ is $\mathcal{N}\left(\beta_{j}^{\left(M\right)},\sigma^{2}\lVert\eta_{j}\rVert_{2}^{2}\right)$
	truncated to a disjoint union of intervals $\cup_{s}\left[l_{s},u_{s}\right]$ where the union is over $s\in\left\{ -1,1\right\} ^{\left|M\right|}$.
\end{proposition}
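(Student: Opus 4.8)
The plan is to reduce the proposition to the polyhedral-selection fact already recalled in Section~\ref{sec:geninference}, applied separately on each of the $2^{|M|}$ sign strata of the event $\{\hat M = M\}$, and then to take the union of the resulting intervals. Since $j\in M$, the conditioning event $\{j\in\hat M,\hat M=M\}$ is just $\{\hat M=M\}$, and by a.s.\ uniqueness of the lasso solution and of its active-set sign vector we may write it as the disjoint union $\{\hat M=M\}=\bigsqcup_{s\in\{-1,1\}^{|M|}}\{\hat M=M,\hat s=s\}$. Each stratum is a polyhedron $\{\bA_s y\le b_s\}$, with $\bA_s$ and $b_s$ read off the KKT conditions exactly as in the constructions of \citet{lee2016,TTLT2016}.

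On each stratum I would run the Lee et~al.\ reduction. Using the same decomposition as in Proposition~\ref{prop:full}, write $y=z_j c_j+\nu_j$ with $z_j=\eta_j^{\top}y$, $c_j=\eta_j/\lVert\eta_j\rVert_2^2$, and $\nu_j=(\mathbf I_n-\eta_j\eta_j^{\top}/\lVert\eta_j\rVert_2^2)y$ held fixed. Substituting, $\bA_s y\le b_s$ becomes the scalar system $(\bA_s c_j)\,z_j\le b_s-\bA_s\nu_j$; each coordinate contributes an upper bound on $z_j$ (when the corresponding entry of $\bA_s c_j$ is positive), a lower bound (when negative), or no constraint (when zero), so the feasible set of $z_j$ is an interval $[l_s,u_s]$ --- possibly empty, possibly one-sided --- whose endpoints are the $\mathcal V^{-},\mathcal V^{+}$-type functions of $\bA_s,b_s,\nu_j$ appearing in~(\ref{eqn:poly}). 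Intersecting the line $\{z_j c_j+\nu_j:z_j\in\mathbb R\}$ with the disjoint strata shows the $[l_s,u_s]$ are pairwise disjoint (their interiors are disjoint), and their union is exactly the set of $z_j$ consistent with $\{\hat M=M\}$ given $\nu_j$.

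It then remains to identify the untruncated law. Because $y$ is Gaussian and $\nu_j$ is the residual of the orthogonal projection of $y$ onto $\eta_j$, the pair $(\eta_j^{\top}y,\nu_j)$ has zero cross-covariance, hence is independent; conditional on $\nu_j$, $\eta_j^{\top}y\sim\mathcal N(\eta_j^{\top}\mu,\sigma^2\lVert\eta_j\rVert_2^2)$, and $\eta_j^{\top}\mu=e_j^{\top}(\mathbf X_M^{\top}\mathbf X_M)^{-1}\mathbf X_M^{\top}\mu=\beta_j^{(M)}$. Combining this with the truncation set $\cup_s[l_s,u_s]$ from the previous step yields the claimed $\mathcal N(\beta_j^{(M)},\sigma^2\lVert\eta_j\rVert_2^2)$ truncated to $\cup_s[l_s,u_s]$.

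The argument is mostly bookkeeping and there is no single hard step; the point worth flagging --- and the reason this statement is messier than Proposition~\ref{prop:full} --- is that now $\mathbf X^{\top}\eta_j\neq e_j$ in general (only $\mathbf X_M^{\top}\eta_j=e_j$). Moving $y$ along $\eta_j$ therefore perturbs the subgradient/KKT inequalities for the inactive variables as well, so $\{\hat M=M\}$ does not collapse to two intervals as in the full-target case; one genuinely has to carry the sign vector and union over all $2^{|M|}$ of them. Making that union \emph{computable} (rather than merely writing it down) is precisely the obstacle addressed in the remainder of Section~\ref{sec:partial}.
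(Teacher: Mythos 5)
Your argument is correct and is essentially the paper's own route: the paper (following \citet{lee2016}) likewise decomposes $\{\hat M=M\}$ into the $2^{|M|}$ polyhedral sign strata $\{\hat M=M,\hat s=s\}=\{\mathbf{A}y\le b\}$, intersects each with the line $\{z_jc_j+\nu_j\}$ to get an interval with the explicit $\mathcal{V}^{-},\mathcal{V}^{+}$ endpoints of Appendix~\ref{app:truncation:points}, and takes the union over $s$. Your added bookkeeping (disjointness of the strata, independence of $\eta_j^{\top}y$ and $\nu_j$, and the identification $\eta_j^{\top}\mu=\beta_j^{(M)}$) fills in exactly the steps the paper leaves implicit.
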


The fact that the truncation region is a union of intervals is not
surprising (see Section \ref{sec:general}); what is amazing is that
the interval endpoints $l_{s}$ and $u_{s}$ have explicit formulas
(see Appendix \ref{app:truncation:points}). Proposition \ref{prop:partial} should be compared
to Proposition \ref{prop:full}. The truncation $\left[-\infty,a_{j}\right]\cup\left[b_{j},\infty\right]$
includes values of $\eta_{j}^{\top}y$ for which the lasso selects
a model $N\neq M$ so long as $j\in N$; the truncation to $\cup_{s}\left[l_{s},u_{s}\right]$
further excludes such values of $\eta_{j}^{\top}y$ (it is a subset
of $\left[-\infty,a_{j}\right]\cup\left[b_{j},\infty\right]$). In
fact, even in moderate dimension problems (say $p\approx25$), the
intervals $\left[l_{s},u_{s}\right]$ which fall within a practical
range, say $\left[-10^{2},10^{2}\right]$, are very few (intervals
falling outside this practical range contribute negligibly to p-value
calculations and can be numerically ignored). This is because $\hat{M}$
is quite variable even in moderate dimensions. As we perturb the value
of $\eta_{j}^{\top}y$, it becomes quite likely for one of the $p$
variables to enter or leave the active set.

All this is to say that when we condition on $\hat{M}=M$, the range
of our test statistic $\hat{\beta}_{j}^{\left(M\right)}=\eta_{j}^{\top}y$
is quite severely truncated. This truncation leads to a loss in power,
hence longer intervals. In practice, we find that intervals conditioning
on $\hat{M}=M$ are often unacceptably long. This says that the cost
of target formation is particularly high (relative to variable selection
using the lasso) and we must exercise great caution in how we let
the data inform our queries. As evidence of this claim, one can refer
back to our prostate example (Figure \ref{fig:prostate}) and the
simulation results for the full target (Figure \ref{fig:full_n100p50}).
There, we see that whether we condition on $j\in\hat{M}$ or $\hat{M}=M$
makes a huge difference (the discrepancy is in fact the cost of target
formation!); on the other hand, whether one conditions on the signs
$\hat{s}$ makes much less of a difference.

\subsection{Stable Target Formation} \label{subsec:reg_target}

The above discussion suggests that we cannot afford to let the data overly influence our choice of target. In particular,
choosing to summarize the effect of variable $j$ using $\beta_{j}^{\hat{M}}$
where $\hat{M}$ is chosen by the lasso can lead to uselessly long
intervals. There are two simple remedies for limiting the role of
the data in target formation: 
\begin{itemize}
	\item \emph{Data Splitting}: We can split the data and choose our target based on only a part of the data. 
	\item \emph{Randomization}: We can add some noise to the data and choose our target based on the corrupted data. 
\end{itemize}
Both approaches can lead to drastically shorter intervals. But they
also share some downsides: 
\begin{itemize}
	\item With a different split of the data (or different noise instance), not only will our intervals be different but the very targets of our inference may change. 
	\item The model selected from a sample split or on corrupted data can be much worse than the model selected on the full data. We may get shorter intervals but we are also more likely to be asking the wrong questions!
\end{itemize}
For these reasons, we seek a deterministic approach for limiting the
cost of target formation. As discussed, the problem with summarizing
our effects using $\beta_{j}^{\hat{M}}$ is the instability of $\hat{M}$.
If the target of our research changes too often, that is probably not
a good thing!

Intuitively, we want to use a regularized/more stable version of $\hat{M}$
to choose our target. We can think of $\hat{M}$ as containing two
types of variables: those with very strong signals and those with
marginal signals (whose significance is ambiguous). We will call the
first type of variable \emph{high value targets} (these are variables
we would not want to miss out on) and the second type of variable
\emph{low value targets.} When we perturb the data, it is the low
value targets that are susceptible to entering or leaving the lasso
active set. The reason why target formation is so costly is because
we have allowed low value targets to participate equally as high value
targets in determining our target. The solution seems straightforward
then: allow only the high value targets to influence our choice of
target.

We will choose a subset $\hat{H}\subset\hat{M}$ of high value targets (see details below). How we choose to summarize the effect of a variable $j\in\hat{M}$ depends on whether $j$ is a high value target: 
\begin{itemize}
	\item \emph{High Value}: We summarize the effect of $j$ using $\beta_{j}^{\hat{H}}$
	where 
	\[
	\beta^{\hat{H}}=\left(\mathbf{X}_{\hat{H}}^{\top}\mathbf{X}_{\hat{H}}\right)^{-1}\mathbf{X}_{\hat{H}}^{\top}\mu.
	\]
	So our choice of target is fully adaptive for high value targets. 
	\item \emph{Low Value}: If variable $j$ is selected by the lasso but is not deemed a high value target, we summarize its effect via $\beta_{j}^{\hat{H}\cup\{j\}}$ where
	\[
	\beta^{\hat{H}\cup\{j\}}=\left(\mathbf{X}_{\hat{H}\cup\{j\}}^{\top}\mathbf{X}_{\hat{H}\cup\{j\}}\right)^{-1}\mathbf{X}_{\hat{H}\cup\{j\}}^{\top}\mu
	\]
	and $\mathbf{X}_{\hat{H}\cup\{j\}}$ is the matrix containing the high value targets as well as variable $j$. The coefficient $\beta^{\hat{H}\cup\{j\}}_j$ is the effect of variable $j$ after partialing out the effect of the high value targets, i.e., it allows us to ask the question whether variable $j$ contributes any explanatory power beyond the variables in $\hat{H}$. 
\end{itemize}
One may be worried whether we have greatly sacrificed interpretability by defining our inferential targets in this way. For a variable $j$, we seek its regression coefficient after partialing out the effect of variables in $\hat{H}$; in contrast, the usual ``partial'' target seeks the regression coefficient of $j$ after partialing out the effect of variables in $\hat{M}$. Note that these regression coefficients should be quite similar in magnitude if (i) our low value targets
in $\hat{M}\setminus\hat{H}$ have small coefficients or (ii) if the active variables in $\hat{M}$ are not too highly correlated. But of course, if we define ``high value'' targets appropriately, then (i) should occur naturally; and the fact that the lasso usually selects only one out of a group of highly correlated variables makes (ii) likely as well. Thus, we should expect expect the inferential target we define
above to be quite close to the familiar partial regression coefficient based on $\hat{M}$; re-examining again the prostate data (Figure \ref{fig:prostate}), this turns out to be the case (the MLEs for the targets defined above are very close to the MLEs of the partial regression coefficients).

The chief effect of the above construction is that our choice of target,
denoted $\theta^{\hat{H}}$, depends not on $\hat{M}$ but on $\hat{H}$
which is much more stable. We construct a confidence interval for
$\theta_{j}^{H}$ whenever (i) $j\in\hat{M}$ and (ii) $\hat{H}=H$.
Hence the conditional confidence guarantee that we seek is 
\[
	\mathbb{P}\left(\theta_{j}^{H}\in C_{j}\:\big|\:j\in\hat{M},\hat{H}=H\right)\leq\alpha.
\]
Our notation serves to emphasize that: 
\begin{itemize}
	\item The less stable $\hat{M}$ is used for variable selection, i.e., choosing the variables whose effects we want to summarize. 
	\item The more stable $\hat{H}$ is used for target formation, i.e., determining how the effects of the chosen variables are to be summarized. This makes sense since target formation is the more costly component of hypothesis generation. 
\end{itemize}
We will see that by stabilizing the target formation process, the
resulting intervals are much shorter especially the intervals for
high value targets.

\subsection{How to Define a High Value Target?}

From among the selected variables $\hat{M}$, how should we choose
the subset of high value targets $\hat{H}$? We introduce two methods:
the stable-$\ell_{1}$ approach and the stable-$t$ approach. In fact,
there are many possible proposals for creating a more stable version
of $\hat{M}$. We found the stable-$\ell_{1}$ and stable-$t$ approaches
to be a good compromise between interpretability and computational
feasibility and one that we hope will appeal to a wide group of practitioners. However, it should be clear that our techniques can be easily modified to suit other definitions of stability.

\bigskip{}

\textbf{Stable}-$\ell_{1}$: We saw in Section \ref{subsec:full_sim}
that the $\TZ_{Ms}$ intervals perform better when the lasso penalty
parameter is high (say near the universal threshold); this is because
the active set tends to be more stable for larger penalties. Thus,
a natural approach to constructing $\hat{H}$ is simply to let it
be the lasso active set but with a higher penalty parameter, $\lambda_{\text{high}}$,
than was used to select $\hat{M}$. In this case, there is no guarantee
that $\hat{H}\subset\hat{M}$ (though it often is) but the above development does not require this. The proposal is also attractive because as the following proposition shows, it is computationally no more difficult than conditioning on $\hat{M}=M$. 

\begin{proposition} 
The conditional distribution of $\eta^{\top}y$ given (i) $j\in\hat{M}$, (ii) $\hat{H}=H$, and (iii) $P_{\eta^{\perp}}y$ is the $\mathcal{N}\left(\eta^{\top}\mu,\sigma^{2}\lVert\eta\rVert_{2}^{2}\right)$
	distribution truncated to a set of the form 
	\[
	\left(\left[-\infty,a\right]\cup\left[b,\infty\right]\right)\cap\left(\cup_{s}\left[l_{s},u_{s}\right]\right).
	\]
	The endpoints $a,b$ are determined by the constraint $j\in\hat{M}$
	and the union is over all sign vectors $s\in\left\{-1, 1\right\} ^{\left|\hat{H}\right|}$ such that $\left(H,s\right)$ is compatible with the $\lambda_{\text{high}}$ lasso solution for some element of $\left\{ y':y'=\frac{\eta}{\lVert\eta\rVert_{2}^{2}}\cdot z+P_{\eta^{\perp}}y,z\in\mathbb{R}\right\} $.
\end{proposition}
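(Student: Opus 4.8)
The plan is to combine the two devices already in play: the KKT / soft-thresholding analysis of Proposition~\ref{prop:full} for the constraint $\{j\in\hat M\}$, and the polyhedral line-slicing of Proposition~\ref{prop:partial} and \citet{lee2016} for the constraint $\{\hat H=H\}$, carried out along the \emph{same} one-dimensional slice of the data. First I would record the reduction to that slice. Since $\eta^\top y$ and $P_{\eta^\perp}y$ are jointly Gaussian and uncorrelated ($\Cov(\eta^\top y,P_{\eta^\perp}y)=\sigma^2 P_{\eta^\perp}\eta=0$), they are independent, so $\eta^\top y\mid P_{\eta^\perp}y\sim\mathcal N(\eta^\top\mu,\sigma^2\lVert\eta\rVert_2^2)$ marginally. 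Fixing $P_{\eta^\perp}y=\nu$ and writing $y(z)=\frac{\eta}{\lVert\eta\rVert_2^2}z+\nu$ with $z=\eta^\top y$, every data-determined quantity — in particular the $\lambda$-lasso active set and the $\lambda_{\text{high}}$-lasso active set — is a function of the scalar $z$ alone. Hence the target law is $\mathcal N(\eta^\top\mu,\sigma^2\lVert\eta\rVert_2^2)$ truncated to $\mathcal Z=\{z: j\in\hat M(y(z))\}\cap\{z:\hat H(y(z))=H\}$, and it remains to identify the shape of each factor.

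For the first factor I would rerun the KKT argument from the proof of Proposition~\ref{prop:full} on the $\lambda$-lasso along $y(z)$: when $\hat\beta_j(y(z))=0$ the remaining coefficients $\hat\beta_{-j}(z)$ solve the lasso of $y(z)$ on $\mathbf X_{-j}$, and stationarity for coordinate $j$ reads $|x_j^\top(\mathbf X_{-j}\hat\beta_{-j}(z)-y(z))|\le\lambda$; the set of $z$ satisfying this is an interval $[a,b]$, so $\{j\in\hat M(y(z))\}=[-\infty,a]\cup[b,\infty]$. For the second factor I would apply the argument behind Proposition~\ref{prop:partial} verbatim but with $M$ replaced by $H$ and $\lambda$ by $\lambda_{\text{high}}$: for each sign pattern $s$ the event that the $\lambda_{\text{high}}$-lasso has active set $H$ with signs $s$ is cut out by the sign conditions on its $H$-coefficients (which are affine in $z$) together with the inactive-variable inequalities (also affine in $z$), hence an interval $[l_s,u_s]$, possibly empty; $\{\hat H(y(z))=H\}$ is then the disjoint union of the nonempty ones, the union ranging over exactly those $s$ for which $(H,s)$ is compatible with the $\lambda_{\text{high}}$-lasso at some point of the line. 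Intersecting the two descriptions yields $\mathcal Z=\big([-\infty,a]\cup[b,\infty]\big)\cap\big(\cup_s[l_s,u_s]\big)$, which is the assertion.

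The main obstacle is the first factor. Unlike in Proposition~\ref{prop:full}, the direction here is $\eta=\mathbf X_H(\mathbf X_H^\top\mathbf X_H)^{-1}e_j$ (or $\mathbf X_{H\cup\{j\}}(\mathbf X_{H\cup\{j\}}^\top\mathbf X_{H\cup\{j\}})^{-1}e_j$ for a low-value target), which is orthogonal only to the columns in $H$, not to every column of $\mathbf X$; moving along $\eta$ therefore perturbs the $\lambda$-lasso fit of the variables outside $H$, so $\hat\beta_{-j}(z)$, and hence $x_j^\top(\mathbf X_{-j}\hat\beta_{-j}(z)-y(z))$, is only piecewise-affine in $z$ (in Proposition~\ref{prop:full}, where $\eta_j$ is the full-model direction, $\hat\beta_{-j}$ does not depend on $z$ at all and this quantity is exactly affine). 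The work is to verify that this piecewise behavior still leaves $\{z:\hat\beta_j(y(z))=0\}$ a single interval; should that fail in some configuration, the argument survives with the first factor a finite union of intervals at the price of a less tidy statement. The remaining, purely bookkeeping, step is to produce closed forms for $a,b$ and $l_s,u_s$ and to enumerate the compatible sign vectors efficiently — exactly the computation in Appendix~\ref{app:truncation:points}, which is what makes this construction ``computationally no more difficult than conditioning on $\hat M=M$.''
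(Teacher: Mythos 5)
Your plan coincides with the paper's own (one-sentence) justification: the result is asserted there to be ``a simple consequence'' of Propositions \ref{prop:full} and \ref{prop:partial}, which is exactly the combination along the line $y(z)=\nu+z\,\eta/\lVert\eta\rVert_2^2$ that you carry out. Your treatment of the second factor is correct and is what the paper intends: for each sign vector $s$ the event $\{\hat H=H,\hat s=s\}$ for the $\lambda_{\text{high}}$ lasso is polyhedral, so its slice along the line is an interval $[l_s,u_s]$ with endpoints given by the formulas of Appendix \ref{app:truncation:points}, and $\{\hat H=H\}$ is the union over the compatible $s$, exactly as in Proposition \ref{prop:partial}.

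The obstacle you flag in the first factor is genuine, and it is a gap in the statement rather than in your argument. The proof of Proposition \ref{prop:full} hinges on $\mathbf X^{\top}c_j=e_j/\lVert\eta_j\rVert_2^2$, i.e., on $\eta_j$ being orthogonal to every column of $\mathbf X$ except $x_j$; only then is $\hat\beta_{-j}$ independent of $z$, so that the subgradient condition for coordinate $j$ is affine in $z$ and $\{j\in\hat M\}$ is the complement of a single interval. For the stable targets the direction is $\eta=\mathbf X_H(\mathbf X_H^{\top}\mathbf X_H)^{-1}e_j$ (or its $H\cup\{j\}$ analogue), which is orthogonal only to the columns in $H$, so moving along the line perturbs the lasso fit of the variables outside $H$ and that argument breaks; the paper itself concedes precisely this failure mode in the paragraph following Proposition \ref{prop:full} (in the $p>n$ discussion), where it admits that ``the proof of Proposition \ref{prop:full} no longer holds.'' The correct general description of the first factor is a finite union of intervals, namely $\bigcup_{M\ni j}\bigcup_{s}\{\hat M=M,\hat s=s\}$ sliced along the line, computable by enumeration of $(M,s)$ pairs or by the pathwise algorithm of Section \ref{sec:general}. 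Your fallback to ``a finite union of intervals at the price of a less tidy statement'' is the right fix; it changes only the form of the endpoints $a,b$, not the validity of the resulting truncated-normal conditional inference.
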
 
The proposition is a simple consequence of propositions
\ref{prop:full} and \ref{prop:partial}. Conditioning on $\hat{H}=H$
truncates $\eta^{\top}y$ to a union of intervals. The larger the
penalty parameter corresponding to $\hat{H}$, the larger this union
of intervals tends to be, i.e., the less severe the truncation. We
usually choose $\lambda_{\text{high}}$ close to the universal threshold
value of $\sqrt{\frac{2\log p}{n}}$.

\bigskip{}

\textbf{Stable-$t$}: Our second approach is motivated by the fact
that while the Bonferroni correction does not guarantee valid coverage
in all situations, it usually works pretty well. This suggests taking
$\hat{H}$ to be those variables in $\hat{M}$ with $t$-statistics
surpassing a Bonferroni corrected threshold. We first fit a OLS model
using all the variables in $\hat{M}$, i.e., 
\[
\hat{\beta}^{\hat{M}}=\left(\mathbf{X}_{\hat{M}}^{\top}\mathbf{X}_{\hat{M}}\right)^{-1}\mathbf{X}_{\hat{M}}^{\top}y
\]
and allow $j$ to be a high value target if the $t$-statistic for
$\hat{\beta}_{j}^{\hat{M}}$ is large, i.e., if 
\[
\left|\frac{\hat{\beta}_{j}^{\hat{M}}}{\sigma\left(\mathbf{X}_{\hat{M}}^{\top}\mathbf{X}_{\hat{M}}\right)_{jj}^{-1}}\right|>c
\]
for some cutoff $c$. If we choose $c$ by Bonferroni, it has the
form $\left|\Phi^{-1}\left(\frac{\alpha}{2p}\right)\right|\approx\sqrt{2\log p}$ for large $p$; for such a choice, the stable-$t$ method behaves similar to the stable-$\ell_{1}$ method (with $\lambda_{\text{high}}=\sqrt{\frac{2\log p}{n}}$) in many scenarios. The Bonferroni corrected cutoff ensures that we remove most variables whose importance is ambiguous. Once again, this proposal is attractive because it is computationally no more costly
than conditioning on $\hat{M}=M$. 

\begin{proposition} The conditional
	distribution of $\eta^{\top}y$ given (i) $j\in\hat{M}$, (ii) $\hat{H}=H$, and (iii) $P_{\eta^{\perp}}y$ is the $\mathcal{N}\left(\eta^{\top}\mu,\sigma^{2}\lVert\eta\rVert_{2}^{2}\right)$
	distribution truncated to a set of the form 
	\[
	\left(\left[-\infty,a\right]\cup\left[b,\infty\right]\right)\cap\left(\cup_{k}\left[l_{k},u_{k}\right]\right)
	\]
	where the endpoints $l_{k}$ and $u_{k}$ have explicit formulae.
\end{proposition}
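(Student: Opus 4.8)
The plan is to reduce everything to the same one-dimensional picture used in the proofs of Propositions \ref{prop:full} and \ref{prop:partial}. Conditioning on $P_{\eta^{\perp}}y$ pins $y$ to the line $y(z) := \frac{\eta}{\lVert\eta\rVert_2^2}\,z + P_{\eta^{\perp}}y$, $z\in\mathbb{R}$; since $y$ is Gaussian, $z=\eta^{\top}y\sim\mathcal{N}(\eta^{\top}\mu,\sigma^2\lVert\eta\rVert_2^2)$ and is independent of $P_{\eta^{\perp}}y$. Hence the law of $\eta^{\top}y$ given (i)--(iii) is exactly this Gaussian restricted to the set $\cZ := \{\,z : j\in\hat M(y(z)),\ \hat H(y(z))=H\,\}$, and the whole task is to show $\cZ$ has the asserted shape with endpoints in closed form.

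First I would dispose of the constraint $j\in\hat M$. Examining the KKT stationarity condition of the $\lambda$-lasso along $y(z)$, exactly as in the proof of Proposition \ref{prop:full}, shows that $\{z:j\in\hat M(y(z))\}$ has the form $[-\infty,a]\cup[b,\infty]$ with $a,b$ explicit in $\lVert\eta\rVert_2^2$, $\lambda$, $x_j$, and $P_{\eta^{\perp}}y$ (when $\eta$ is orthogonal to $\mathbf{X}_{-j}$ the formulae of Proposition \ref{prop:full} apply verbatim; in general $y\mapsto\hat\beta(y)$ is continuous and piecewise affine, so even without orthogonality this set is a finite union of intervals with endpoints read off the piecewise-linear path, and two half-lines is the generic leading description). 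This supplies the first factor of the truncation set.

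The substantive step is the constraint $\hat H=H$, whose subtlety is that the studentized statistics that \emph{define} $\hat H$ are built from $\mathbf{X}_{\hat M}$, and $\hat M$ itself varies with $z$. To get around this I would stratify the line by the value of $\hat M(y(z))$: since the lasso solution is continuous and piecewise affine in $z$, there are finitely many breakpoints $-\infty=z_0<z_1<\cdots<z_m=\infty$ on whose open subintervals $\hat M(y(z))$ equals a constant $M_i$. On such a subinterval the fit $(\mathbf{X}_{M_i}^{\top}\mathbf{X}_{M_i})^{-1}\mathbf{X}_{M_i}^{\top}y(z)$ is affine in $z$, so each $t$-statistic $t_k(z)$, $k\in M_i$, is affine in $z$. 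Because $\hat H\subseteq\hat M$ by construction, $\{\hat H(y(z))=H\}$ is empty on the subinterval unless $H\subseteq M_i$, and otherwise equals $\bigcap_{k\in H}\{|t_k(z)|>c\}\cap\bigcap_{k\in M_i\setminus H}\{|t_k(z)|\le c\}$, an intersection of sets each of which is a union of at most two intervals of $z$ with endpoints the roots of the linear equations $t_k(z)=\pm c$. Intersecting within each subinterval and taking the union over $i$ shows $\{z:\hat H(y(z))=H\}$ is a finite union $\cup_k[l_k,u_k]$ whose endpoints are, explicitly, either lasso breakpoints $z_i$ or roots of $t_k(z)=\pm c$ on the relevant stratum. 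Intersecting with the two half-lines from the previous step gives $\cZ=([-\infty,a]\cup[b,\infty])\cap(\cup_k[l_k,u_k])$, which is again of the stated form (the intersection only deletes or trims some of the $[l_k,u_k]$).

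I expect the main obstacle to be precisely the apparent circularity just described: ``$\hat H=H$'' is not a priori a system of linear constraints in $z$, because the quantities defining the $t$-statistics depend on the data-driven model $\hat M(y(z))$. The fix -- stratify by $\hat M$ using piecewise-linearity of the lasso path, then use linearity of least squares within each stratum -- is conceptually the same device that makes Proposition \ref{prop:partial} tractable, and the residual effort is bookkeeping: writing the breakpoints and the $t_k(z)=\pm c$ roots explicitly and checking that only finitely many pieces arise. A secondary point worth care is that, away from the orthogonal case of Proposition \ref{prop:full}, the factor $\{j\in\hat M\}$ may itself be a union of more than two intervals; this changes only the number of pieces in the final union, not its form, so the explicit-formula conclusion is unaffected.
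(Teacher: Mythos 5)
Your argument is correct and follows essentially the same route as the paper: restrict to the line determined by $P_{\eta^{\perp}}y$, stratify by the value of $\hat M$ (the paper writes this as the decomposition $\bigcup_{M:\,j\in M}\{\hat M=M,\ \hat H(M)=H\}$ and invokes the known interval structure of $\{\hat M=M\}$), and then use the fact that the $t$-statistics from OLS of $y'=cz+\nu$ on $\mathbf{X}_{M}$ are affine in $z$ on each stratum to read off the endpoints as roots of $t_k(z)=\pm c$. Your closing caveat, that away from the orthogonal setting the factor $\{j\in\hat M\}$ may be a union of more than two intervals, is a fair observation and does not affect the conclusion that the truncation set is a computable finite union of intervals.
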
 

\begin{proof} Below, all statements are implicitly conditioned on $P_{\eta^{\perp}}y$ which restricts the data to a line. Note that $\hat H = \hat H (\hat M)$ is a function of $\hat M$ and that
\[
\left\{\hat H(\hat M) = H, j \in \hat M\right\} = \bigcup_{M: j \in M} 
\left\{\hat H(M) = H, \hat M = M\right\}.
\]
We already know $\{ \hat M = M\}$ is a union of intervals whose endpoints have known formulae. To finish characterizing the LHS of the above expression, it then suffices to show that $\{\hat H(M)=H\}$ is a union of intervals and compute the interval endpoints.

Fix $M$. Let us write 
\[
y'=cz+\nu\qquad c=\frac{\eta}{\lVert\eta\rVert_{2}^{2}}\qquad\nu=P_{\eta^{\perp}}y
\]
and regress $y'$ on $\mathbf{X}_{M}$ . The $t$-statistic for
the $k$th regression coefficient has the form 
\[
t_{k}=\frac{y'^{\top}\tilde{x}_{k}}{\sigma}\qquad\tilde{x}_{k}=\frac{\mathbf{X}_{M}\left(\mathbf{X}_{M}^{\top}\mathbf{X}_{M}\right)^{-1}e_{k}}{\lVert\mathbf{X}_{M}\left(\mathbf{X}_{M}^{\top}\mathbf{X}_{M}\right)^{-1}e_{k}\rVert_{2}}.
\]
Note that $\left|t_{k}\right|<\xi\sigma$ if $\left|z\cdot c^{\top}\tilde{x}_{k}+\nu^{\top}\tilde{x}_{k}\right|<\xi\sigma$, or equivalently 
\[
z\in\left[c_{k},d_{k}\right]\qquad c_{k}=\frac{-\text{sign}\left(c^{\top}\tilde{x}_{k}\right)\cdot\nu^{\top}\tilde{x}_{k}-\xi\sigma}{\left|c^{\top}\tilde{x}_{k}\right|}\qquad d_{k}=\frac{-\text{sign}\left(c^{\top}\tilde{x}_{k}\right)\cdot\nu^{\top}\tilde{x}_{k}+\xi\sigma}{\left|c^{\top}\tilde{x}_{k}\right|}.
\]
Similarly, $\left|t_{k}\right|>\xi\sigma$ if 
\[
z\in\left[-\infty,c_{k}\right]\cup\left[d_{k},\infty\right].
\]
For $\hat{H}(M)=H$, we require $\left|t_{k}\right|>\xi\sigma$ for all
$k\in H$ and $\left|t_{k}\right|<\xi\sigma$ for all $k\in M\setminus H$.
Thus we require 
\[
z\in\left(\underset{k\in H}{\bigcap}\left(\left[-\infty,c_{k}\right]\cup\left[d_{k},\infty\right]\right)\right)\cap\left(\underset{k\in M\setminus H}{\bigcap}\left[c_{k},d_{k}\right]\right)
\]
which simplifies to a union of intervals.  \end{proof}

\subsection{Prostate data revisited}

The right panel of Figure \ref{fig:prostate} shows the result of applying the stable-$t$ approach to the prostate data. Three high-value targets have been identified,  and for all targets, the  $\TZ_{\text{stab}-t}$  intervals are shorter than the $\TZ_{Ms}$ intervals of \citet{lee2016}.

\section{Simulations for the  partial regression target}\label{sec:sim_partial}

We test our proposals across a variety of simulation settings; we
focus on the case where the data informs both variable selection and
target formation (for examples involving only variable selection,
see Section \ref{subsec:full_sim}). 
\begin{itemize}
	\item We compare six methods: naive intervals, Bonferroni corrected, $\TZ_{Ms}$, $\TZ_{M},\TZ_{\text{stab}-t}$ and $\TZ_{\text{stab}-\ell_1}$. Recall that $\TZ_{Ms}$ condition on $\left(\hat{M},\hat{s}\right)$ while $\TZ_{M}$ conditions only on $\hat{M}$. 
	\item We consider four possibilities for $\left(n,p\right)$: $\left(100,50\right)$, $\left(100,125\right)$, $\left(100,250\right)$ and $\left(100,1250\right)$. 
	\item We generate the outcome as $y_{i}=x_{i}^{\top}\beta+\varepsilon_{i}$, $i=1,\ldots,n$. For the moment we let $x_{i}\sim \mathcal{N}\left(0,\mathbf{I}_{p}\right)$ and $\varepsilon_{i}\sim \mathcal{N}\left(0,1\right)$; however, results for the case of correlated predictors and non-normal errors are reported in Appendices \ref{app:correlated} and \ref{app:violated:assumptions}.
	\item The first $k=5$ components of $\beta$ are non-zero, the rest are set to 0. The non-null components are all set to either $\delta_{\text{low}}$ or $\delta$$_{\text{high}}$. For each $\left(n,p\right)$ configuration, we chose $\delta_{\text{low}}$ and $\delta_{\text{high}}$ based on the null distribution of $\max_{j}n^{-1}\left|x_{j}^{\top}y\right|$ following the rules detailed in Section \ref{subsec:full_sim}. 
	\item Following Section \ref{subsec:full_sim}, we consider two values of $\lambda$: one based on CV and one based on the universal threshold value of $\sqrt{\frac{2\log p}{n}}$. 
	\item For the stable-$\ell_{1}$ method, we set $\lambda_{\text{high}}=\sqrt{\frac{2\log p}{n}}$ when $\lambda$ is based on CV and $\lambda_{\text{high}}=1.25\lambda$ when $\lambda$ is already chosen as the universal threshold. 
	\item For the stable-$t$ method, we set $c=\left|\Phi^{-1}\left(\frac{\alpha}{2p}\right)\right|$ where $\alpha=0.1$ (chosen because we are building 90\% confidence
	intervals) 
\end{itemize}
The results for the cases $n=100,p=250$ and $n=100,p=1250$ are presented in Figures \ref{fig:partial_n100p250} and \ref{fig:partial_n100p1250} respectively (results for $n=100,p=50$ and $n=100,p=125$ are given in Appendix \ref{app:partial:target}); Figure \ref{fig:partial_n100p250_null} presents the null case ($\beta=0$). We summarize the key takeaways: 
\begin{enumerate}
	\item Bonferroni adjusted intervals have poor coverage (only 47\% for 90\% confidence interval) when the global null holds.
	\item The $\TZ_{M}$ method improves on the $\TZ_{Ms}$ intervals but as the simulations demonstrate they are often many times wider than the na\"ive or Bonferroni-corrected intervals. In practice, the inferences they deliver would not be useful. This suggests that we need to go beyond minimal conditioning (which in the case of partial targets is $\hat{M}$) and further restrict how data is used for target formation. 
	\item The stable-$\ell_{1}$ and especially the stable-$t$ method produce intervals that are often only slightly longer than the Bonferroni corrected intervals (and in high signal settings can be quite a bit shorter than Bonferroni). In addition, they tend to be more stable numerically, i.e., less chance of an ``infinite'' length interval.
	\item As we saw in Section \ref{subsec:full_sim}, a higher value of $\lambda$ (i.e., using the universal threshold rather than cross validation) helps to make $\hat{M}$ more stable and hence reduce the length of the $\TZ_{Ms}$ and $\TZ_{M}$ intervals. However, we see that in truly high dimensional problems such as when $n=100$ and $p=1250$, the stable-$t$ method is half the length of the $\TZ_{M}$ method
	even when $\lambda$ is selected to be the universal threshold. In addition, many practitioners prefer choosing $\lambda$ by cross-validation in which case use of the stable methods is a necessity if we want reasonable interval lengths. 
\end{enumerate}
In addition to the above results for independent predictors, we also
considered the case of correlated predictors. We explored block-equicorrelation and Toeplitz covariance structures for the predictors with the maximum pairwise correlation set at 0.5 (see Appendix \ref{app:correlated} for details and results).
What we observed was that the stable methods outperform the $\TZ_{M}$ and $\TZ_{Ms}$ intervals by an even larger margin in the presence of strongly correlated predictors.  Intuitively as we increase the correlation between predictors, the set of lasso active variables $\hat{M}$ becomes less stable (as we move in the direction $\eta_{j}$ capturing
the effect of $x_{j}$, we are likely to flip the active/inactive
status of variables correlated with $x_{j}$), hence increasing the
need for stabilization.

It has become dogma that we need to regularize parameter estimates
in high dimensions. Similarly, we hope that these simulations argue
clearly for the need to regularize/stabilize how we choose targets
for inference. It is usually not advisable to form a model using all
variables in the lasso active set even if it is technically possible
to account for such selection when doing inference.

\begin{figure}[hbtp]
	\begin{centering}
		\includegraphics[width=0.33\paperwidth]{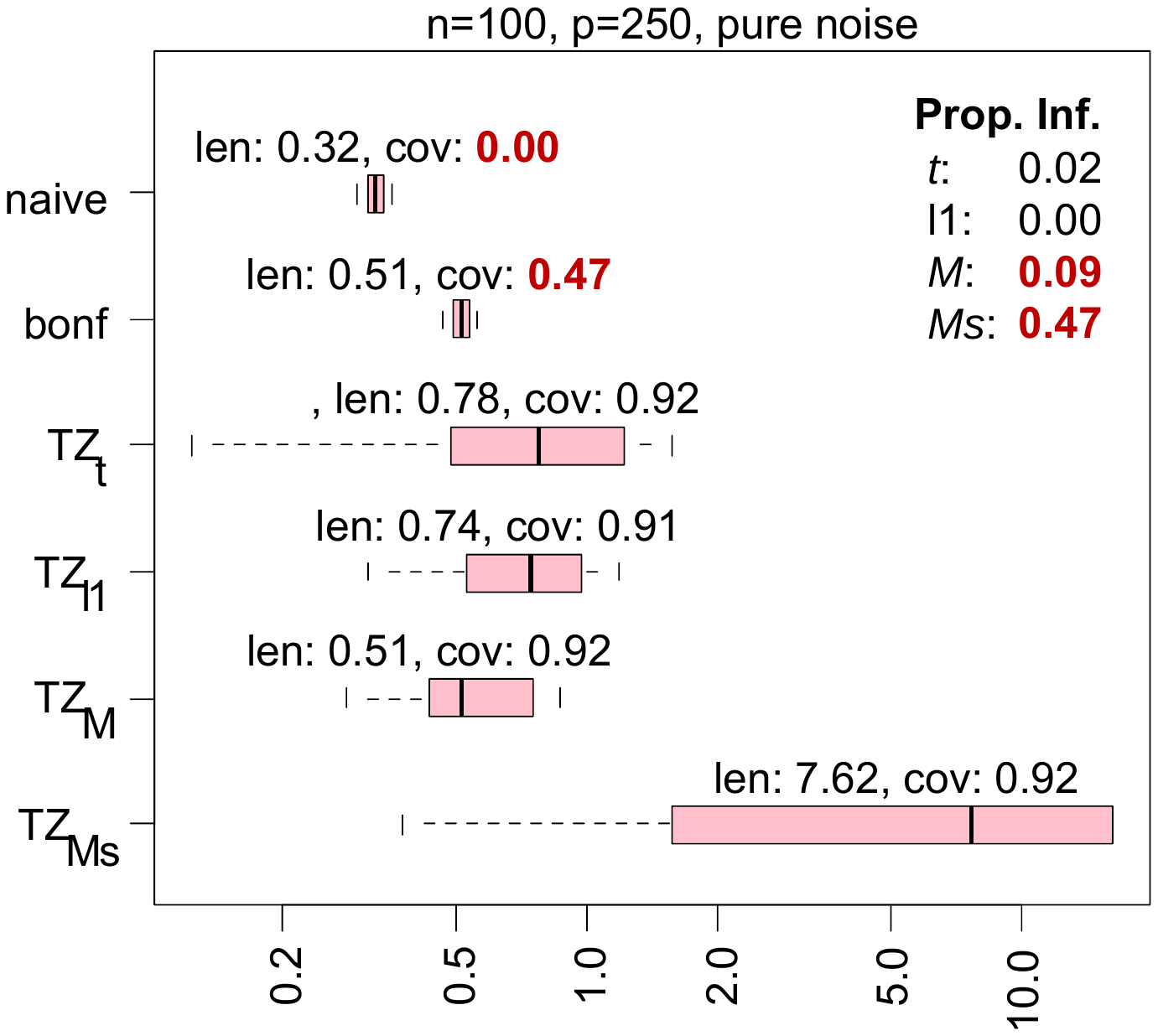}
		\par\end{centering}
	\caption{\em Boxplot of lengths of 90\% confidence intervals for ``partial''
		regression coefficients. Six interval methods are compared: naive
		(ignoring selection), Bonferroni adjusted, $\TZ_{\text{stab}-t}$, $\TZ_{\text{stab}-\ell_1}$, $\TZ_{M}$,
		and $\TZ_{Ms}$. Reported are the median interval length, the empirical
		coverage, and the proportion of ``infinite'' intervals (the infinite
		length results from numerical inaccuracies when inverting a truncated
		normal CDF); the boxplots set infinite lengths to the maximum finite
		observed length. Here, $n=100, p=250$ and there is no signal, $\beta=0$. The lasso penalty parameter $\lambda$ is set to $\sqrt{\frac{2\log(p)}{n}}$.}
		\label{fig:partial_n100p250_null}
	
\end{figure}

\begin{figure}[hbtp]
	\begin{centering}
		\includegraphics[width=0.33\paperwidth]{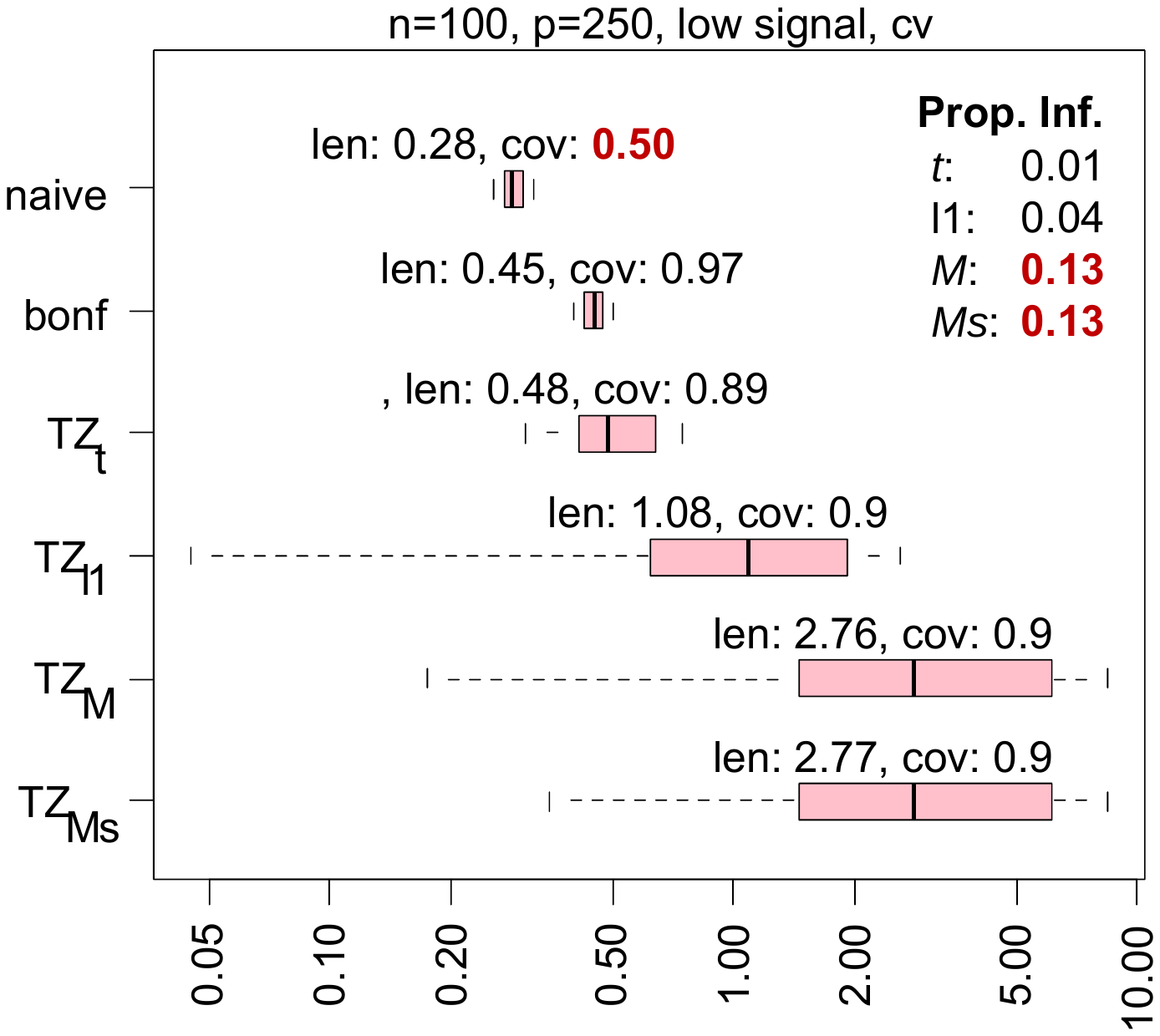}\includegraphics[width=0.33\paperwidth]{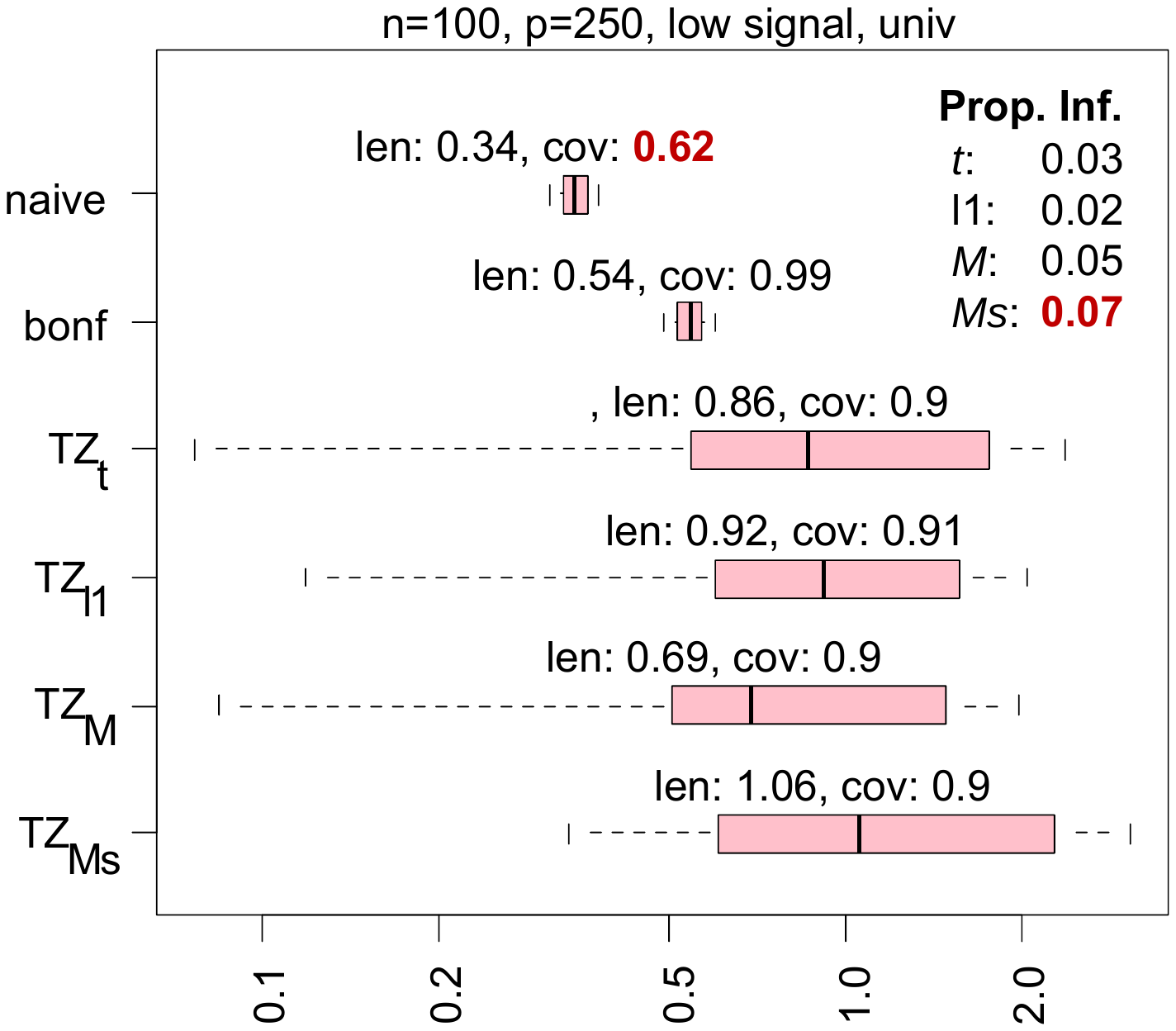} 
		\par\end{centering}
	\begin{centering}
		\includegraphics[width=0.33\paperwidth]{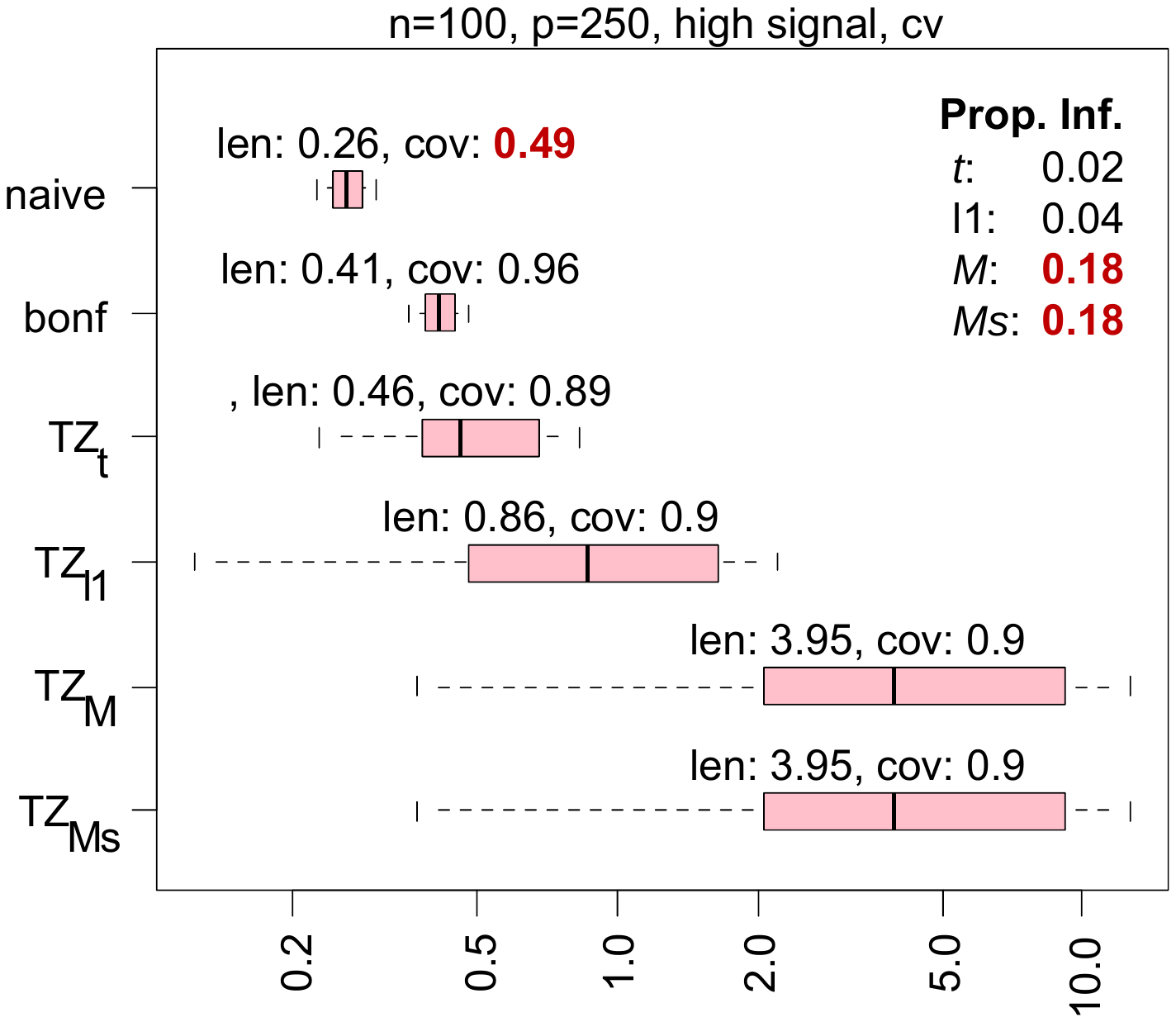}\includegraphics[width=0.33\paperwidth]{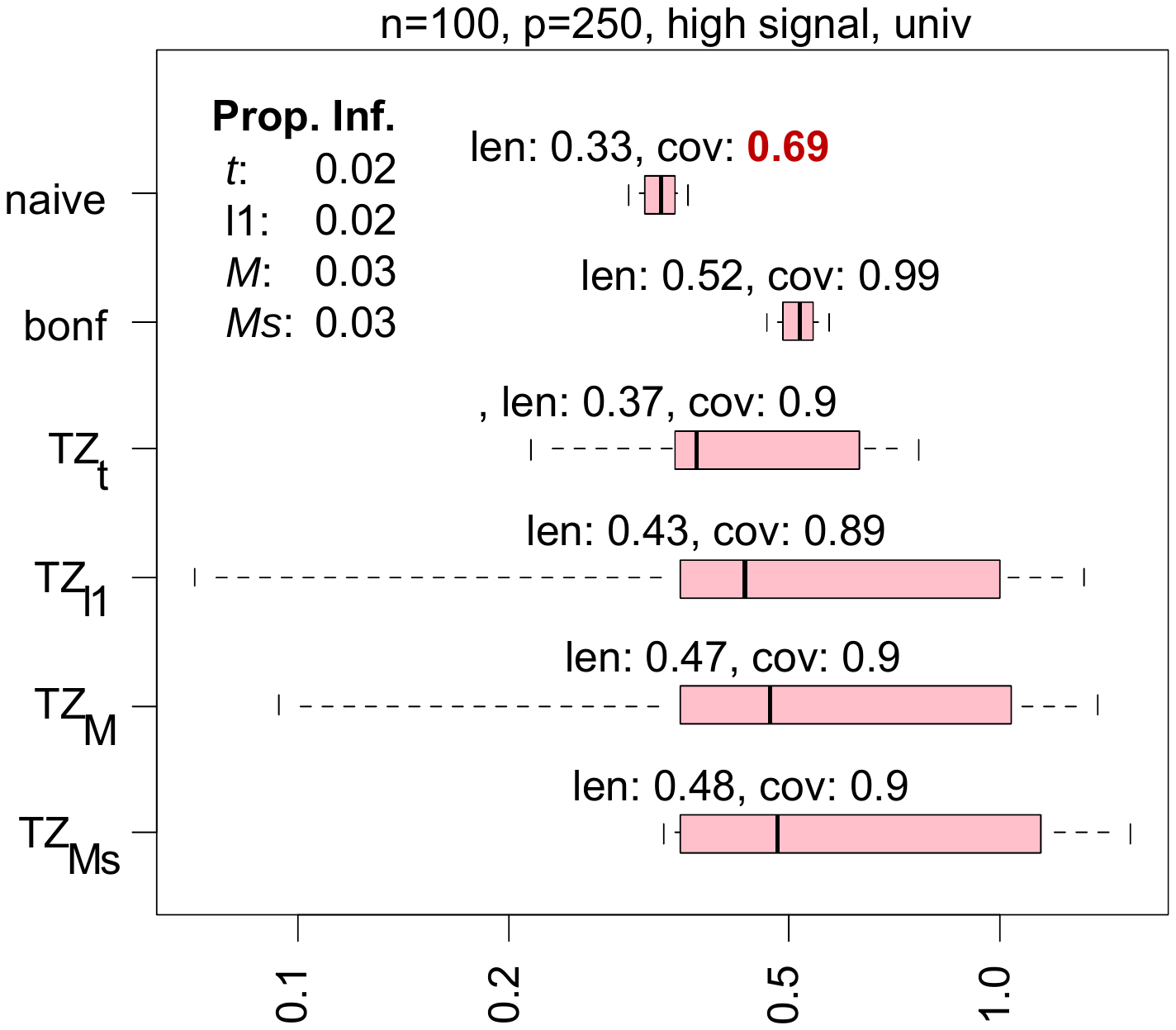} 
		\par\end{centering}
	
	\caption{\em Boxplot of lengths of 90\% confidence intervals for ``partial'' regression coefficients. Six interval methods are compared: naive (ignoring selection), Bonferroni adjusted, $\TZ_{\text{stab}-t}$, $\TZ_{\text{stab}-\ell_1}$, $\TZ_{M}$,
		and $\TZ_{Ms}$. Reported are the median interval length, the empirical coverage, and the proportion of ``infinite'' intervals (the infinite length results from numerical inaccuracies when inverting a truncated normal CDF); the boxplots set infinite lengths to the maximum finite
		observed length. Here $n=100,p=250$. The first five components of
		$\beta$ are set to $\delta_{\text{low}}=0.29$ (top panels) or $\delta_{\text{high}}=0.68$ (bottom panels) and the remaining components are 0. The lasso penalty is either set at the universal threshold value $\sqrt{\frac{2\log p}{n}}\approx0.33$ (right panels) or at a value approximating the behavior of 10-fold cross validation (0.18 and 0.14 respectively for the low and high signal cases).}
		\label{fig:partial_n100p250}
\end{figure}

\begin{figure}[hbtp]
	\begin{centering}
		\includegraphics[width=0.33\paperwidth]{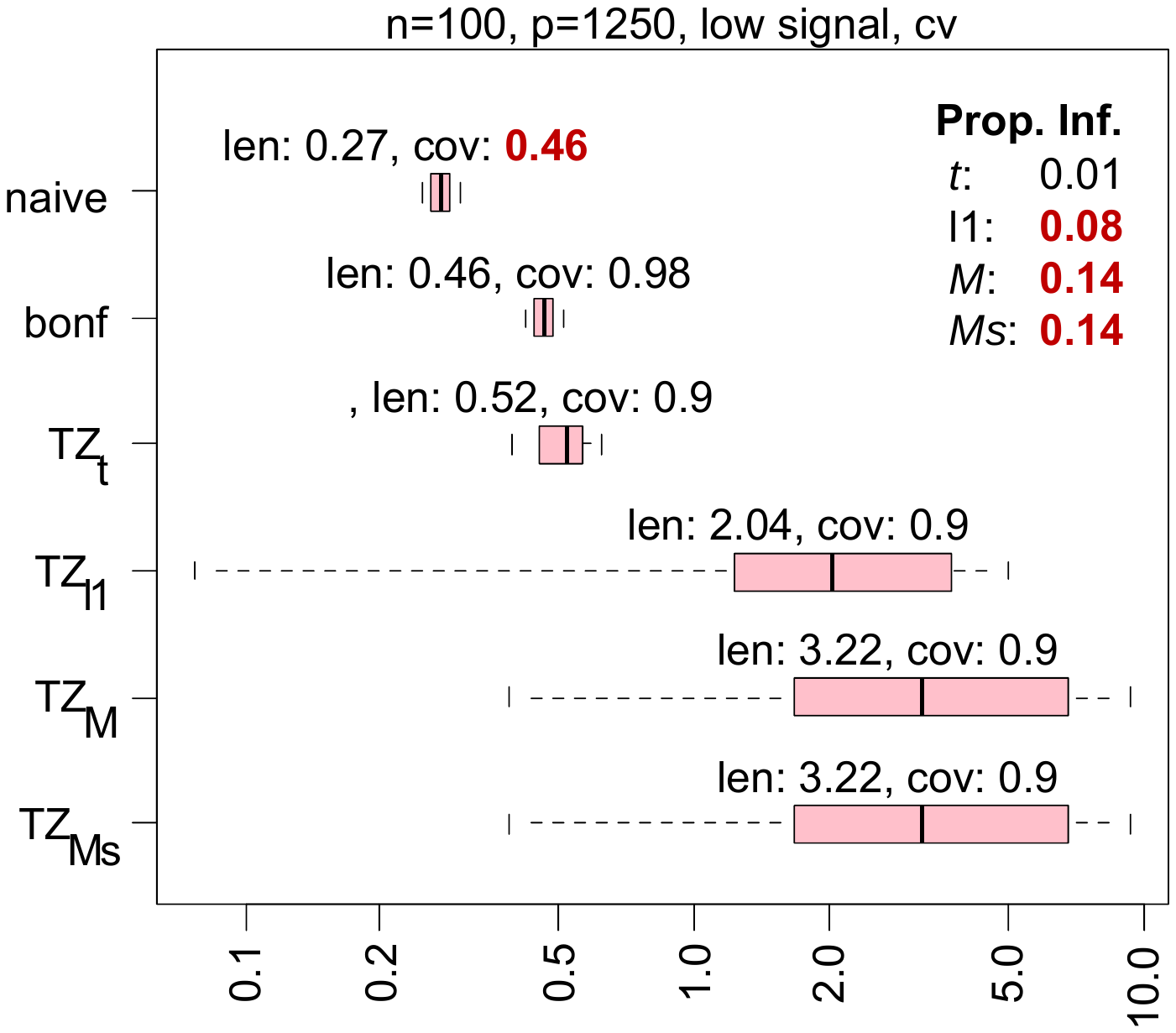}\includegraphics[width=0.33\paperwidth]{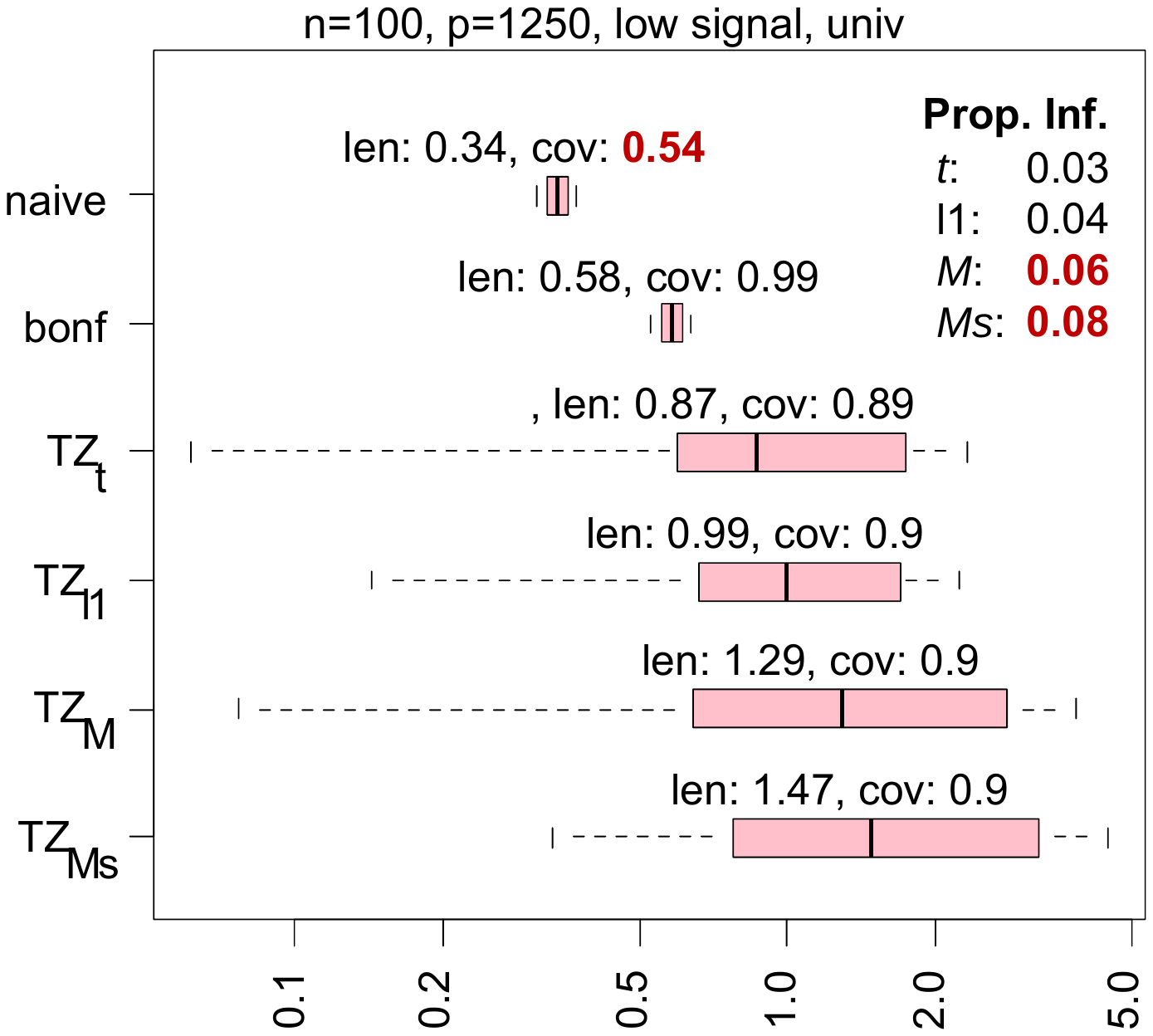} 
		\par\end{centering}
	\begin{centering}
		\includegraphics[width=0.33\paperwidth]{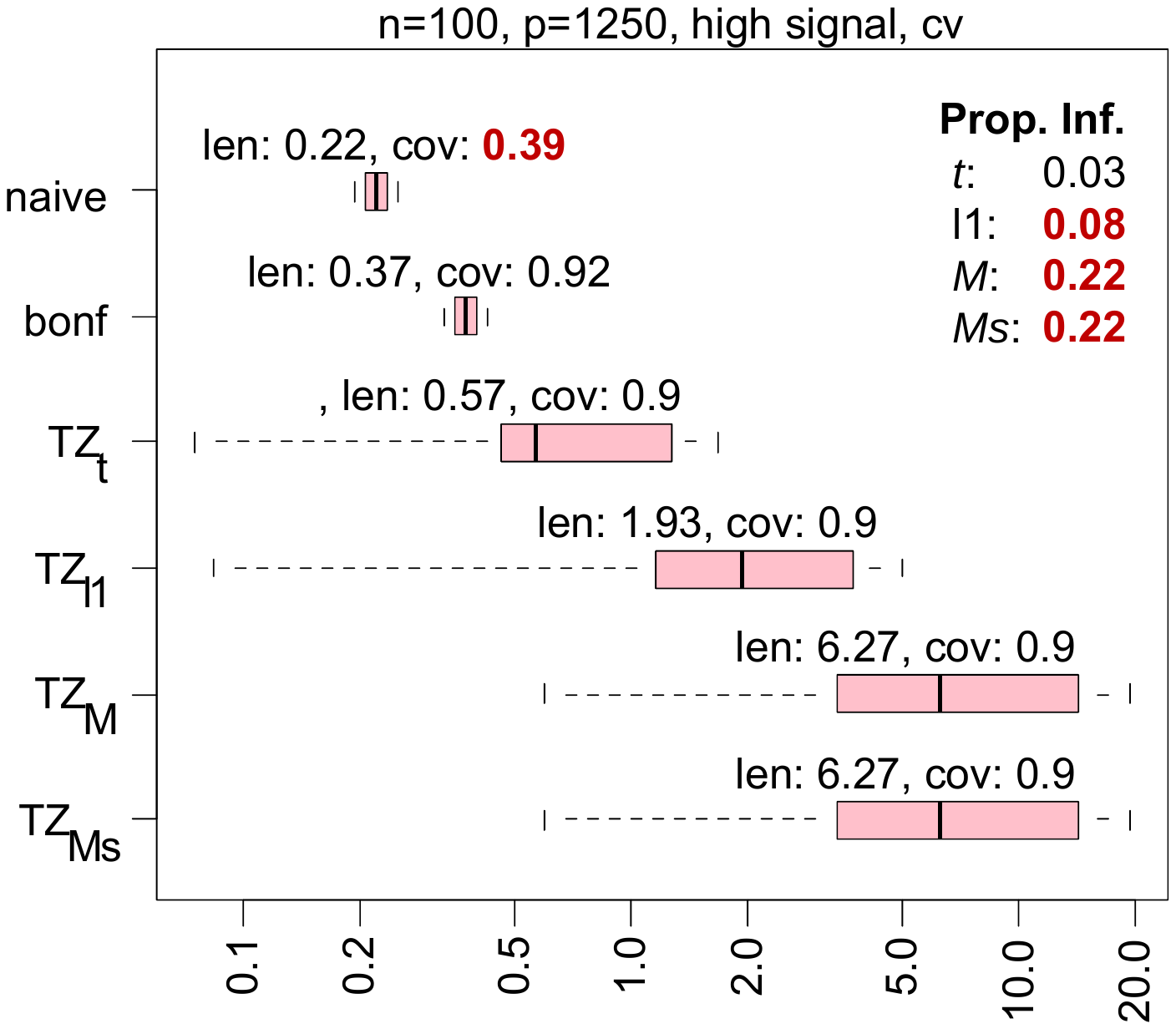}\includegraphics[width=0.33\paperwidth]{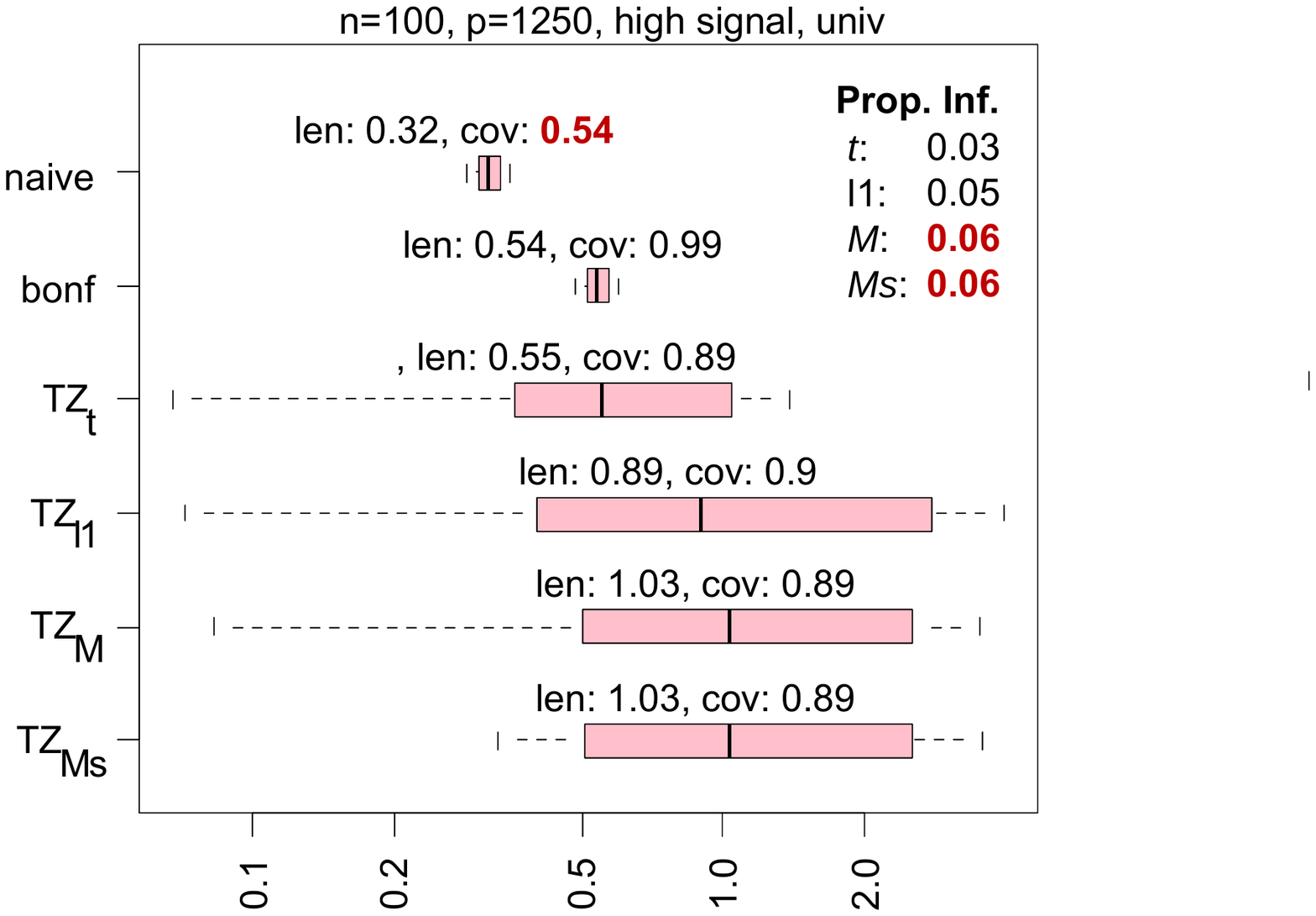} 
		\par\end{centering}
	\caption{\em Boxplot of lengths of 90\% confidence intervals for ``partial'' regression coefficients. Six interval methods are compared: naive
		(ignoring selection), Bonferroni adjusted, $\TZ_{\text{stab}-t}$, $\TZ_{\text{stab}-\ell_1}$, $\TZ_{M}$, and $\TZ_{Ms}$. Reported are the median interval length, the empirical coverage, and the proportion of ``infinite'' intervals (the infinite length results from numerical inaccuracies when inverting a truncated normal CDF); the boxplots set infinite lengths to the maximum finite
		observed length. Here $n=100,p=1250$. The first five components of $\beta$ are set to $\delta_{\text{low}}=0.34$ (top panels) or $\delta_{\text{high}}=0.69$ (bottom panels) and the remaining components are 0. The lasso penalty is either set at the universal threshold value $\sqrt{\frac{2\log p}{n}}\approx 0.38$ (right panels) or at a value approximating the behavior of 10-fold cross validation (0.27 and 0.19 respectively for the low and high signal cases). }
	\label{fig:partial_n100p1250}
	
\end{figure}

\subsection{Performance When Assumptions Are Violated}

In Appendix \ref{app:violated:assumptions}, we investigate whether the good performance of the proposed methods are robust to various violations of our assumptions. We consider non-normality of the errors, the effect of having to estimate
$\sigma^2$, and the effect of a data driven choice of $\lambda$.
We find that the proposed TZ intervals are robust to various model
violations, and the lengths of the stable intervals seem less likely
to be negatively impacted by model violations than the lengths of
the $\TZ_{M}$ and $\TZ_{Ms}$ intervals.

\section{Predicting HIV Drug Resistance} \label{sec:realdata} 

The target of HIV drugs often becomes resistant through mutation. For six nucleotide reverse transcriptase inhibitors
(NRTIs) used to treat HIV, \citet{rhee2003} investigate which mutations are most predictive of drug resistance. We focus on the drug Lamivudine (3TC). The primary response is log susceptibility to drug. The explanatory variables are binary indicators for whether a mutation has occurred at a particular site. The data have $n=1057$ samples and $p=210$ sites.

We fit a lasso model with $\lambda=$0.0124 (chosen via 10-fold cross
validation), yielding 36 active mutations. One of the mutations, p184, had a particularly large effect: its OLS estimate from a regression on all the variables is 0.477 with the accompanying naive 90\% confidence interval being (0.471, 0.485). TZ$_{V}$, TZ$_{M}$ and TZ$_{Ms}$ all produced the same selection adjusted interval (0.437, $\infty$);
the infinite upper bound here is due to numerical inaccuracy from
inverting a truncated Gaussian pivot. When considering the partial
coefficient for p184, the results were qualitatively identical (the
truncated-Z intervals still had an infinite upper bound). The effect
estimates for the remaining 35 active variables are an order of magnitude smaller and their 90\% confidence intervals are plotted in Figure \ref{fig:hiv}. Here are the key points:
\begin{itemize}
	\item The results are consistent with the analysis of the prostate data
	(see Figure \ref{fig:prostate}). The TZ$_{V}$ and TZ$_{\text{stab}-t}$
	intervals tend to be roughly half as long as the TZ$_{M}$ intervals
	and only a bit longer than the naive intervals. TZ$_{M}$ and TZ$_{Ms}$
	frequently produce very long intervals.
	\item Using the stable-$t$ criterion (with threshold set to $\left|\Phi^{-1}\left(\frac{0.1}{2p}\right)\right|=3.5$),
	we selected 6 high value targets: p65, p67, p69, p90, p184 (not pictured),
	and p215. With the exception of p184 discussed above, the TZ$_{V}$
	and TZ$_{\text{stab}-t}$ methods yielded very stable intervals for
	the effects of these high value targets; despite the strong signal
	from these 6 mutations, TZ$_{M}$ and TZ$_{Ms}$ yielded very imprecise
	intervals for p67 (interval contains 0), p215, and p69 (in the case
	of full).
	\item The target that the TZ$_{\text{stab}-t}$ intervals attempt to cover
	is different from the partial regression coefficients (defined with
	respect to active set); see Section \ref{subsec:reg_target}. But
	as the bottom panel of Figure \ref{fig:hiv} shows, they are very
	similar in practice. This is because, as we have discussed above, the
	two targets differ significantly only if the low value targets have
	large coefficients (and are correlated with the high value targets).
	\item TZ$_{\text{stab}-t}$ will sometimes yield longer intervals than TZ$_{M}$
	and TZ$_{Ms}$ (e.g., p215) but this occurs very rarely.
\end{itemize}

\begin{figure}[pbth]
	\begin{centering}
		\includegraphics[width=0.7\paperwidth]{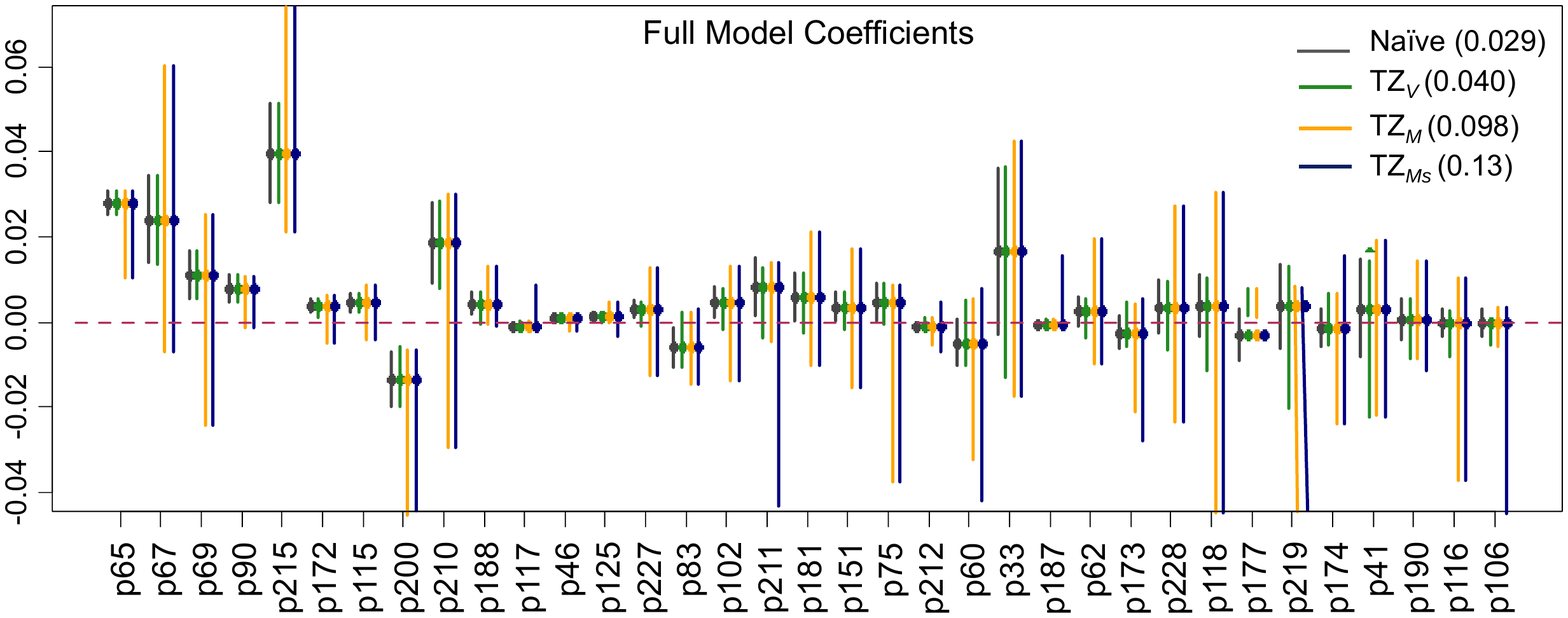}
		\par\end{centering}
	\begin{centering}
		\includegraphics[width=0.7\paperwidth]{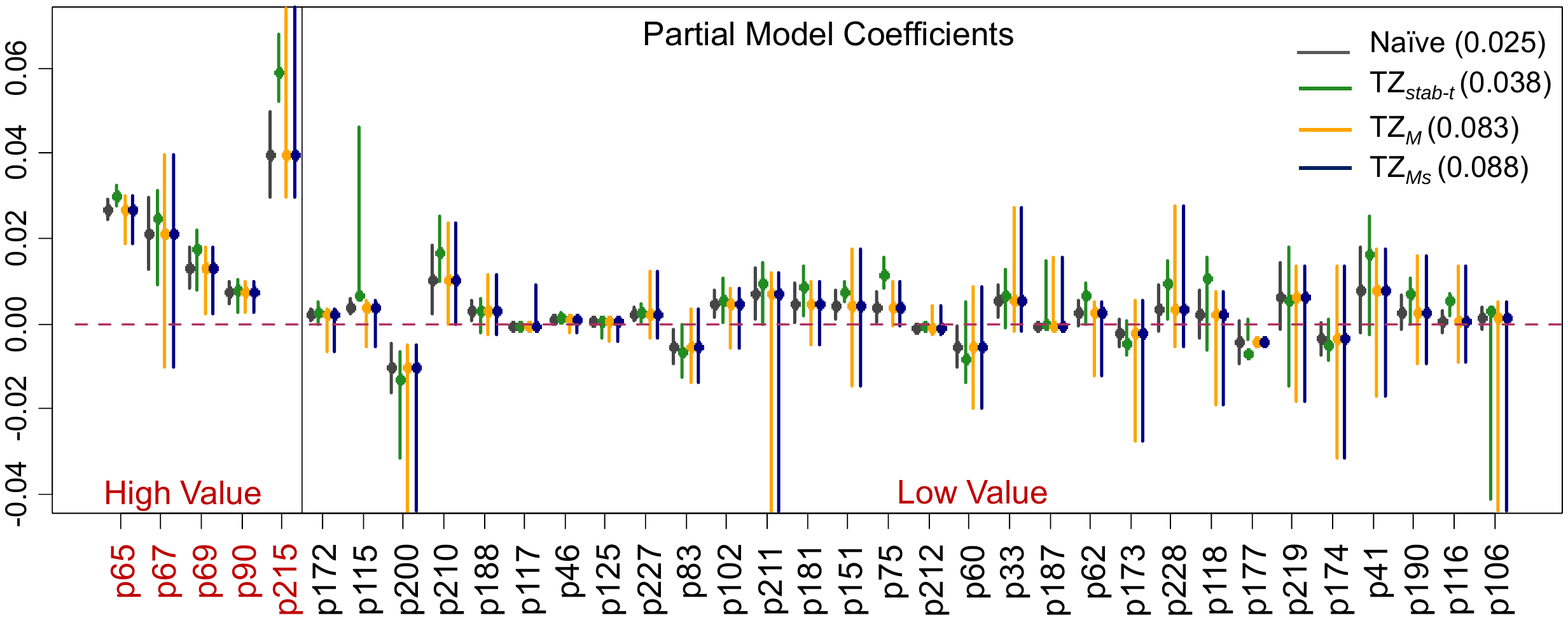}
		\par\end{centering}
	\caption{\em Regression analysis of log susceptibility to Lamivudine on indicators
		for 210 inhibitor mutations ($n=1057).$ Plotted are 90\% confidence
		intervals for 35 of the 36 mutations chosen by lasso ($\lambda=0.0124$
		from 10-fold cross validation); interval for p184 was excluded because
		its effect is an order of magnitude larger. The noise standard deviation
		$\sigma$ was assumed known and fixed at its OLS estimated value (0.23).
		6 high value targets were identified (including p184 which is not
		pictured); their Z-statistic from regression on active variables exceeded $\Phi^{-1}\left(1 - \frac{0.1}{2p}\right)=3.5$.
		The median interval length (across the 36 mutations) for each method
		is given in parentheses. The
		depicted point estimates (circles) are the MLEs of the corresponding population quantities.}
	\label{fig:hiv}
\end{figure}

\section{A General Recipe} \label{sec:general}

For concreteness, this paper has focused on variable selection and
target formation using the lasso. However, many of the ideas are applicable
to any model selection rule with polyhedral selection regions or even
non-polyhedral selection regions. What changes in the general case
is that the computation cost of the truncation region for the Z-statistic
can be much greater. Let us begin with the ingredients that do not
change:
\begin{itemize}
	\item We are interested in inference on linear functions $\mu^{\top}\eta_{j}^{\hat{H}}$ of the mean $\mu$ for variables $j\in\hat{M}$. Here $\hat{M}$ is the set of variables we are interested in, while $\hat{H}$ is (possibly) another set of variables used to define our inferential target (through $\eta^{\hat{H}}$). Above, we saw three cases (i) $\hat{H}=\left\{ 1,\ldots,p\right\}$ and $\hat{M}$ is the lasso active set (full targets) (ii) $\hat{H}=\hat{M}$ is the lasso active set (partial targets) and (iii) $\hat{M}$ is the lasso active set and $\hat{H}$ is the subset of $\hat{M}$ with large $t$-statistics (stable-$t$).
	\item For $\eta=\eta_{j}^{\hat{H}}$, we condition our inferences
	on $P_{\eta^{\perp}}y=\left(\mathbf{I}_n-\frac{\eta\eta^{\top}}{\lVert\eta\rVert_{2}^{2}}\right)y$; see \citet{fithian2014} for ways to remove this conditioning. That is, we first restrict our data to the line 
	\[
	\mathcal{L}=\left\{ y':y'=P_{\eta^{\perp}}y+z\cdot\frac{\eta}{\lVert\eta\rVert_{2}^2}, z\in\mathbb{R}\right\} .
	\]
	This line represents perturbations of the data in the direction $\eta$ only. The line is parametrized by $z$ and note that for $y'=y$, we have $z=y^{\top}\eta\sim \mathcal{N}\left(\eta^{\top}\mu,\sigma^{2}\|\eta\|_2^2\right)$ is the Z-statistic (up to standardization of variance). 
	\item For $\eta=\eta_{j}^{\hat{H}}$, we further condition
	on $j\in\hat{M}$ and $\hat{H}=H$ to adjust for the costs variable
	selection and target formation respectively. The effect of this conditioning
	is to restrict the line $\mathcal{L}$ to a range of $z$-values (i.e.,
	we obtain a truncated Z-statistic); we can necessarily express the
	intersection of \emph{$\mathcal{L}$} with the sets $j\in\hat{M}$
	and $\hat{H}=H$ as a union of intervals (of $z$-values).
\end{itemize}
The recipe above is quite general: we made no restrictions on what
kind of procedure can be used to construct $\hat{M}$ or $\hat{H}$.
Of course, for complicated $\hat{M}$ and $\hat{H}$, finding the
range of values that the Z-statistic has been truncated to can be
difficult (if not impossible). We have seen two particularly easy
to compute situations:
\begin{itemize}
	\item TZ$_{Ms}$ intervals: Letting $\hat{M}=\hat{H}$ be the lasso active set and also conditioning on the sign of the active variables, we obtain a polyhedral region, which when intersected with the line $\mathcal{L}$
	results in an interval. The power of using ``polyhedral'' selection rules is that it greatly simplifies computation for the truncation interval: in particular, once we have expressed the selection region as $\left\{ \mathbf{A}y\leq b\right\}$, formulas in Appendix \ref{app:truncation:points} give the interval endpoints. Another common polyhedral selection rule is forward stepwise regression. If $\hat{M}$ is constructed using $K$ steps of forward stepwise regression, then the set $\left\{\hat{M}=M,\hat{s}=s\right\} =\left\{\mathbf{A}y\geq 0\right\}$ where $\hat{s}$ are the signs of the $K$-selected variables; see \citet{taylor2014forwardstepwise} for the formula for $\mathbf{A}$. 
	\item Full target: When $\hat{M}$ is the lasso active set and $H=\left\{1,\ldots,p\right\}$, then the set of allowable $z$-values is of the form $\left(-\infty,a\right)\cup\left(b,\infty\right)$ where $a$ and $b$ have simple formulas (see Proposition \ref{prop:full}). 
\end{itemize}
Outside these special cases, simple formulas usually do not exist
for characterizing the intersection of $\mathcal{L}$ with $\left\{ j\in\hat{M}\right\} $
and $\left\{ \hat{H}=H\right\} $. Nonetheless, one can usually run
a pathwise algorithm to determine the truncation region. Let $\mathcal{Z}=\left\{ z_{1},\ldots,z_{G}\right\} $
be a grid of $z$-values (which map to points on $\mathcal{L}$).
For each $z_{k}$, we compute $\hat{M}\left(z_{k}\right)$ and $\hat{H}\left(z_{k}\right)$.
This yields a subset $\mathcal{Z}_{\text{trunc}}\subset\mathcal{Z}$
for which $j\in\hat{M}\left(z_{k}\right)$ and $\hat{H}\left(z_{k}\right)=H$.
We then approximate $\mathcal{Z}_{\text{trunc}}$ using a union of
intervals and this becomes our truncation set. We note that:
\begin{itemize}
	\item For algorithms such as the lasso, we can take advantage of warm starts.
	That is, having obtained the lasso solution for $z_{k}$, we can use it as an initial starting point for $z_{k+1}$. In fact, for the lasso, warm starts are not really necessary, because one can examine the constraint $\left\{ \mathbf{A}y\leq b\right\} $ characterizing the selection region for the lasso to determine which variables will enter/leave
	the active set at the endpoints of any particular interval of $z$-values characterized by a $\left(\hat{M},\hat{s}\right)$ pair.
	\item Very positive and very negative $z$-values do not affect inference. Hence, it is usually reasonable to consider only grid values less than
	say $20\sigma$.
\end{itemize}
The above recipe in fact allows us to adjust for selection using any blackbox model selection procedure\textemdash all that is needed is that we be able to run this blackbox procedure for all $z$-values in some grid. Again, the limiting factor here is computation: how quickly we can re-evaluate our model selection procedure on all grid values.

\section{Further topics}

Scientific research in the modern era has become much less ``pre-specified'' and much more dependent on exploratory data analysis to generate interesting hypotheses and models. Accounting for this exploration is crucial
for producing replicable results. Unfortunately, the simplest selection adjusted intervals, as in \citet{lee2016} and \citet{TTLT2016}, are often too long to be useful\textemdash they ``over-condition,'' that is they over-state the amount of information used up for hypothesis generation. A more refined accounting of costs reveals two primary ways through which the data guides many investigations:
\begin{itemize}
	\item Selecting interesting variables for further investigation.
	\item Choosing a ``parametrization'' of the scientific problem so that, for example, the effect of a variable can now be captured by some coefficient in a chosen model (target formation).
\end{itemize}
Decoupling these two activities is crucial because the inferential
cost incurred by target formation is much larger than that incurred by variable selection. For example, in the prostate and HIV datasets (as well as across our simulations), we saw that the TZ$_{V}$ intervals (adjusts for cost of variable selection) are much shorter than the TZ$_{M}$ intervals (adjusts for both variable selection and target formation). 

There is of course an extreme response to this asymmetric cost scheme: fix your targets beforehand and use the data only for variable selection. This, however, is too limiting. The appropriate response is to exert greater control over the cost of target formation. This can be done, for example, by splitting the data in two and choosing our targets
based on only half the data; \citet{TT2015} develop more sophisticated/efficient randomization approaches to controlling the cost. These approaches,
however, simultaneously use less information to choose our variables and less information to form our targets. But given that the cost of variable selection is not very high, it surely makes sense to use all the data to choose our important variables (hence arriving at a better set of ``important'' variables) and decrease only the cost of target formation. Rather than using a single lever to control the
cost of data exploration, an innovation of this paper is to use to
introduce two levers. Concretely, we found that selecting variables using the lasso but forming inferential targets based on a more stable set of variables, yielded good practical results \textemdash ``good'' here means intervals that are slightly longer than the naive intervals
but significantly shorter than the TZ$_{M}$ intervals (frequently,
less than half the length). 

Our results have relied on several assumptions, e.g., $\lambda$ is
fixed. In the regime where $p$ is fixed and $n$ is growing, these
assumptions can be dispensed with:
\begin{itemize}
	\item \citet{tibshirani2015uniform, tian2017asymptotics} show that the truncated normal pivot
	underlying the TZ$_{Ms}$ method holds asymptotically for a wide class
	of (non-normal) error distributions. The arguments therein continue
	to hold for the other truncated-Z approaches in this paper.
	\item \citet{TT2015} and \citet{markovic2017} show that if an estimator
	of $\sigma^{2}$ is consistent unconditionally, then it remains consistent
	conditional on selection (by the lasso). In the $p$ fixed regime,
	this implies that the pairs bootstrap can be used to consistently
	estimate various covariances.
	\item $\lambda$ can be chosen via randomized cross validation. The details are given in \citet{markovic2017}; the key is that we continue to
	enjoy (asymptotically) a joint multivariate normality between (i)
	the test statistic, (ii) the statistics characterizing the lasso active set and (iii) the randomized cross validation errors (over a grid of $\lambda$ values). The multivariate normality then allows us to derive the conditional distribution of the test statistic.
\end{itemize}

When $p>n$, our theoretical understanding is limited, but empirically,
the truncated-Z intervals seem to work reasonably well. 
Generalizations of this approach to (regularized- ) generalized linear models and the Cox PH model would also be interesting and
useful. The new procedures presented in this paper will be implemented in our \texttt{selectiveInference} package in the CRAN repository. 

\medskip

{\bf Acknowledgements}: The authors would like to thank Jonathan Taylor, who has been the intellectual leader in the post-selection inference effort at Stanford.

\appendix
\section*{Appendices}
\addcontentsline{toc}{section}{Appendices}
\renewcommand{\thesubsection}{\Alph{subsection}}

\subsection{Formulae for the  truncation points from the polyhedral lemma} \label{app:truncation:points}

Suppose that a selection region can be characterized by $\{\mathbf{A}y \leq b\}$. Let us decompose $y$ as:
\[
y=z\cdot c+\nu\qquad z=\eta^{\top}y\qquad c=\frac{\eta}{\lVert\eta\rVert_{2}^{2}}.
\]
For fixed $\mathbf{A},b$ and $\nu$, $z$ is truncated to
\[
\Nu^-(y) \leq z \leq \Nu^+(y)
\]
\begin{equation} \label{eq:amazing}
  \begin{aligned}
\Nu^-(y)&=\max_{j:(\mathbf{A}c)_j<0} \frac{b_j-(\bA \nu)_j}{(\mathbf{A}c)_j}\\
\Nu^+(y)&=\min_{j:(\mathbf{A}c)_j > 0} \frac{b_j-(\bA \nu)_j}{(\mathbf{A}c)_j}.
\end{aligned}
\end{equation}
In addition, for $\nu$ to be compatible with $\mathbf{A}$ and $b$, we require $\Nu^0(y) \geq 0$ where
\[
\Nu^0(y) =\min_{j:(\mathbf{A}c)_j = 0} (b_j-(\bA \nu)_j).
\]
The derivation and additional details can be found in Lemma 5.1 of \citet{lee2016}.

\subsection{A simple problem illustrating the conditioning events in detail} \label{app:simple:example:conditioning:events}

Here we  show the precise effects of different conditioning strategies in a simple example.  
Suppose $n=2$, $p=2$, and $\mathbf{X}=\left(\begin{array}{cc}
1 & \rho\\
0 & \sqrt{1-\rho^{2}}
\end{array}\right)$. Note that each column of $\mathbf{X}$ has unit norm and that $x_{1}^{\top}x_{2}=\rho$
where $x_{j}$ denotes the $j$th column. The solid pink
lines in Figure \ref{fig:TZ_int} denote the boundaries of the different
lasso selection regions in $y$-space (i.e., $\mathbb{R}^{2}$).
For example, if $y$ falls in the central parallelogram region
(whose boundaries are given by $x_{1}^{\top}y=\pm\lambda$
 and $x_{2}^{\top}y=\pm\lambda$), then neither variable will be active in the lasso solution. Consider:
\begin{itemize}
	\item Suppose $y$ equals the red dot in Figure \ref{fig:TZ_int}
	labeled ``Case 1.'' Then the lasso active set is $\hat{M}=\left\{ 1\right\} $
	and the sign of the active variable is $\hat{s}=-1$. As a result of
	this initial exploration, we settle on testing the hypothesis $H_{0}:\beta^{\left(\hat{M}\right)}=\mu_{1}=0$.
	Here $\eta=\left(1,0\right)^{\top}$ and $P_{\eta^{\perp}}y=y_{2}$.
	Assuming $\sigma=1$, our Z-statistic is simply $y_{1}$. If we condition
	on (i) $\hat{M}=\left\{ 1\right\} $, (ii) $\hat{s}=-1$ and (iii) $y_{2}$,
	then we know that $y_{1}$ must lie in $(-\infty,-\lambda)$. So the
	usual $\mathcal{N}\left(0,1\right)$ distribution of the Z-statistic is now
	$\mathcal{N}\left(0,1\right)$ truncated to $\left(-\infty,-\lambda\right)$.
	\item As discussed above, we conditioned on $\hat{s}$ to make the
	math simpler. If we condition only on $\hat{M}=\left\{ 1\right\} $
	and $y_{2}$, we still get that $y_{1}\in\left(-\infty,-\lambda\right)$.
	However, if $y$ were as in Case 2 of Figure \ref{fig:TZ_int},
	conditioning on both $\hat{M}$ and $\hat{s}$ leads to $y_{1}\in\left(\lambda,\infty\right)$
	while conditioning only on $\hat{M}$ yields $y_{1}\in\left(-\infty,-\lambda\right)\cup\left(\lambda,\infty\right)$,
	a union of intervals. That is, the less we condition on, the less
	severe the truncation. 
	\item In Section \ref{sec:full} we found it useful to condition on even less than just $\hat{M}$.
	In particular, in some situations, we may simply want to condition
	just on the fact that variable 1 was selected, $1\in\hat{M}$. In
	such a case $\hat{M}$ can equal $\left\{ 1\right\} $ or $\left\{1,2\right\} $.
	In Case 1, conditioning on $1\in\hat{M}$ and $y_{2}$ restricts $y_{1}$
	to $\left(-\infty,-\lambda\right)\cup\left(\lambda,\infty\right)$. 
\end{itemize}
\begin{figure}[hbtp]
	\begin{centering}
		\includegraphics[width=0.45\linewidth]{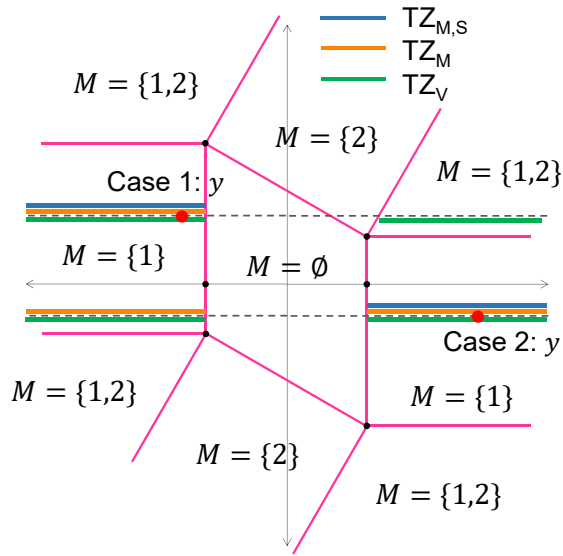}
		\par\end{centering}
	\caption{\em Plot of lasso selection regions in $y$ space when
		the first column of \textbf{$\mathbf{X}$} is $x_{1}=\left(1,0\right)^{\top}$ and the second column is $x_{2}=\left(\rho,\sqrt{1-\rho^{2}}\right)^{\top}$. For each hypothetical value of $y$, the blue line shows	the allowable range of $y_{1}$ (given fixed $y_{2}$) for $\hat{M}$ and $\hat{s}$ to equal their observed values (in Case 1, this would
		be $\hat{M}=\left\{1\right\} $ and $\hat{s}=-1$). The orange line shows the allowable range of $y_{1}$ if we condition only on $\hat{M}$ and the green line shows the allowable range of $y_{1}$ if we only
		require that a particular variable (in this case variable 1) is in the lasso active set.}
	\label{fig:TZ_int} 
\end{figure}

\subsection{Further simulations for the partial target} \label{app:partial:target}

Figures \ref{fig:partial_n100p50} and \ref{fig:partial_n100p125} show results for the cases $n=100,p=50$ and $n=100,p=125$ (the simulation settings are described in Section \ref{sec:sim_partial}).

\begin{figure}[bhtp]
	\begin{centering}
		\includegraphics[width=0.33\paperwidth]{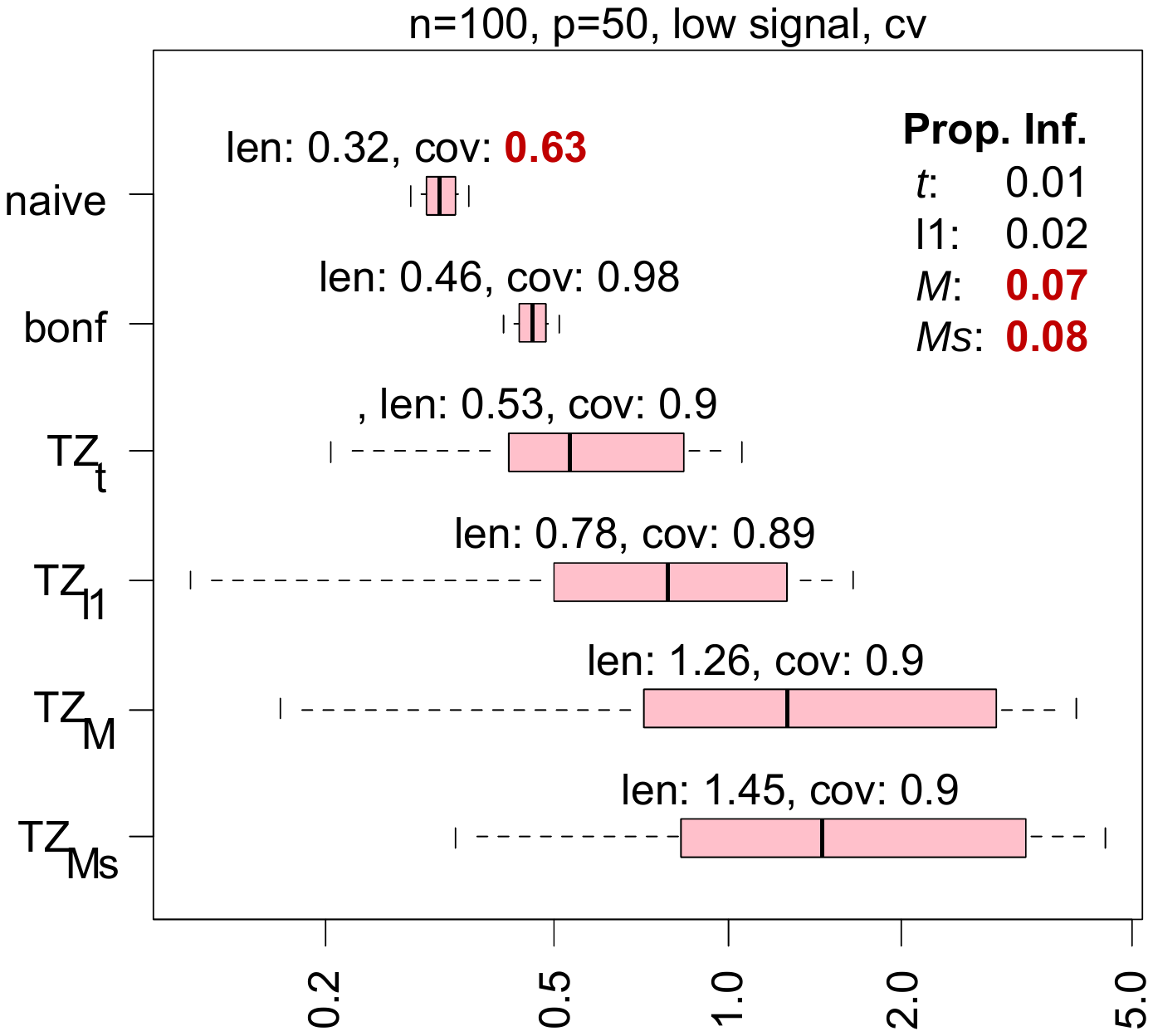}\includegraphics[width=0.33\paperwidth]{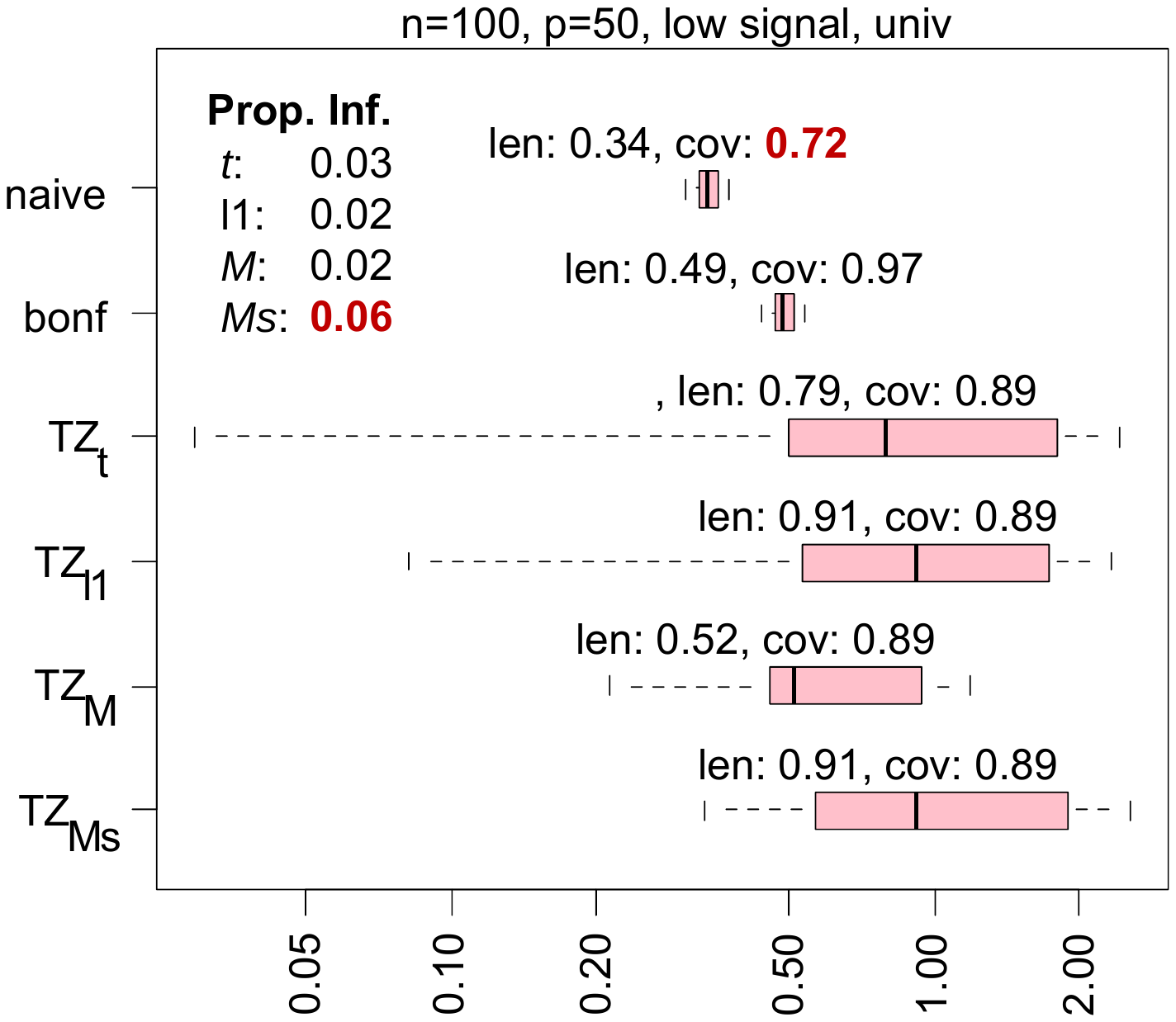}
		\par\end{centering}
	\begin{centering}
		\includegraphics[width=0.33\paperwidth]{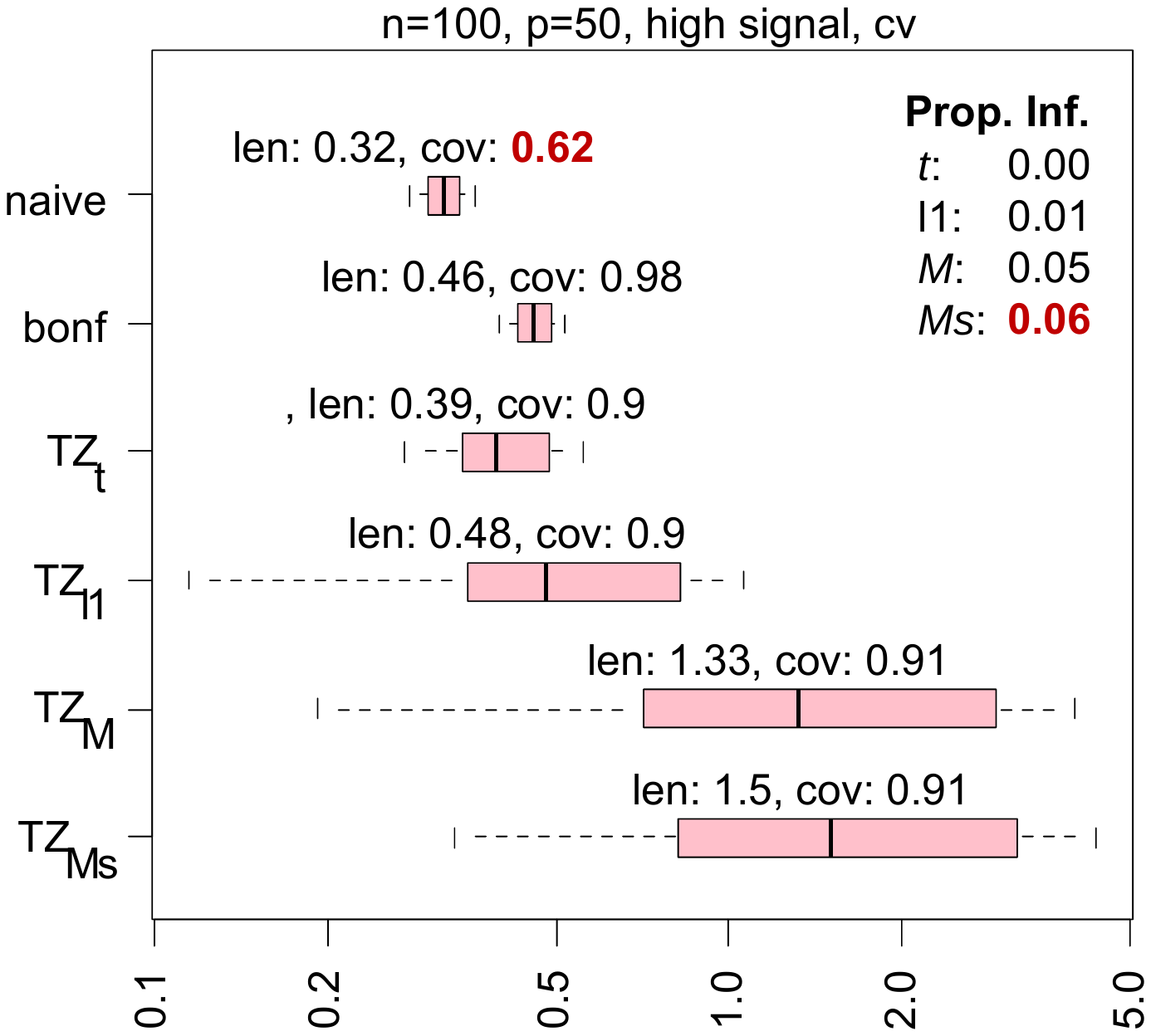}\includegraphics[width=0.33\paperwidth]{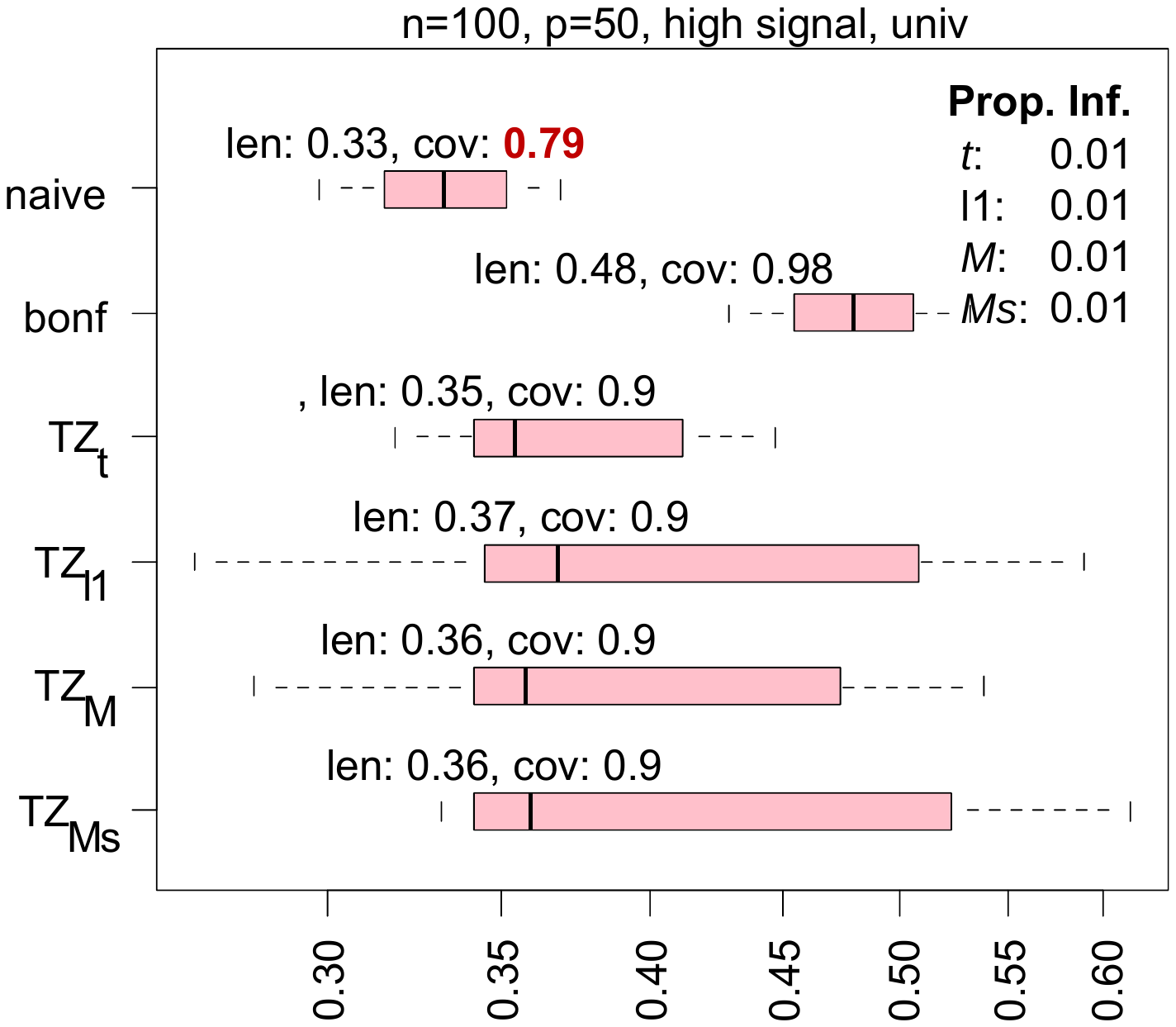}
		\par\end{centering}
	
	\caption{Boxplot of lengths of 90\% confidence intervals for ``partial''
		regression coefficients. Six interval methods are compared: naive
		(ignoring selection), Bonferroni adjusted, $\TZ_{\text{stab}-t}$, $\TZ_{\text{stab}-\ell_1}$, $\TZ_{M}$,
		and $\TZ_{Ms}$. Reported are the median interval length, the empirical
		coverage, and the proportion of ``infinite'' intervals (the infinite
		length results from numerical inaccuracies when inverting a truncated
		normal CDF); the boxplots set infinite lengths to the maximum finite
		observed length. Here $n=100,p=50$. The first five components of
		$\beta$ are set to $\delta_{\text{low}}=0.24$ (top panels) or $\delta_{\text{high}}=0.62$
		(bottom panels) and the remaining components are 0. The lasso penalty
		is either set at the universal threshold value $\sqrt{\frac{2\log p}{n}}\approx 0.28$
		(right panels) or at a value approximating the behavior of 10-fold
		cross validation (0.14 and 0.11 respectively for the low and high
		signal cases). }
	\label{fig:partial_n100p50}
\end{figure}

\begin{figure}[bhtp]
	\begin{centering}
		\includegraphics[width=0.33\paperwidth]{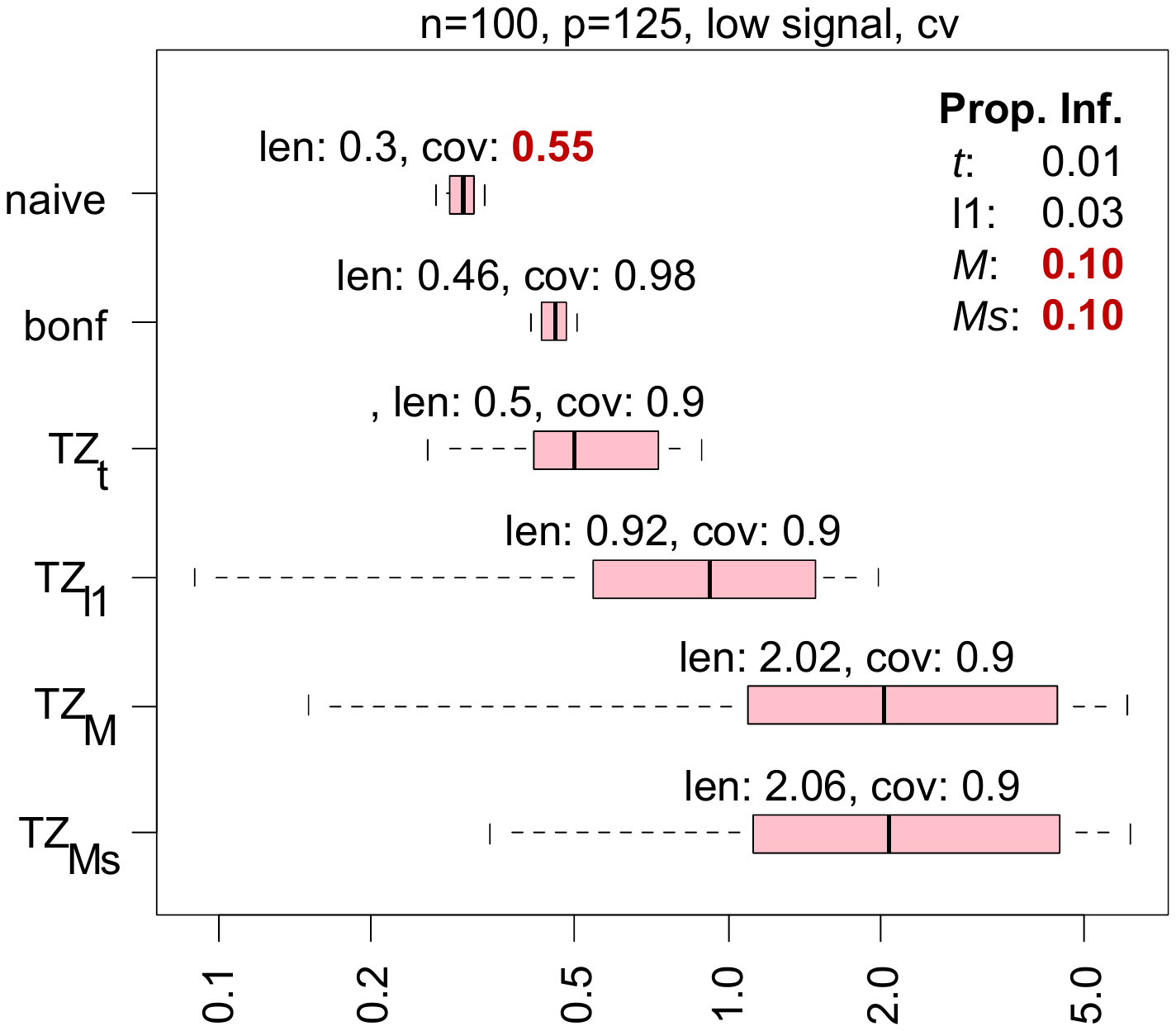}\includegraphics[width=0.33\paperwidth]{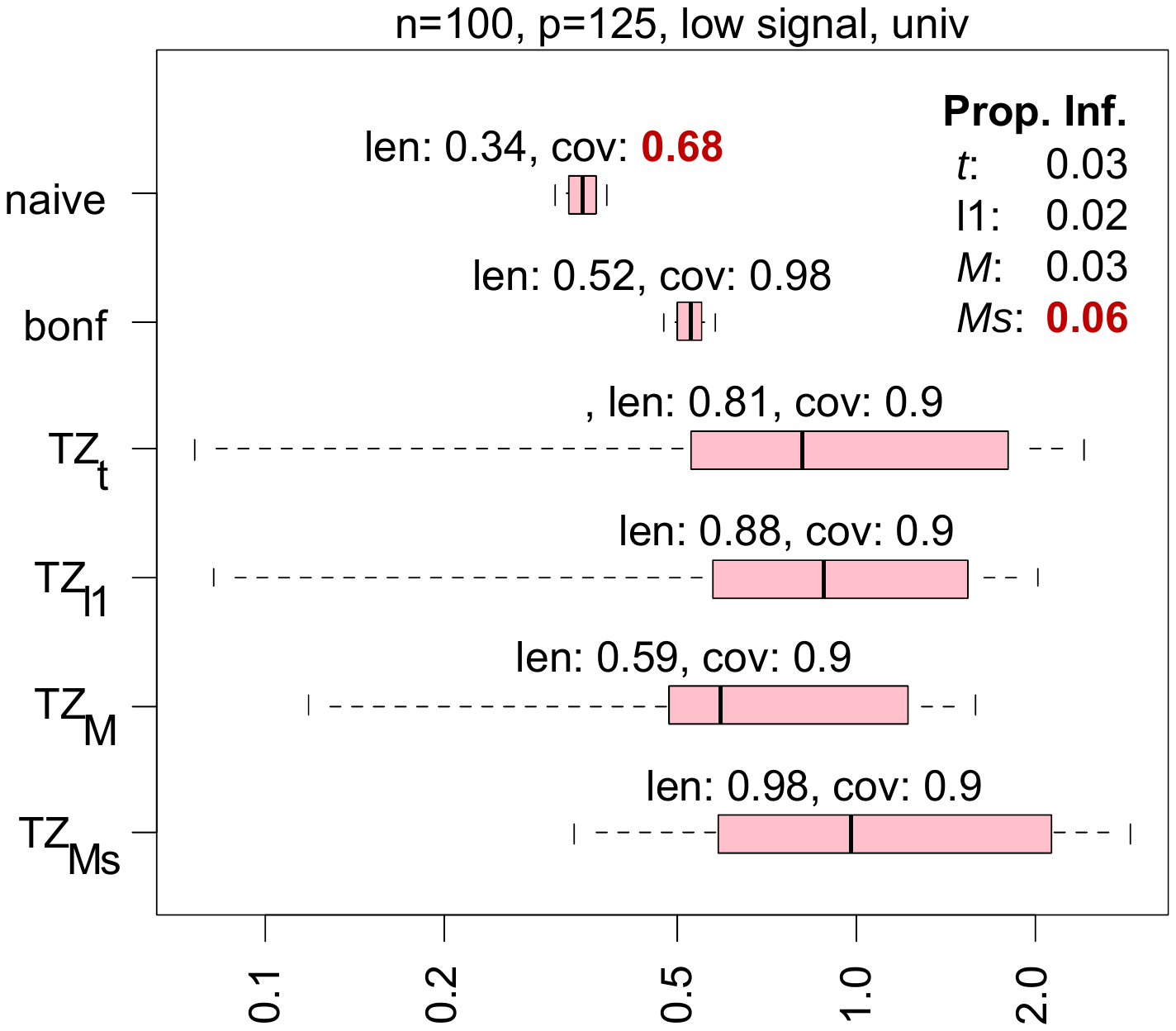}
		\par\end{centering}
	\begin{centering}
		\includegraphics[width=0.33\paperwidth]{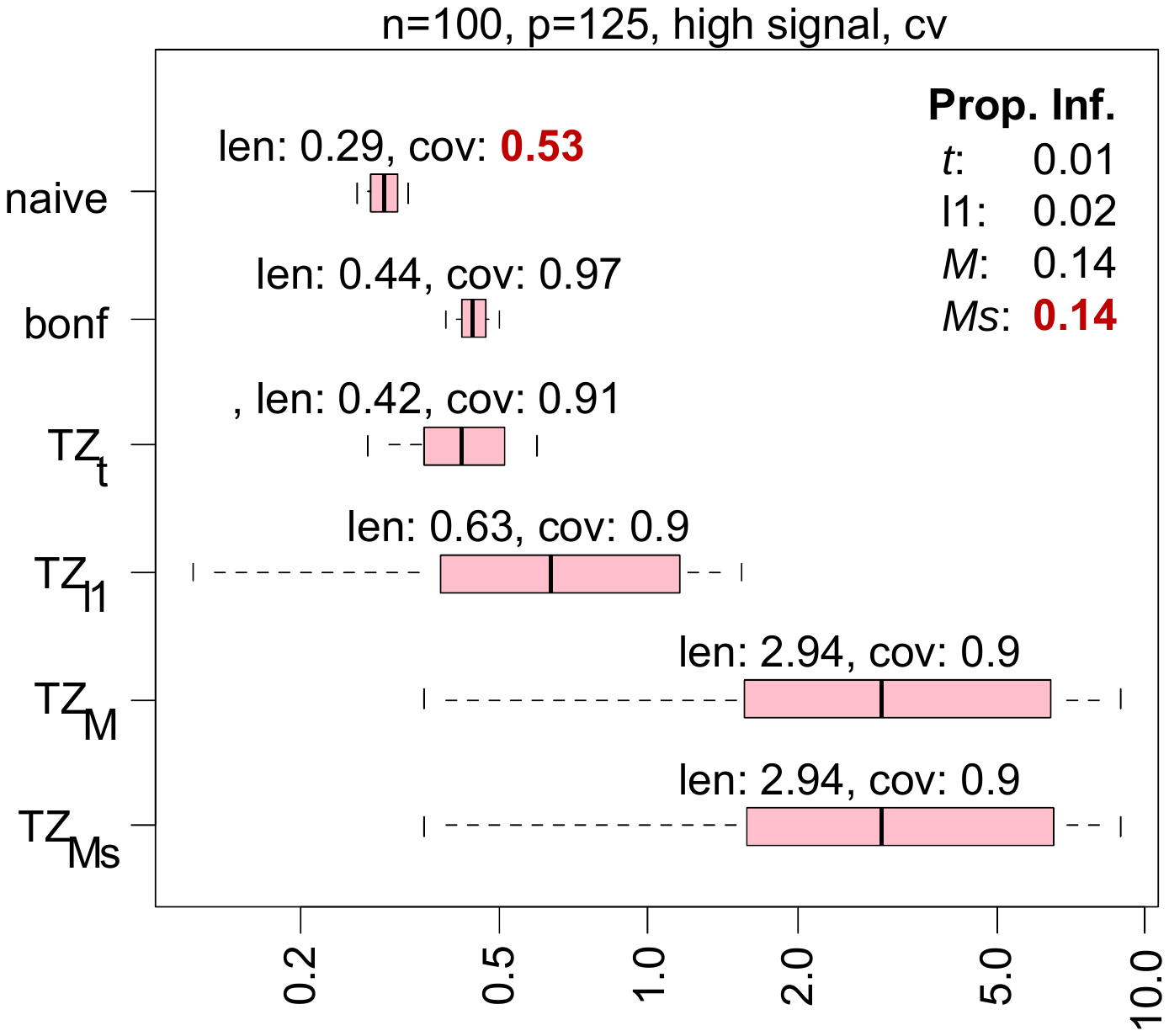}\includegraphics[width=0.33\paperwidth]{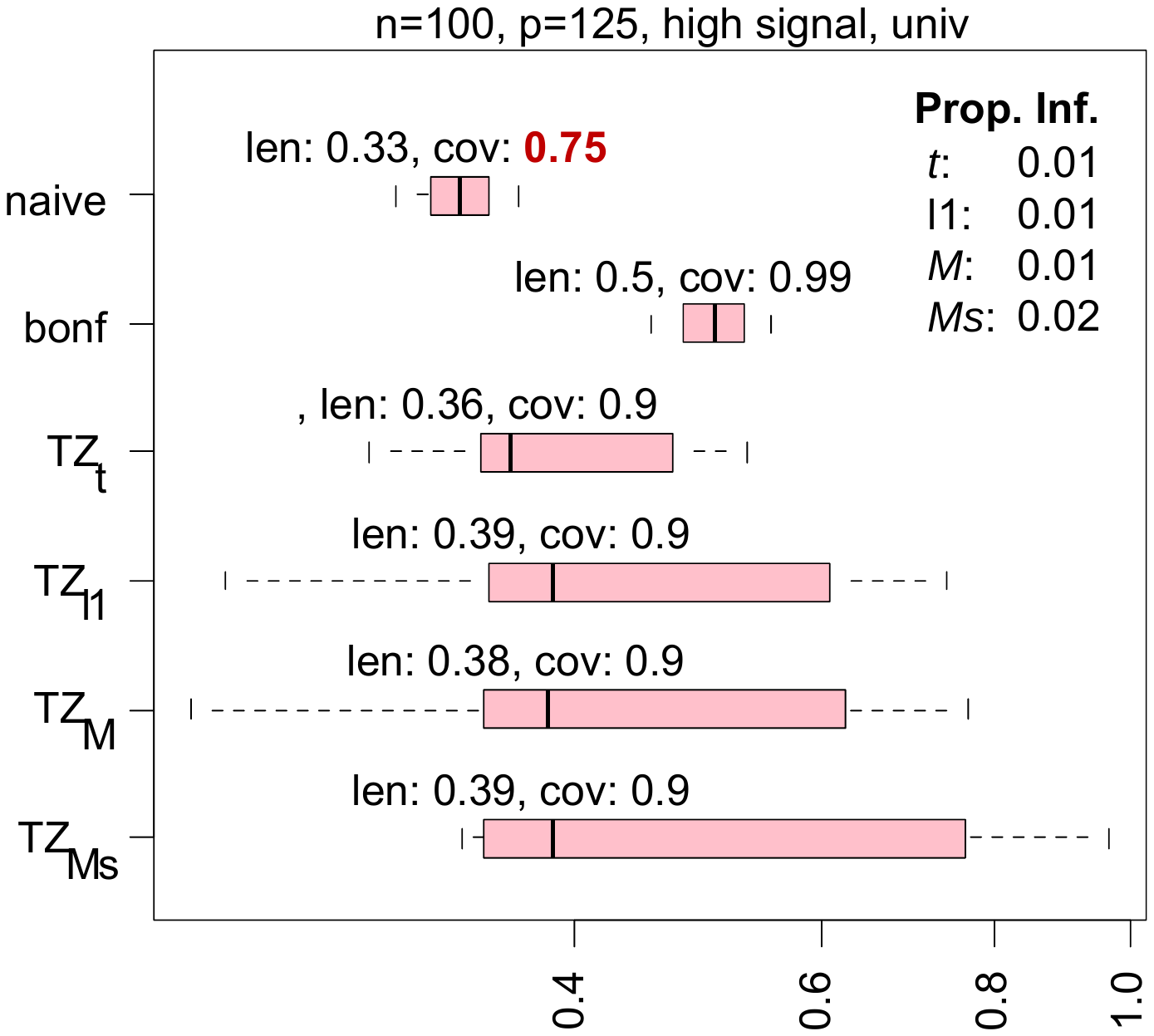}
		\par\end{centering}
	
	\caption{Boxplot of lengths of 90\% confidence intervals for ``partial''
		regression coefficients. Six interval methods are compared: naive
		(ignoring selection), Bonferroni adjusted, $\TZ_{\text{stab}-t}$, $\TZ_{\text{stab}-\ell_1}$, $\TZ_{M}$,
		and $\TZ_{Ms}$. Reported are the median interval length, the empirical
		coverage, and the proportion of ``infinite'' intervals (the infinite
		length results from numerical inaccuracies when inverting a truncated
		normal CDF); the boxplots set infinite lengths to the maximum finite
		observed length. Here $n=100,p=125$. The first five components of
		$\beta$ are set to $\delta_{\text{low}}=0.27$ (top panels) or $\delta_{\text{high}}=0.65$
		(bottom panels) and the remaining components are 0. The lasso penalty
		is either set at the universal threshold value $\sqrt{\frac{2\log p}{n}}\approx 0.31$
		(right panels) or at a value approximating the behavior of 10-fold
		cross validation (0.17 and 0.13 respectively for the low and high
		signal cases). }
	\label{fig:partial_n100p125}
\end{figure}

\subsection{Performance of the proposed methods for correlated predictors} \label{app:correlated}

We study whether our proposed methods continue to work well
when the predictors are correlated. As in the original simulations, there
are $k=5$ non-null variables but now they may be heavily correlated
with other variables. We consider two correlation schemes:

\begin{itemize}
\item Equi-correlation blocks: Divide all variables among five blocks of
size $p/5$. There is one non-null variable in each block and it has
a $\mathcal{N}\left(0,1\right)$ distribution. The other $p/5-1$ variables
in each block are generated from the non-null variable in that block
as follows:
\[
x_{j}=\rho x_{\text{non-null}}+\sqrt{1-\rho^{2}}\varepsilon\qquad\varepsilon\sim \mathcal{N}\left(0,1\right).
\]
We set $\rho=0.5$.
\item Toeplitz: The covariance matrix of the predictors has Toeplitz form
with the first row of the covariance matrix having the form
\[
\left(1,\rho,\rho^{2},\ldots\right).
\]
Five of the predictors are randomly chosen to be the non-null predictors.
We set $\rho=0.5$.
\end{itemize}

All simulations are performed with $n=100$ and $p=250$. Figures \ref{fig:partial_equi} and \ref{fig:partial_toeplitz} summarize
the results. The stable methods continue to outperform the $\TZ_{M}$ and
$\TZ_{Ms}$ intervals in the presence of strongly correlated predictors. In fact, we see that the stable-$t$ method shines even more than in the case of independent
predictors\textemdash in particularly, it significantly outperforms
the $\TZ_{M}$ and $\TZ_{Ms}$ intervals even when $\lambda$ is set to be the universal
threshold value. This performance advantage is due to the fact that the active set of the lasso becomes less stable when variables are correlated.

\begin{figure}[hbtp]
\begin{centering}
	\includegraphics[width=0.33\paperwidth]{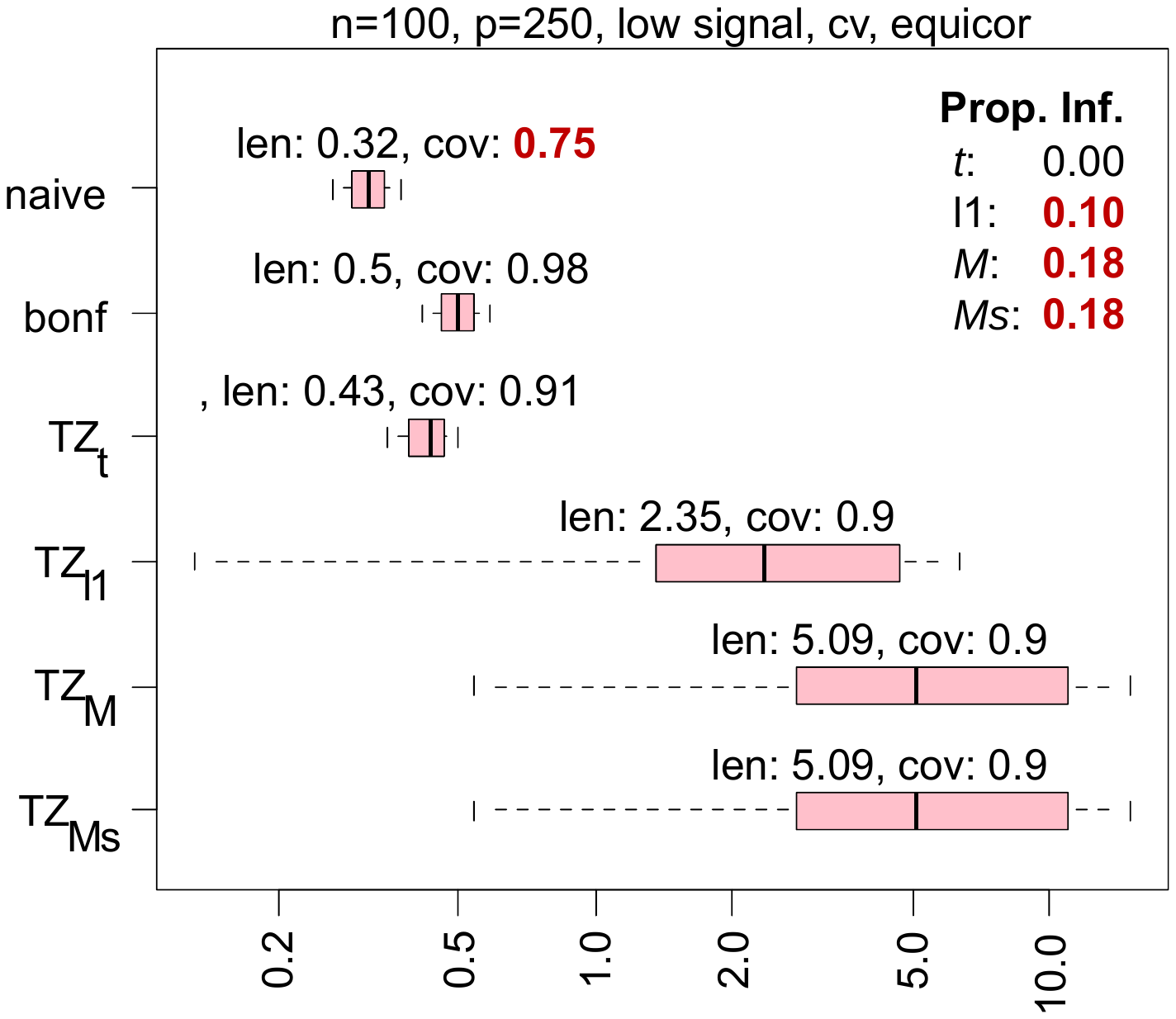}\includegraphics[width=0.33\paperwidth]{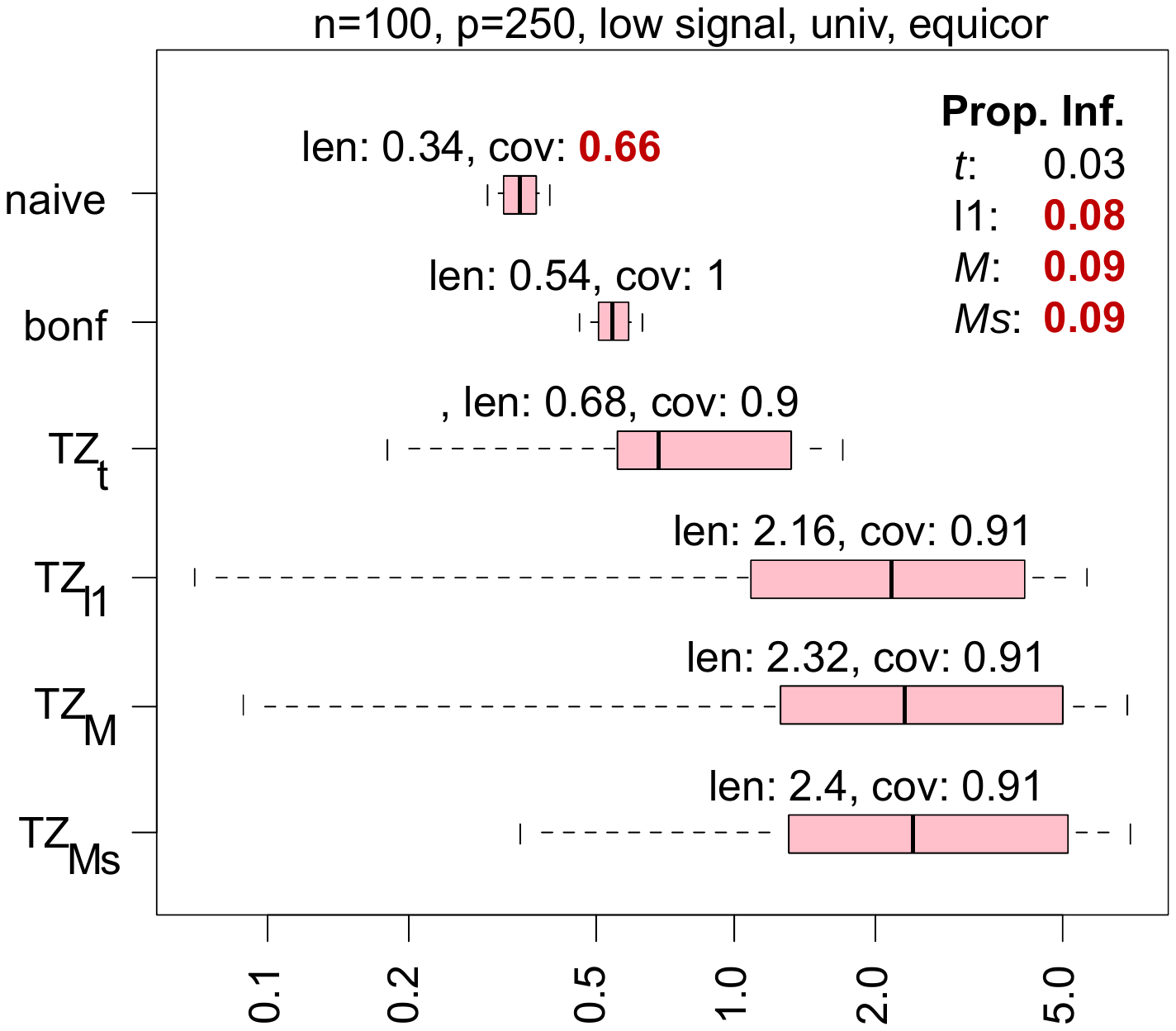}
	\par\end{centering}
\begin{centering}
	\includegraphics[width=0.33\paperwidth]{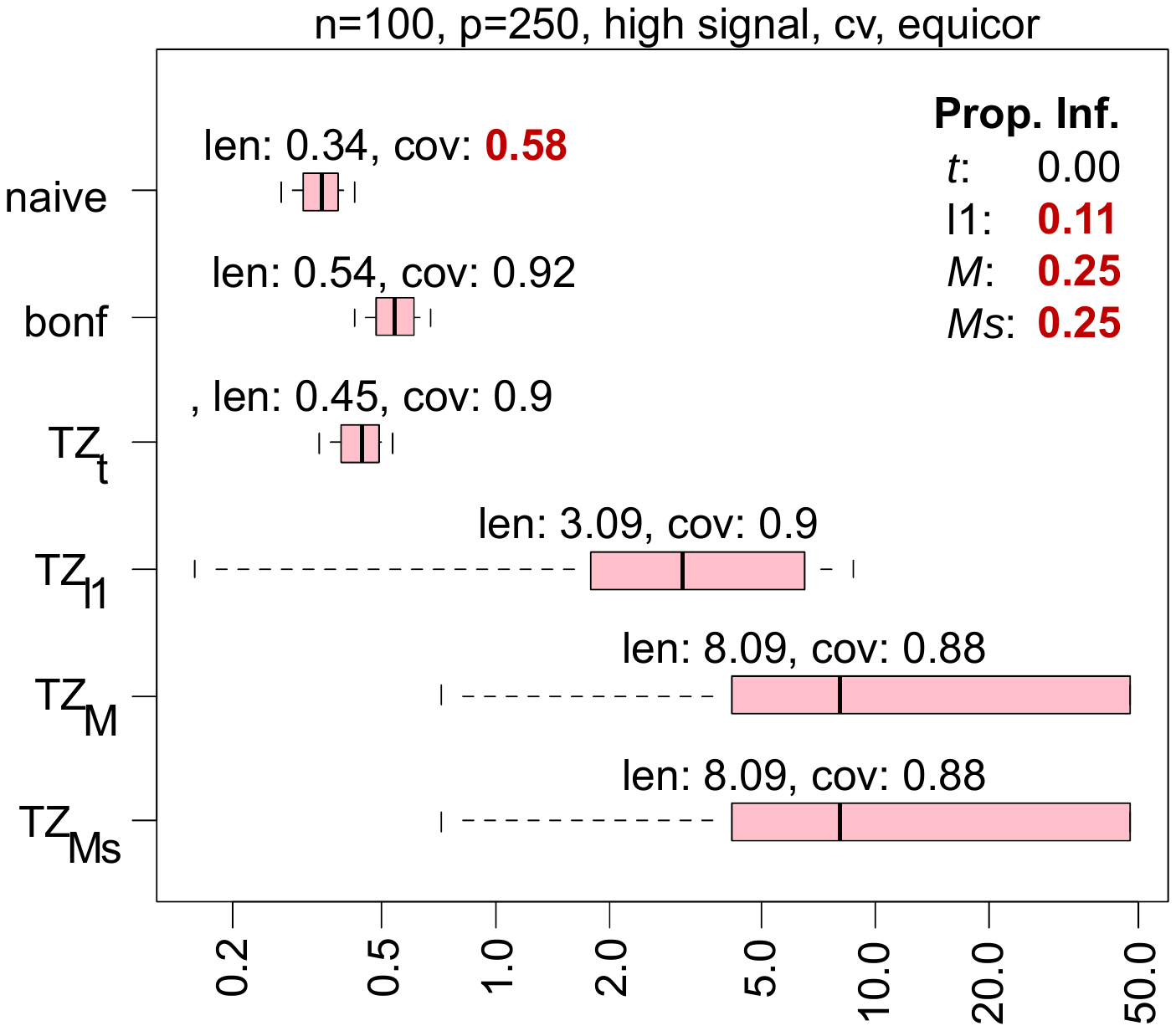}\includegraphics[width=0.33\paperwidth]{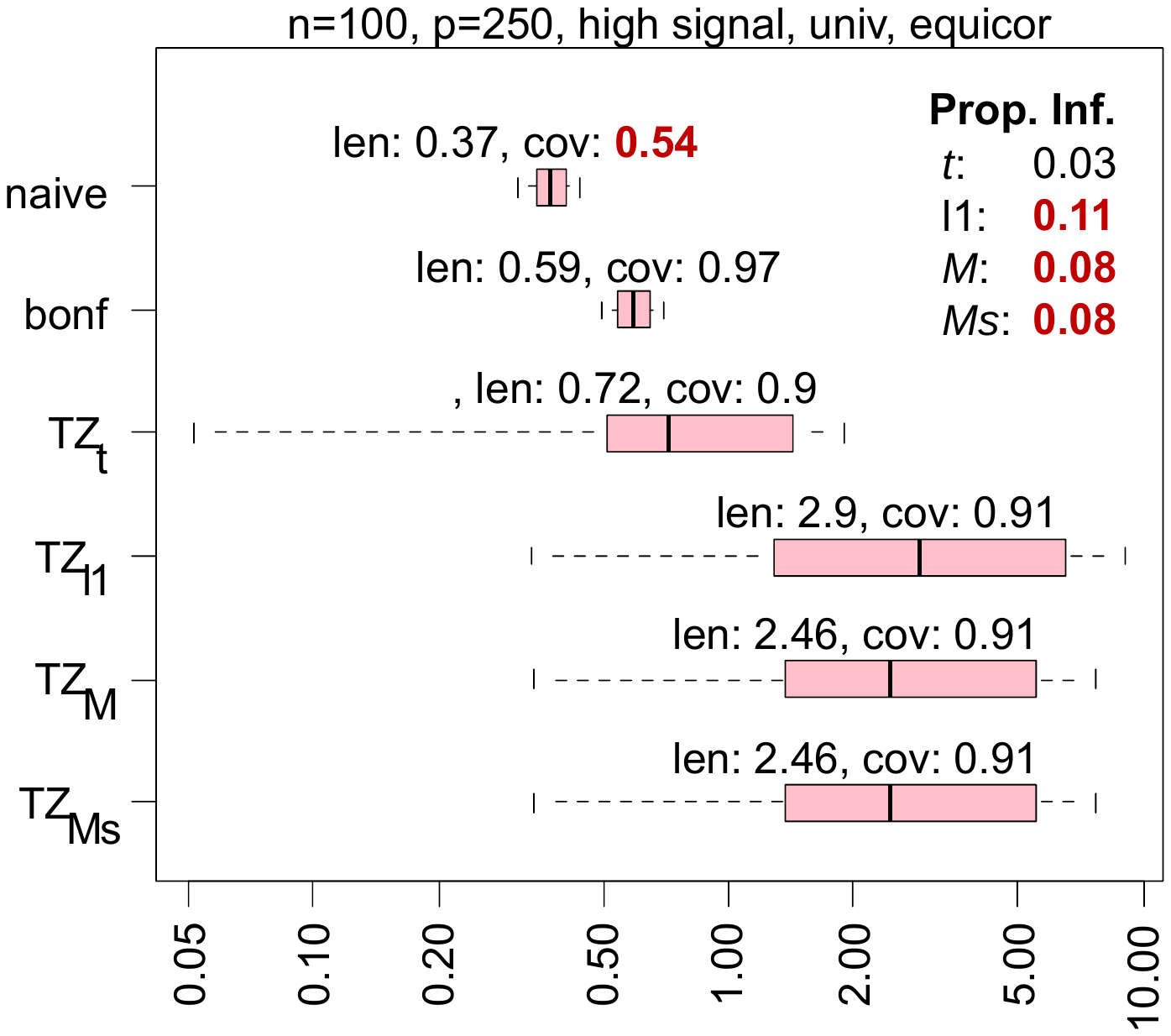}
	\par\end{centering}

\caption{$n=100,p=250$. Block equicorrelation covariance matrix with $\rho=0.5$. Boxplot of lengths of 90\% confidence intervals for ``partial''
	regression coefficients. Six interval methods are compared: naive
	(ignoring selection), Bonferroni adjusted, $\TZ_{\text{stab}-t}$, $\TZ_{\text{stab}-\ell_1}$, $\TZ_{M}$,
	and $\TZ_{Ms}$. Reported are the median interval length, the empirical
	coverage, and the proportion of ``infinite'' intervals (the infinite
	length results from numerical inaccuracies when inverting a truncated
	normal CDF); the boxplots set infinite lengths to the maximum finite
	observed length. The first five components of
	$\beta$ are set to $\delta_{\text{low}}=0.29$ (top panels) or $\delta_{\text{high}}=0.67$
	(bottom panels) and the remaining components are 0. The lasso penalty
	is either set at the universal threshold value $\sqrt{\frac{2\log p}{n}}\approx 0.33$
	(right panels) or at a value approximating the behavior of 10-fold
	cross validation (0.15 and 0.12 respectively for the low and high
	signal cases). }

\label{fig:partial_equi}
\end{figure}

\begin{figure}[hbtp]
\begin{centering}
	\includegraphics[width=0.33\paperwidth]{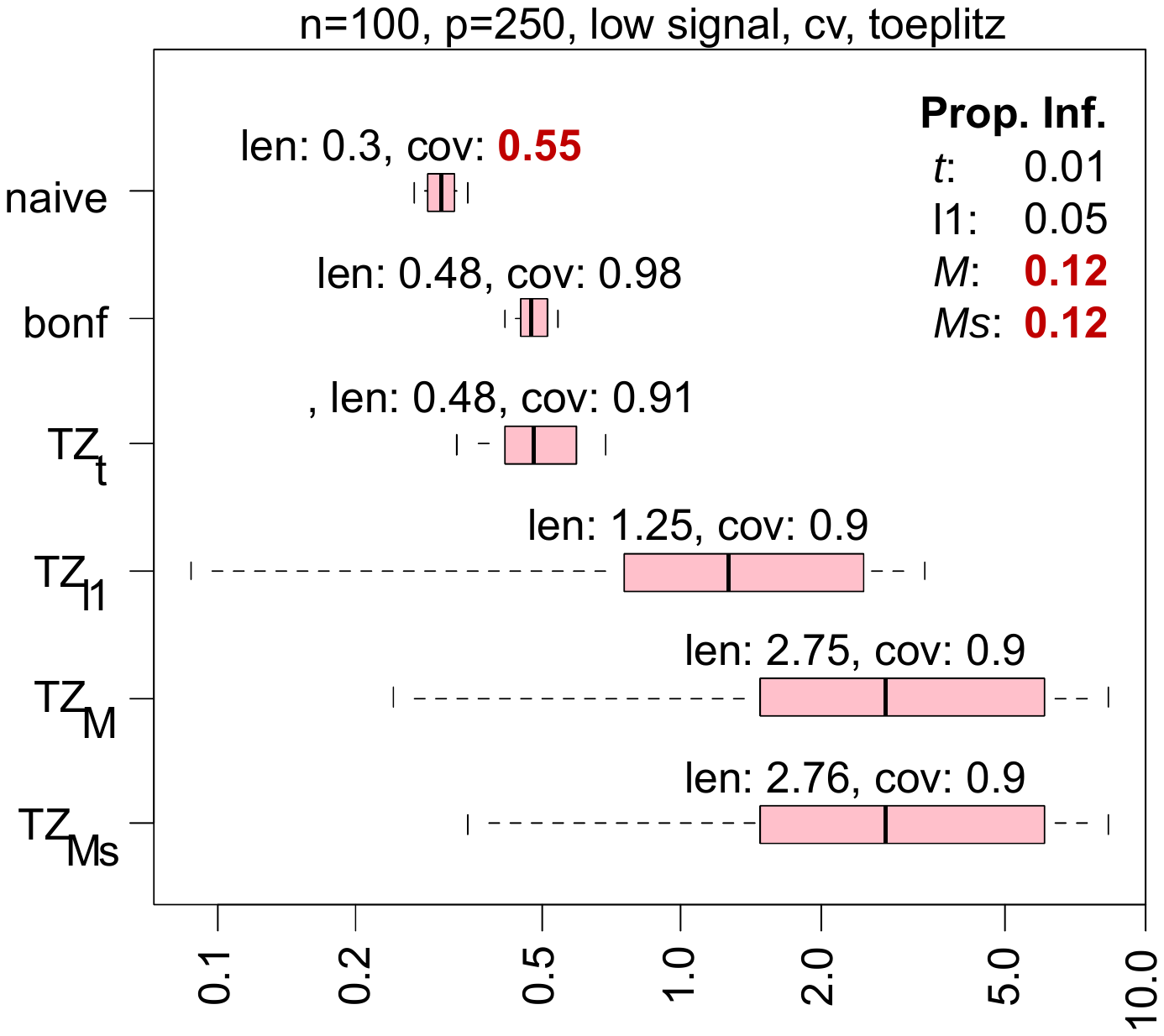}\includegraphics[width=0.33\paperwidth]{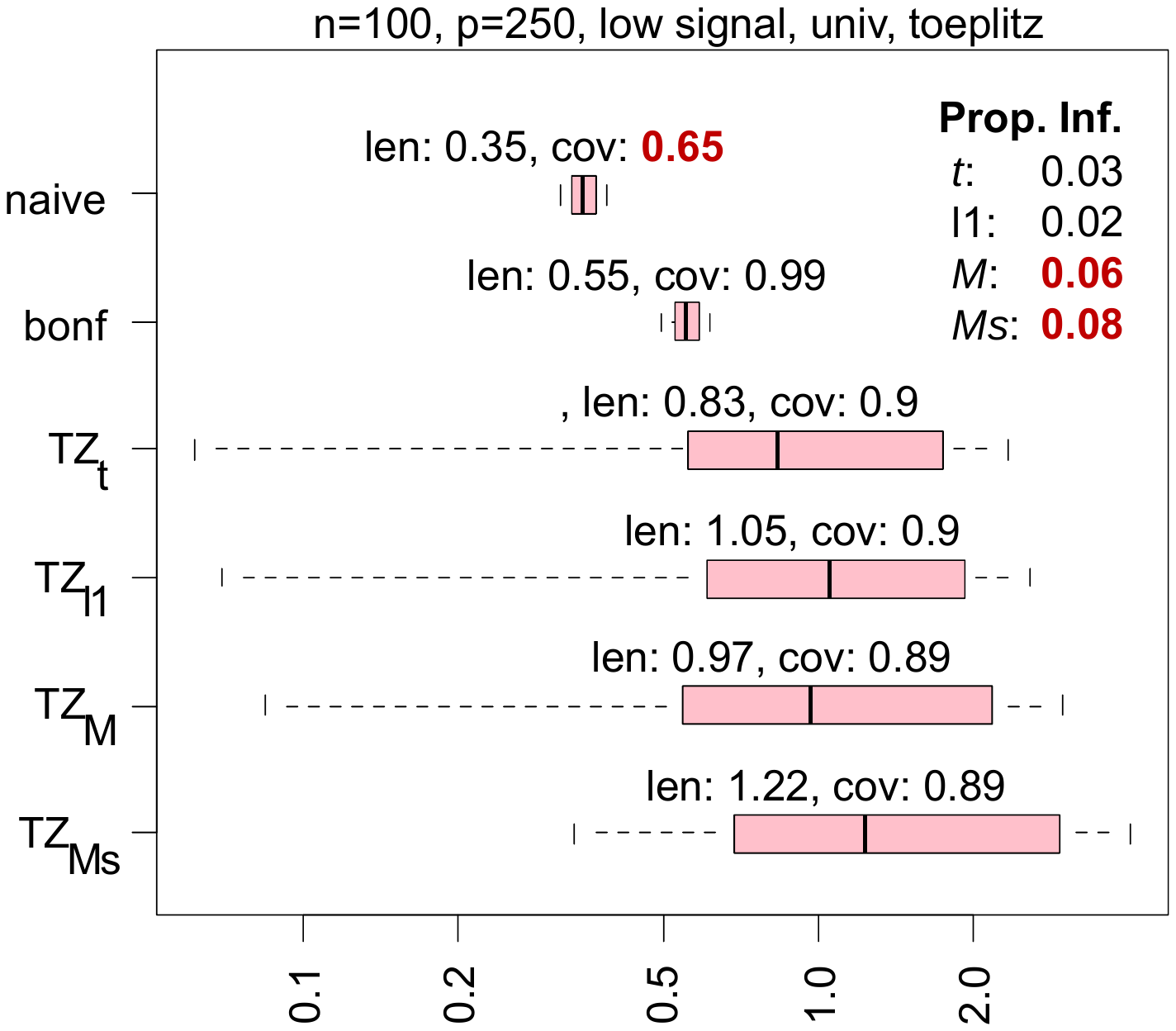}
	\par\end{centering}
\begin{centering}
	\includegraphics[width=0.33\paperwidth]{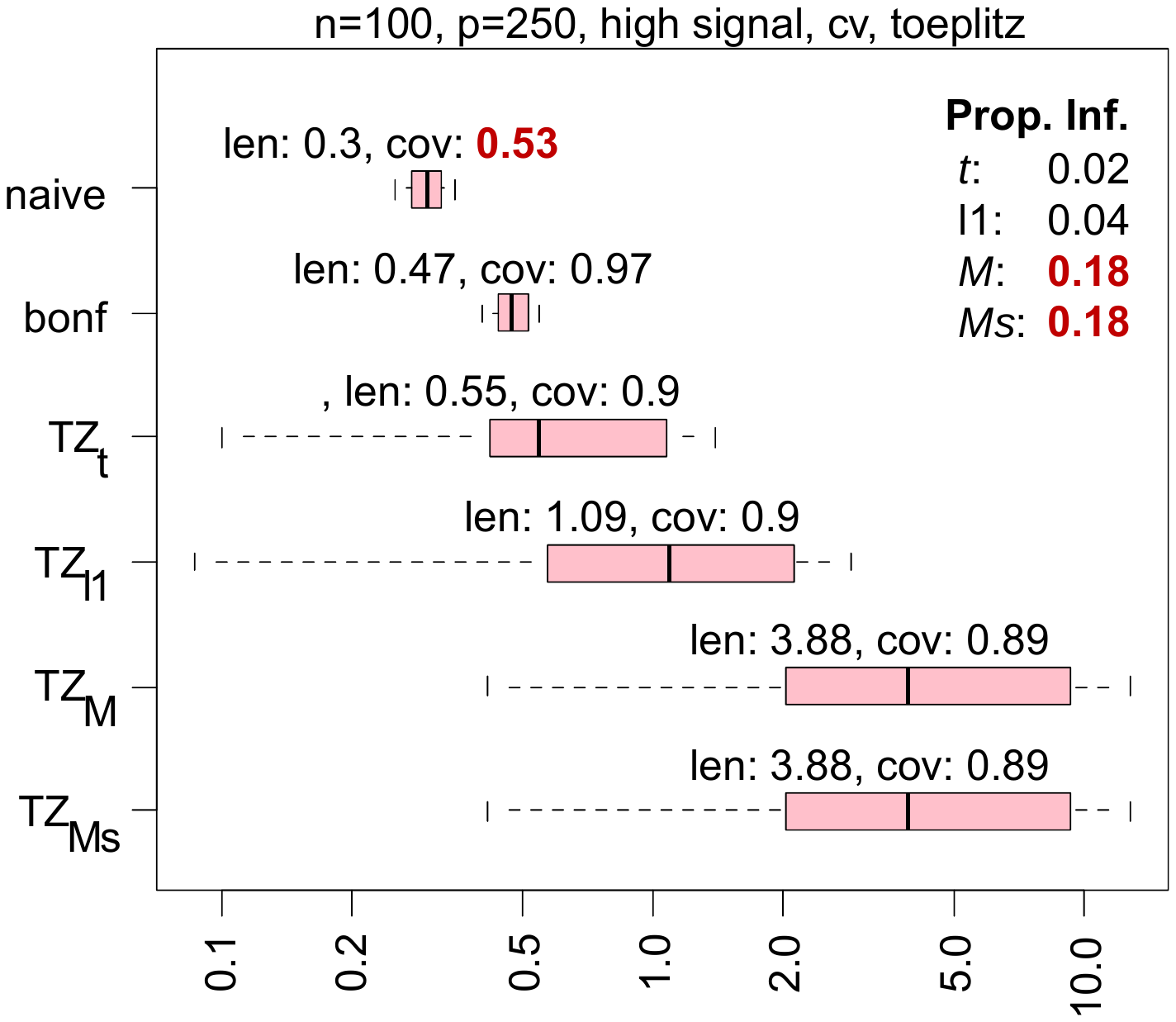}\includegraphics[width=0.33\paperwidth]{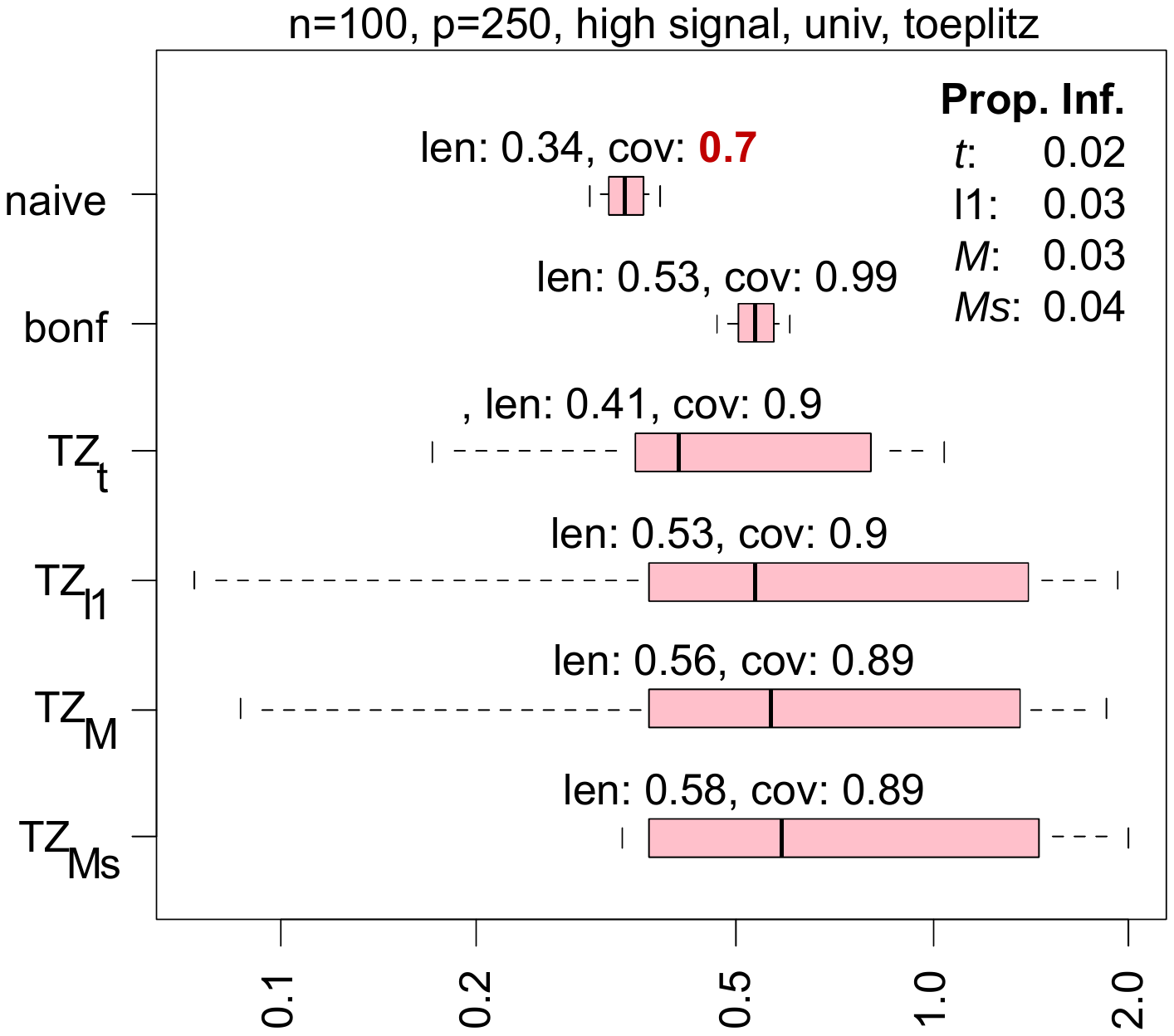}
	\par\end{centering}

\caption{$n=100,p=250$. Toeplitz matrix with geometric correlation decay and
	$\rho=0.5$. Boxplot of lengths of 90\% confidence intervals for ``partial''
	regression coefficients. Six interval methods are compared: naive
	(ignoring selection), Bonferroni adjusted, $\TZ_{\text{stab}-t}$, $\TZ_{\text{stab}-\ell_1}$, $\TZ_{M}$,
	and $\TZ_{Ms}$. Reported are the median interval length, the empirical
	coverage, and the proportion of ``infinite'' intervals (the infinite
	length results from numerical inaccuracies when inverting a truncated
	normal CDF); the boxplots set infinite lengths to the maximum finite
	observed length. The first five components of
	$\beta$ are set to $\delta_{\text{low}}=0.29$ (top panels) or $\delta_{\text{high}}=0.66$
	(bottom panels) and the remaining components are 0. The lasso penalty
	is either set at the universal threshold value $\sqrt{\frac{2\log p}{n}}\approx 0.33$
	(right panels) or at a value approximating the behavior of 10-fold
	cross validation (0.19 and 0.15 respectively for the low and high
	signal cases). }
\label{fig:partial_toeplitz}

\end{figure}

\subsection{Performance of the proposed methods when assumptions are violated} \label{app:violated:assumptions}

We now investigate whether the good performance of the stable methods
are robust to various violations of our assumptions. In particular,
we consider:
\begin{itemize}
	\item Non-normal errors: we consider the effect of heavy tails ($t_{3}$
	distribution) and skewness (skew normal with skewness coefficient
	10) in our error distribution. See Figures \ref{fig:partial_t3} and \ref{fig:partial_sn10}.
	\item Unknown $\sigma^{2}$: Estimating the noise level when $p>n$ is a
	challenging problem. However, \citet{reid2016variance_estimate} shows
	that 
	\[
	\hat{\sigma}^{2}=\frac{1}{n-p_{\hat{\lambda}}}\sum_{i=1}^{n}\left(y_{i}-x_{i}^{\top}\hat\beta_{\hat{\lambda}}\right)^{2}
	\]
	performs well where $\hat\beta_{\hat{\lambda}}$ is the lasso solution,
	$p_{\hat{\lambda}}$ is the number of active variables, and $\hat{\lambda}$
	is chosen by 10-fold cross validation. We then plug-in $\hat{\sigma}^{2}$
	in place of the unknown noise level. See Figure \ref{fig:unknown_variance}.

	\item Data driven choice of $\lambda$: the $\TZ$ methods adjust for
	the effect of selection assuming a pre-specified $\lambda$; they
	do not account for the fact that $\lambda$ is often chosen based
	on the data, e.g., via cross-validation. See Figure \ref{fig:data_driven_lam}.
\end{itemize}
For all simulations, we use $n=100$ and $p=250$. We summarize the
results as follows:
\begin{itemize}
	\item All the truncated-Z intervals maintained nominal coverage close to 0.9 inspite
	of heavy tails and skewness. Qualitatively, the improvements of the
	stable methods over the $\TZ_{M}$ and $\TZ_{Ms}$ intervals are the
	same as in the case of normal errors.
	\item When $\sigma^{2}$ was unknown, all truncated-Z methods maintained
	coverage close to the nominal level by plugging in an estimate $\hat{\sigma}^{2}$
	(see Figure \ref{fig:unknown_variance}). However, the lengths of
	the $\TZ_{M}$ and $\TZ_{Ms}$ intervals are much longer than in the
	case of known $\sigma^{2}$ (cf. Figure \ref{fig:partial_n100p250});
	the lengths of the stable methods were less impacted.
	\item When $\lambda$ was chosen by 10-fold cross validation rather than
	set at a fixed value, the truncated-Z methods still maintained coverage
	close to the nominal level of 0.9 despite ignoring the cost of estimating
	$\lambda.$ However, just as in the case of unknown $\sigma^{2}$,
	the lengths of the $\TZ_{M}$ and $\TZ_{Ms}$ intervals greatly increased
	while the lengths of the stable intervals were relatively immune.
\end{itemize}
What these simulations suggest is that (1) the truncated-Z interval
seem to possess coverage properties that are robust to various model
violations and (2) the lengths of the stable intervals are less likely
to be negatively impacted by model violations than the lengths of
the $\TZ_{M}$ and $\TZ_{Ms}$ intervals.

\begin{figure}[hbtp]
	\begin{centering}
		\includegraphics[width=0.33\paperwidth]{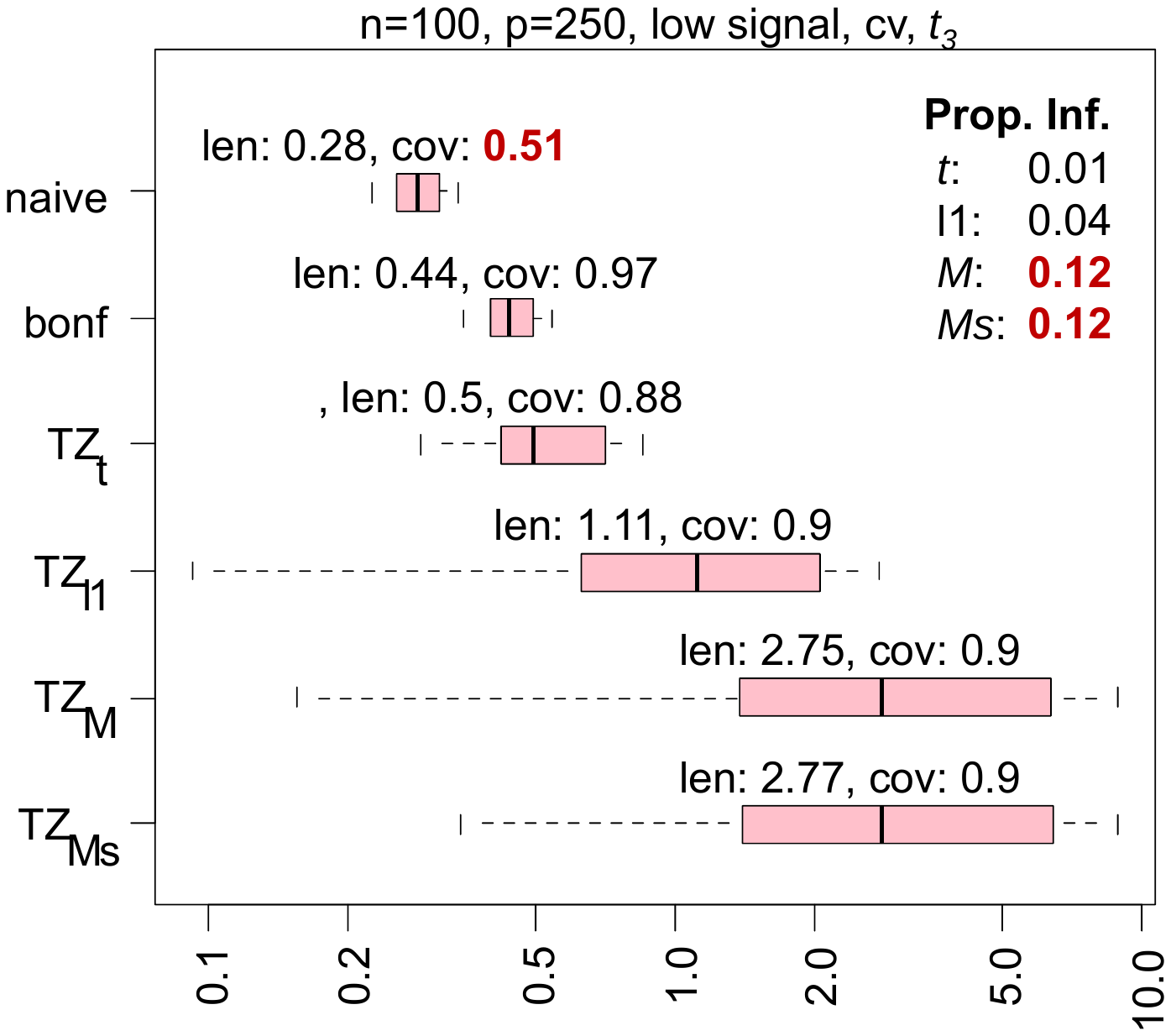}\includegraphics[width=0.33\paperwidth]{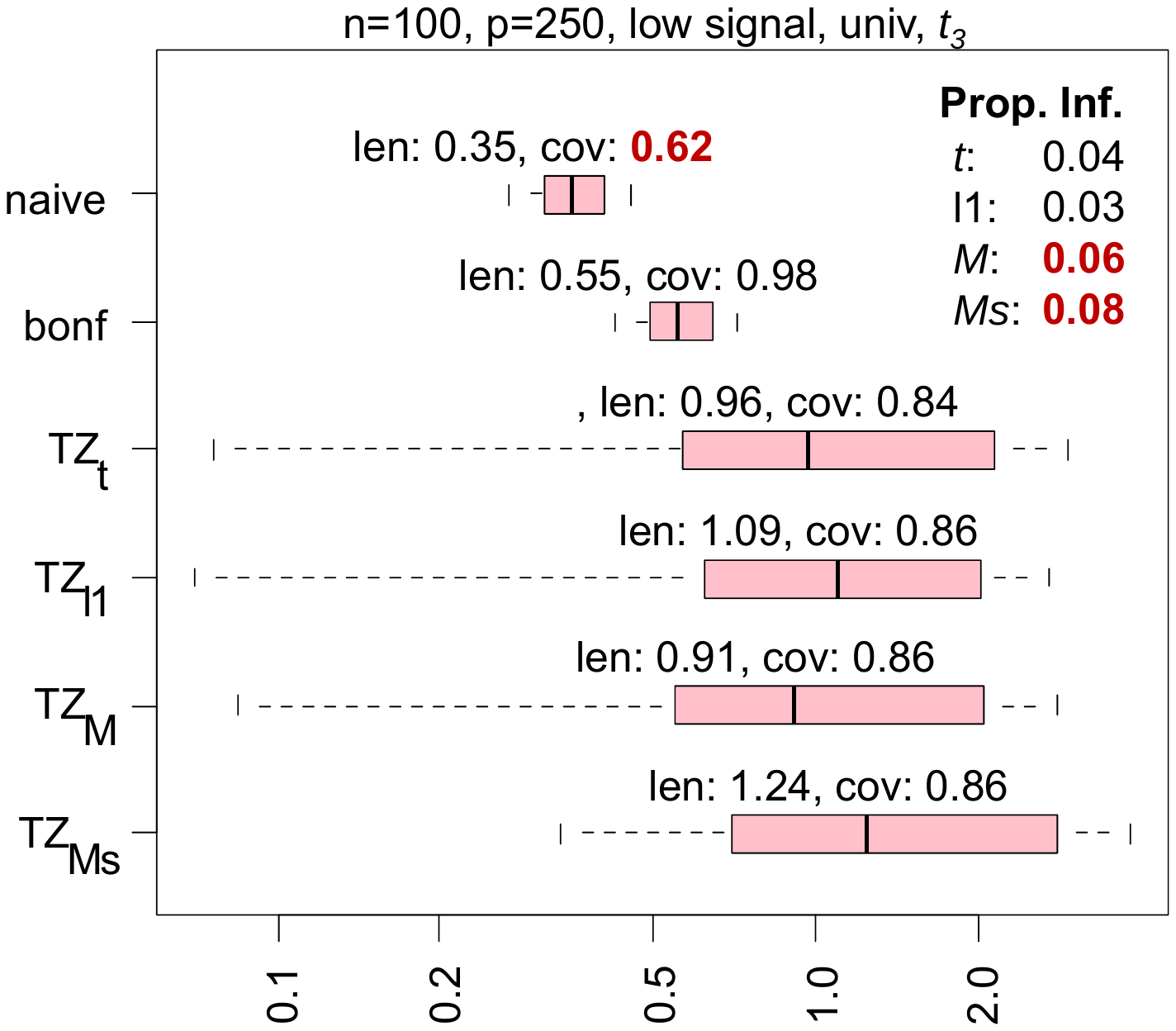}
		\par\end{centering}
	\begin{centering}
		\includegraphics[width=0.33\paperwidth]{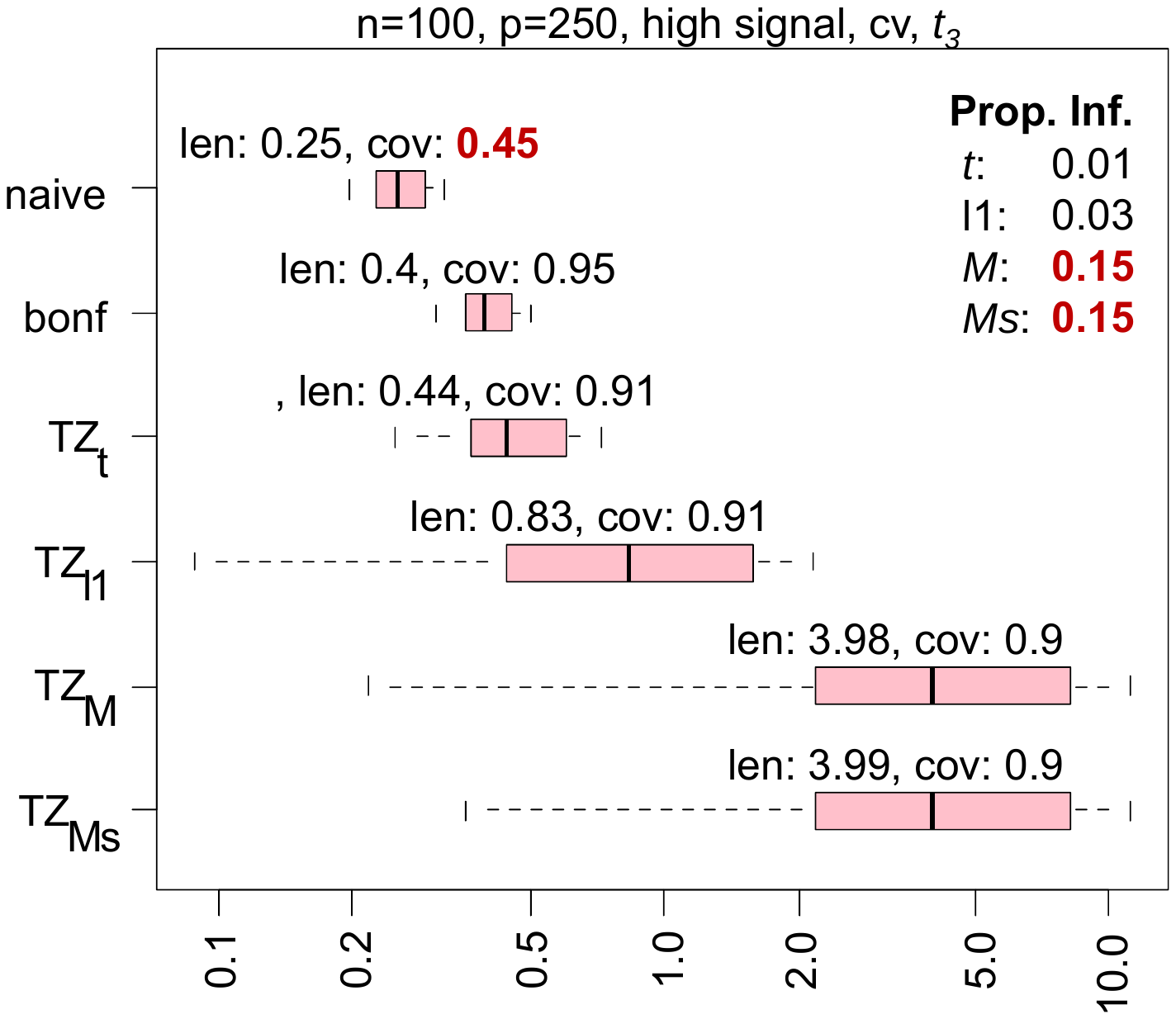}\includegraphics[width=0.33\paperwidth]{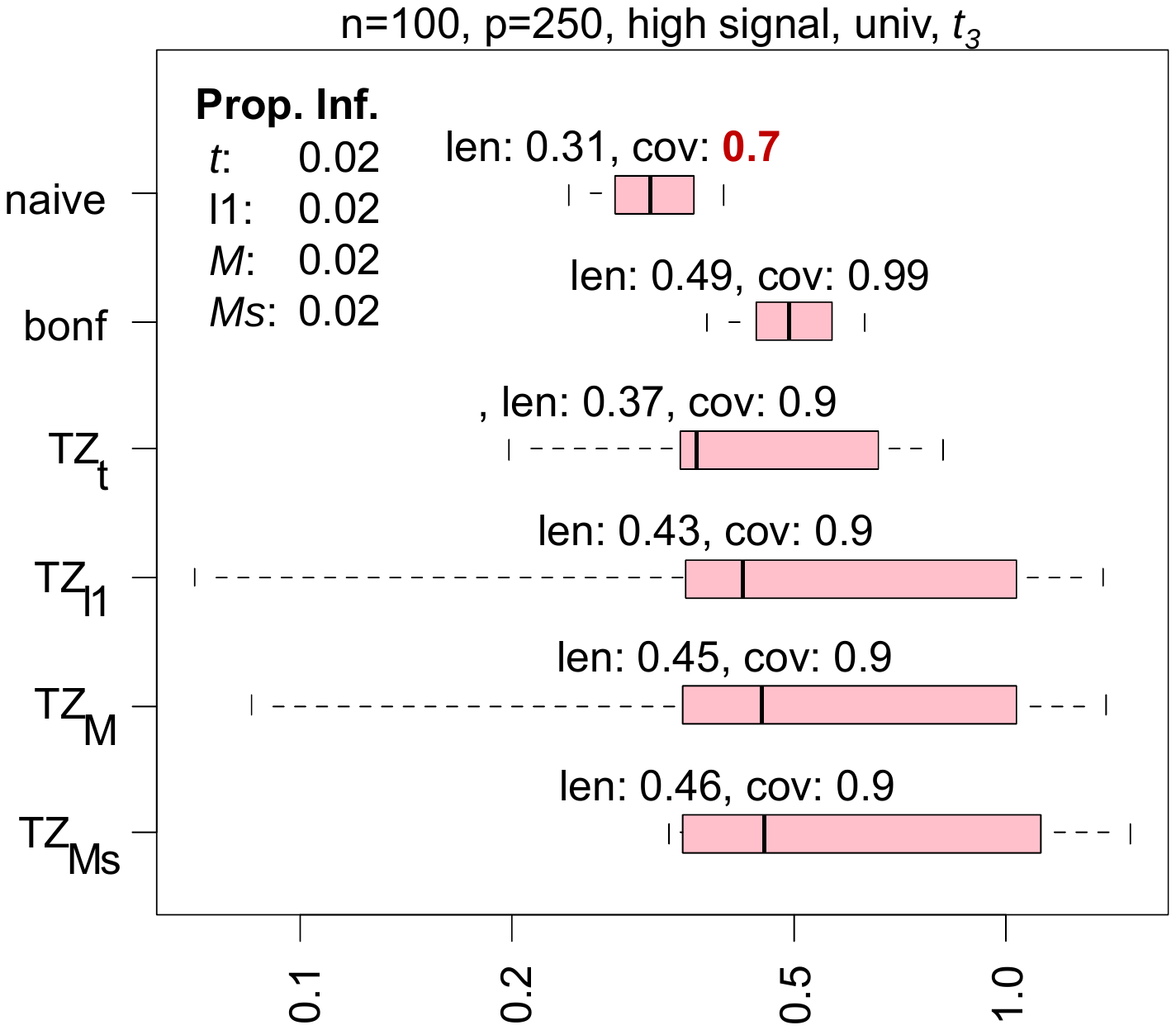}
		\par\end{centering}

	\caption{$n=100,p=250$. $t_{3}$ noise distribution. Boxplot of lengths of 90\% confidence intervals for ``partial''
		regression coefficients. Six interval methods are compared: naive
		(ignoring selection), Bonferroni adjusted, $\TZ_{\text{stab}-t}$, $\TZ_{\text{stab}-\ell_1}$, $\TZ_{M}$,
		and $\TZ_{Ms}$. Reported are the median interval length, the empirical
		coverage, and the proportion of ``infinite'' intervals (the infinite
		length results from numerical inaccuracies when inverting a truncated
		normal CDF); the boxplots set infinite lengths to the maximum finite
		observed length. The first five components of
		$\beta$ are set to $\delta_{\text{low}}=0.29$ (top panels) or $\delta_{\text{high}}=0.68$
		(bottom panels) and the remaining components are 0. The lasso penalty
		is either set at the universal threshold value $\sqrt{\frac{2\log p}{n}}\approx 0.33$
		(right panels) or at a value approximating the behavior of 10-fold
		cross validation (0.19 and 0.14 respectively for the low and high
		signal cases).}
		\label{fig:partial_t3}
	
\end{figure}

\begin{figure}[hbtp]
	\begin{centering}
		\includegraphics[width=0.33\paperwidth]{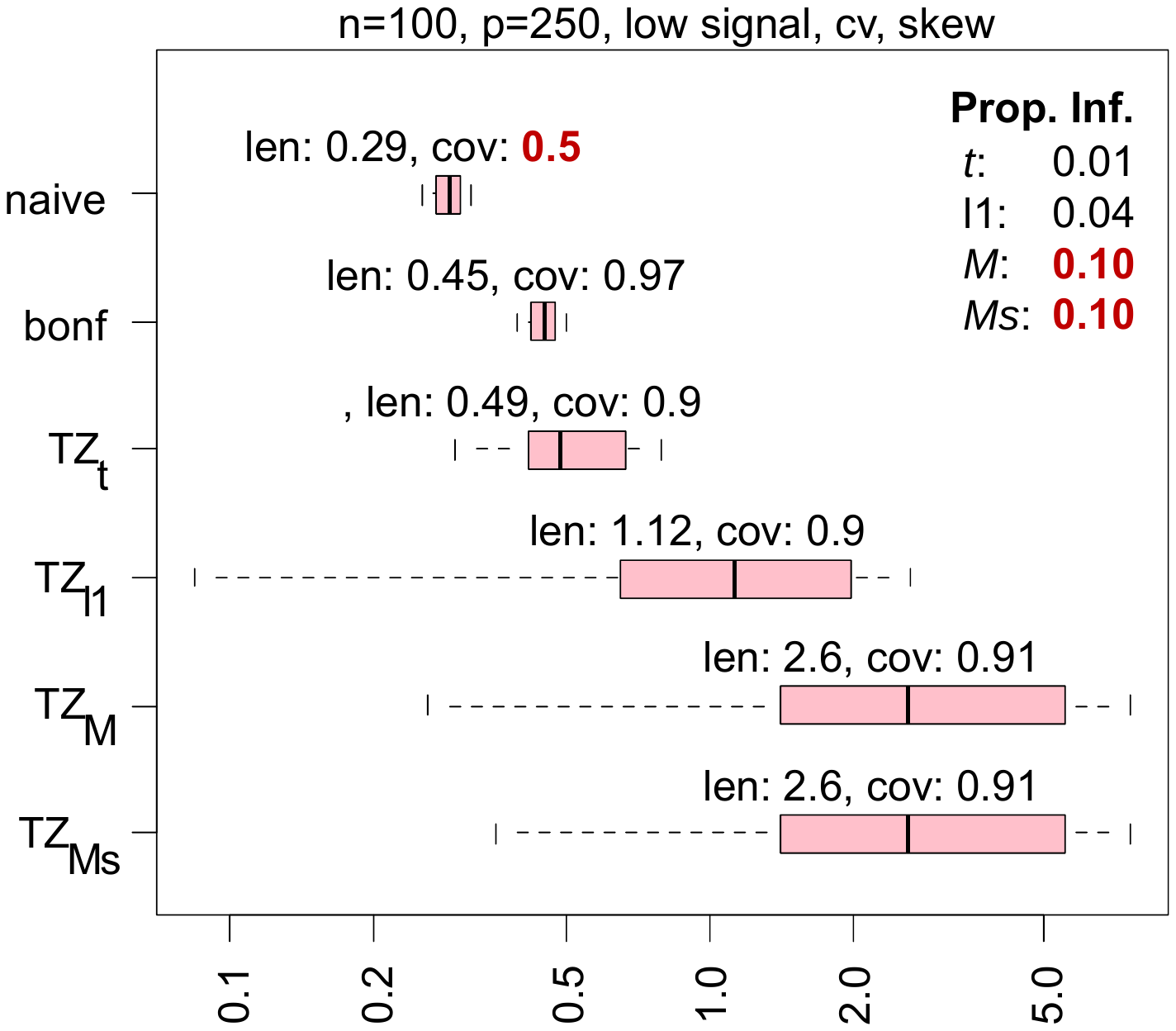}\includegraphics[width=0.33\paperwidth]{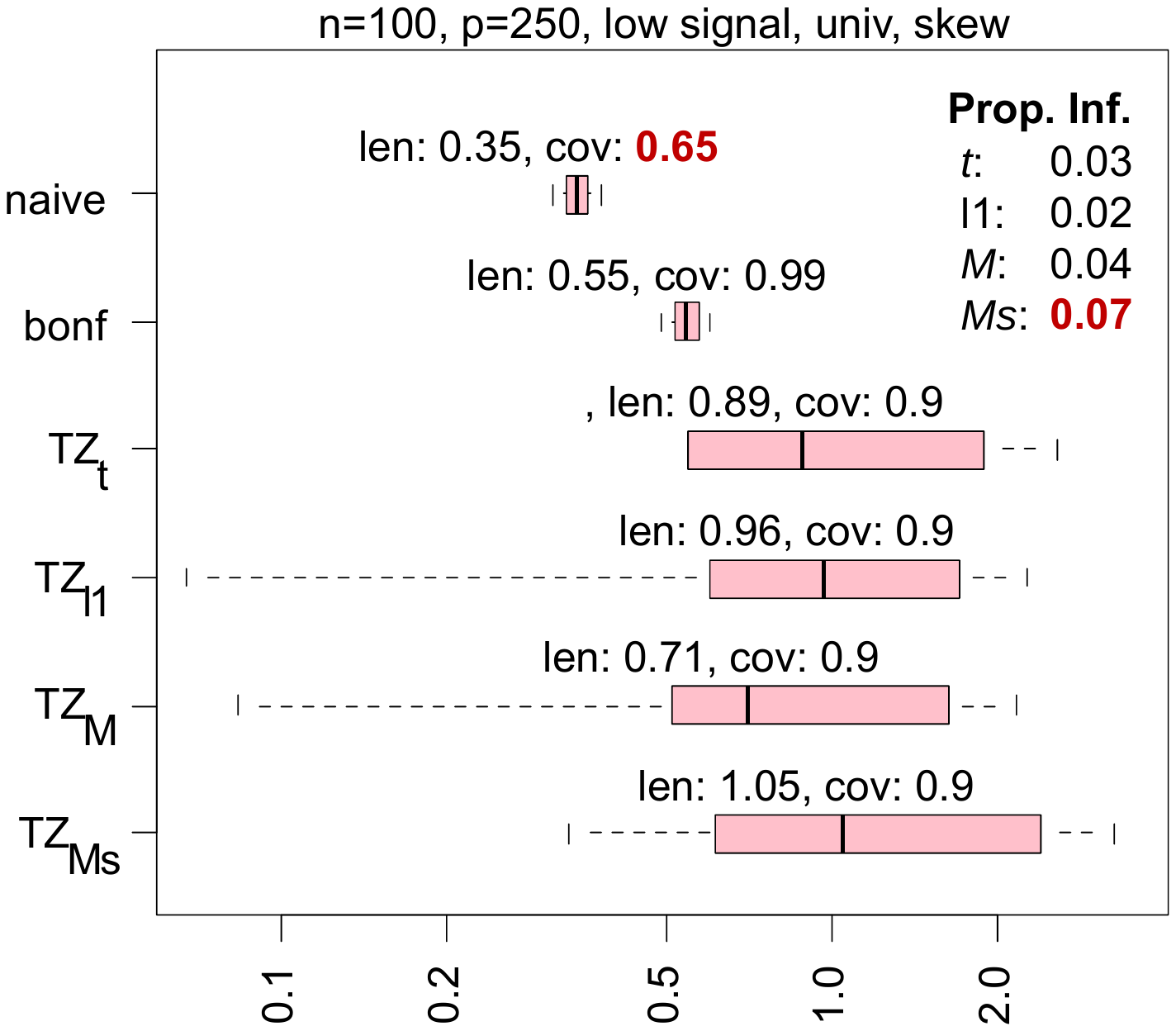}
		\par\end{centering}
	\begin{centering}
		\includegraphics[width=0.33\paperwidth]{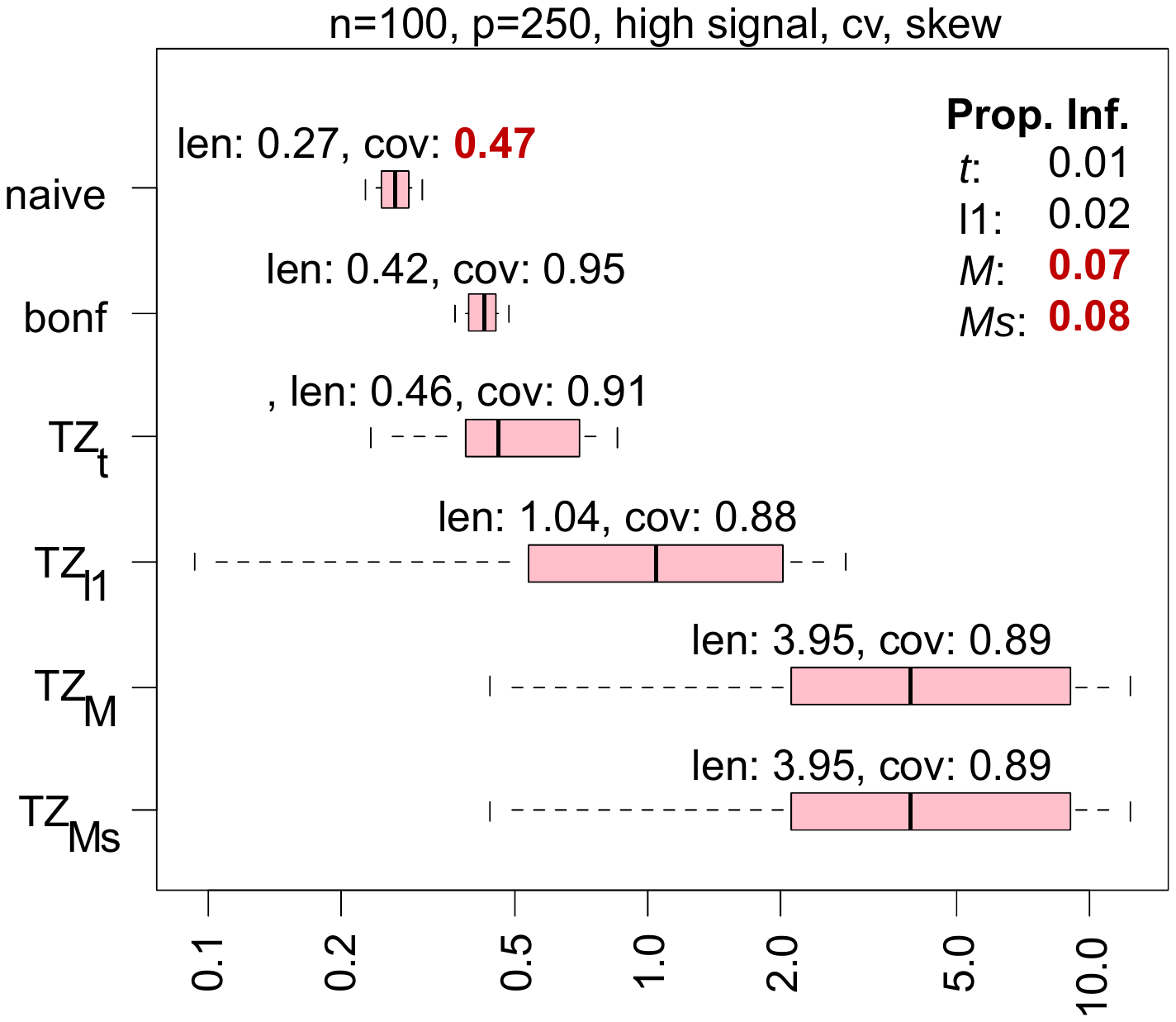}\includegraphics[width=0.33\paperwidth]{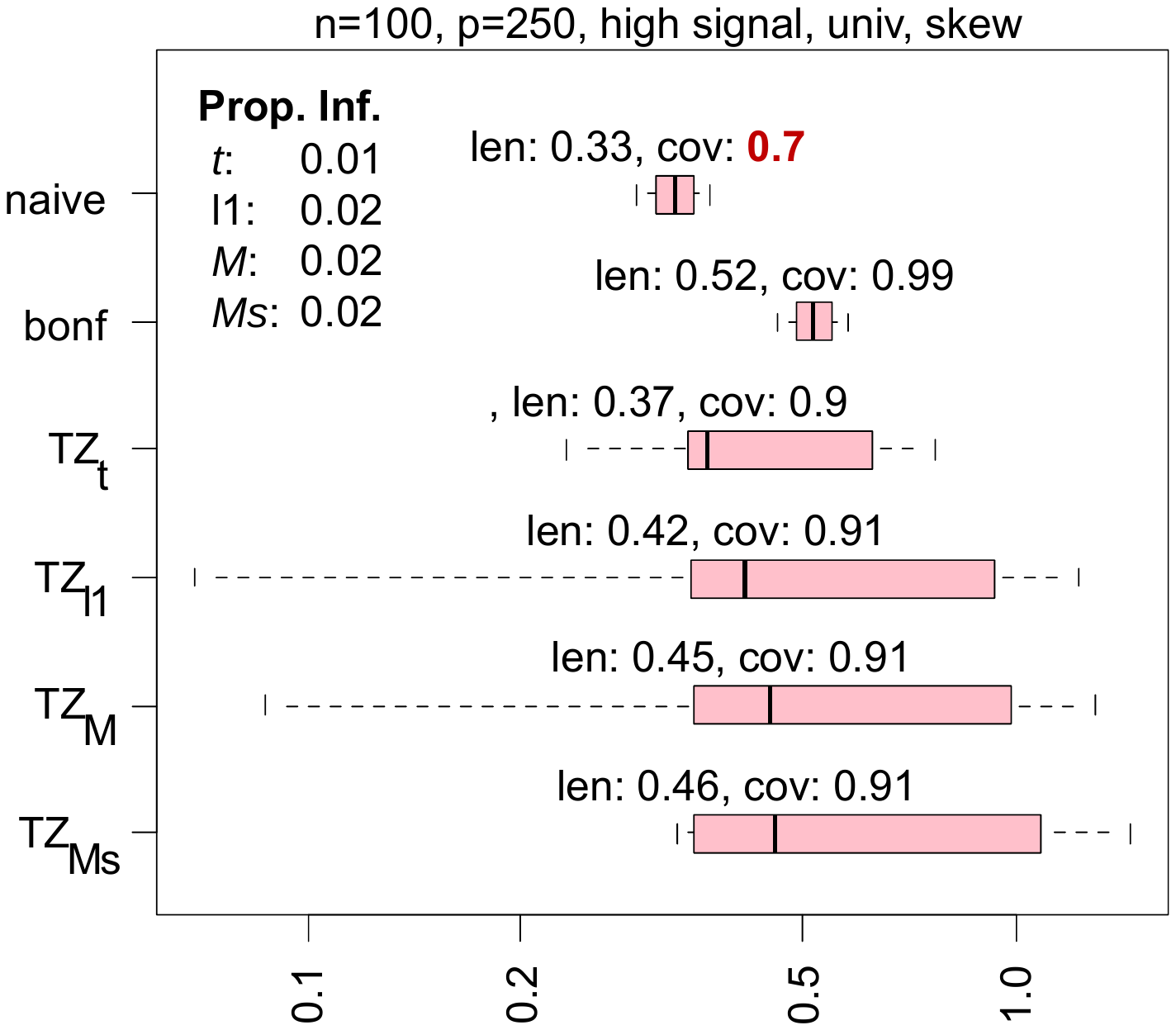}
		\par\end{centering}

	\caption{$n=100,p=250$. Skew normal noise distribution with skewness coefficient
		of 10. Boxplot of lengths of 90\% confidence intervals for ``partial''
		regression coefficients. Six interval methods are compared: naive
		(ignoring selection), Bonferroni adjusted, $\TZ_{\text{stab}-t}$, $\TZ_{\text{stab}-\ell_1}$, $\TZ_{M}$,
		and $\TZ_{Ms}$. Reported are the median interval length, the empirical
		coverage, and the proportion of ``infinite'' intervals (the infinite
		length results from numerical inaccuracies when inverting a truncated
		normal CDF); the boxplots set infinite lengths to the maximum finite
		observed length. The first five components of
		$\beta$ are set to $\delta_{\text{low}}=0.29$ (top panels) or $\delta_{\text{high}}=0.68$
		(bottom panels) and the remaining components are 0. The lasso penalty
		is either set at the universal threshold value $\sqrt{\frac{2\log p}{n}}\approx 0.33$
		(right panels) or at a value approximating the behavior of 10-fold
		cross validation (0.19 and 0.14 respectively for the low and high
		signal cases).}
	
		\label{fig:partial_sn10}
	
\end{figure}

\begin{figure}[hbtp]
	\begin{centering}
		\includegraphics[width=0.33\paperwidth]{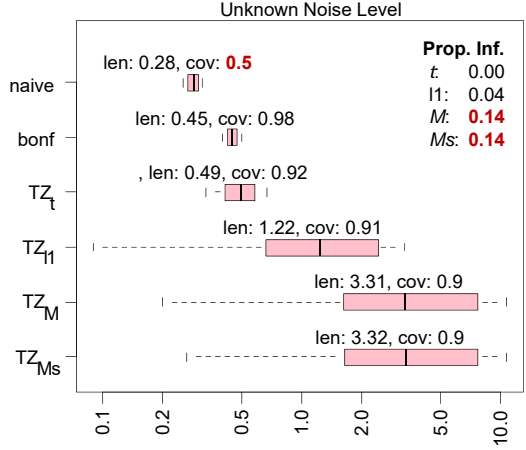}
		\par\end{centering}

	\caption{\em Boxplot of lengths of 90\% confidence intervals when $\sigma^{2}$
		is unknown and a plug-in estimate $\hat{\sigma}^{2}$ is used ($\hat{\sigma}^{2}$
		is the unbiased variance estimate for the linear model selected by
		the lasso). Reported are the median interval length, the empirical
		coverage, and the proportion of ``infinite'' intervals (the infinite
		length results from numerical inaccuracies when inverting a truncated
		normal CDF); the boxplots set infinite lengths to the maximum finite
		observed length.  Data have dimensions $n=100,p=250$. First five components
		of the true regression vector $\beta$ are non-zero and set at 0.29. $\lambda$
		was set to 0.19 to approximate what cross-validation would have chosen and $\lambda_{\text{high}}=\sqrt{\frac{2\log p}{n}}\approx0.33$.}
		\label{fig:unknown_variance}
	
\end{figure}

\begin{figure}[hbtp]
	\begin{centering}
		\includegraphics[width=0.33\paperwidth]{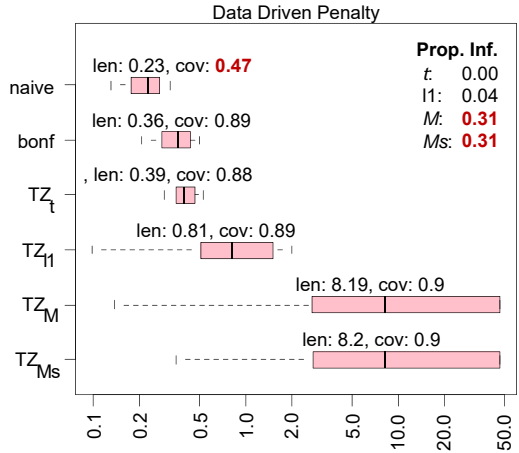}
		\par\end{centering}

	\caption{\em Boxplot of lengths of 90\% confidence intervals when $\lambda$
		is chosen by 10-fold cross-validation. Reported are the median interval length, the empirical
		coverage, and the proportion of ``infinite'' intervals (the infinite
		length results from numerical inaccuracies when inverting a truncated
		normal CDF); the boxplots set infinite lengths to the maximum finite
		observed length.  Data has dimensions $n=100,p=250$.
		First five components of the true regression vector $\beta$ are non-zero and
		set at 0.29. $\lambda_{\text{high}}=\sqrt{\frac{2\log p}{n}}\approx0.33$.}
		\label{fig:data_driven_lam}
	
\end{figure}

\newpage
\bibliographystyle{agsm}
\bibliography{tibs}
 \end{document}